\renewcommand{\p@enumi}{theenumi-}
\renewcommand{\@fnsymbol}[1]{\@alph{#1}}
\newcommand{\bbr}{\mathbb{R}}  
\newcommand{\ci}{\citeasnoun}
\newcommand{\fil}{\mathcal{F}}
\newcommand{\xcal}{\mathcal{X}}
\newcommand{\bcal}{\mathcal{B}}
\newcommand{\mcal}{\mathcal{M}}
\newcommand{\be}{\begin{equation}}
\newcommand{\ee}{\end{equation}}
\newcommand{\bew}{\begin{equation*}}
\newcommand{\eew}{\end{equation*}}
\newcommand{\var}{{\rm V@R}}
\newcommand{\avar}{{\rm AV@R}}
\newcommand{\revar}{{\rm RecV@R}}
\newcommand{\Lrevar}{{\rm LRecV@R}}
\newcommand{\reavar}{{\rm RecAV@R}}
\newcommand{\Lreavar}{{\rm LRecAV@R}}
\newcommand{\ba}{\begin{array}{ll}}
\newcommand{\bal}{\begin{array}{ll}}
\newcommand{\ea}{\end{array}}
\newcommand{\E}{\mathbb{E}}
\newcommand{\probp}{\mathbb{P}}
\newcommand{\probq}{\mathbb{Q}}
\newcommand{\R}{\mathbb{R}}
\newcommand{\N}{\mathbb{N}}
\newcommand{\cF}{{\mathcal{F}}}
\newcommand{\cB}{{\mathcal{B}}}
\newcommand{\cA}{\mathcal{A}}
\newcommand{\cX}{{\mathcal{X}}}
\newcommand{\VaR}{\mathop {\rm V@R}\nolimits}
\newcommand{\reVaR}{\mathop {\rm RecV@R}\nolimits}
\newcommand{\ES}{\mathop {\rm AV@R}\nolimits}
\newcommand{\rerho}{\mathop {\rm Rec\rho}\nolimits}
\newcommand{\RecAdj}{\mathop {\rm RecAdj}\nolimits}
\newcommand{\AggRecAdj}{\mathop {\rm AggRecAdj}\nolimits}
\newtheorem{thm}{Theorem}
\newtheorem{defi}[thm]{Definition}
\newtheorem{prop}[thm]{Proposition}
\newtheorem{rem}[thm]{Remark}
\newtheorem{remark}[thm]{Remark}
\newtheorem{ex}[thm]{Example}
\renewcommand{\p@enumi}{theenumi-}
\renewcommand{\@fnsymbol}[1]{\@alph{#1}}
\begin{document}


\title{
Capital Requirements and Claims Recovery:\\
A New Perspective on Solvency Regulation}

\author{
Cosimo Munari\footnote{Center for Finance and Insurance and Swiss Finance Institute, University of Zurich, Plattenstrasse 14, 8032 Zurich, Switzerland.
e-mail:   \href{mailto:cosimo.munari@bf.uzh.ch}{\tt cosimo.munari@bf.uzh.ch}.} \\[1.0ex] \textit{University of Zurich}
\and
Stefan Weber\footnote{House of Insurance \& Institute of Actuarial and Financial Mathematics, Leibniz Universit\"at Hannover, Welfengarten 1, 30167 Hannover, Germany.
e-mail:   \href{mailto:stefan.weber@insurance.uni-hannover.de}{\tt stefan.weber@insurance.uni-hannover.de}.} \\[1.0ex] \textit{Leibniz Universit{\"a}t Hannover} \and Lutz Wilhelmy\footnote{Group Risk Management,
Swiss Re Management Ltd, Mythenquai 50/60, 8022 Zurich, Switzerland. e-mail:  \href{mailto:lutz_wilhelmy@swissre.com}{\tt lutz\_wilhelmy@swissre.com}. The opinions expressed in this article are those of the author and do not necessarily coincide with those of his employer.} \\[1.0ex] \textit{Swiss Reinsurance Company} }
\date{\today\footnote{We would like to thank Kerstin Awiszus and Pablo Koch-Medina for helpful comments.}}

\maketitle

\begin{abstract}
Protection of creditors is a key objective of financial regulation. Where the protection needs are high, i.e., in banking and insurance, regulatory solvency requirements are an instrument to prevent that creditors incur losses on their claims. The current regulatory requirements based on Value at Risk and Average Value at Risk limit the probability of default of financial institutions, but they fail to control the size of recovery on creditors' claims in the case of default. We resolve this failure by developing a novel risk measure, Recovery Value at Risk. Our conceptual approach can flexibly be extended and allows the construction of general recovery risk measures for various risk management purposes. By design, these risk measures control recovery on creditors' claims and integrate the protection needs of creditors into the incentive structure of the management.

We provide detailed case studies and applications: We analyze how recovery risk measures react to the joint distributions of assets and liabilities on firms' balance sheets and compare the corresponding capital requirements with the current regulatory benchmarks based on Value at Risk and Average Value at Risk. We discuss how to calibrate recovery risk measures to historic regulatory standards. Finally, we show that recovery risk measures can be applied to performance-based management of business divisions of firms and that they allow for a tractable characterization of optimal tradeoffs between risk and return in the context of investment management.
\end{abstract}
\vspace{0.2cm}
\textbf{Keywords:} Risk Measures, Capital Requirements, Solvency Regulation, Recovery on Liabilities.


\section{Introduction}\label{sec:intro}

Banks and insurance companies are subject to a variety of regulatory constraints. A key objective of financial regulation is the appropriate protection of creditors, e.g., depositors, policyholders, and other counterparties. Corporate governance, reporting, and transparency are cornerstones of regulatory schemes, but equally important is capital regulation. Financial companies are required to respect solvency capital requirements that define a minimum for their current net asset value. Firms that fail to meet these requirements are subject to supervisory interventions.

The computation of solvency capital requirements is often based on some pre-specified notion of acceptable default risk. Banks and insurance companies must hold enough capital to meet their obligations in a sufficient number of future economic scenarios. Regulators typically focus on quantities such as the change of net asset value over a specific time horizon --- for example, one year --- and require that a suitable risk measure applied to such quantities is below the current level of available capital. The risk measure implicitly defines a notion of acceptable default risk. Different risk measures are applied in practice.

The standard example are solvency capital requirements defined in terms of Value at Risk. In this case, a company is adequately capitalized if its default probability is lower than a given threshold. The upcoming regulatory framework for the internationally active insurance groups uses a Value at Risk at the level $0.5\%$. In Europe, insurance companies and groups are subject to the same requirement under the Solvency II regime. Value at Risk has been strongly criticized due to its tail blindness and its lack of convexity -- not encouraging diversification.

An alternative to Value at Risk is the coherent risk measure Average Value at Risk, also called Conditional or Tail Value at Risk or Expected Shortfall. The market risk standards in Basel III, the international regulatory framework for banks, and the Swiss Solvency Test, the Swiss regulatory framework for insurance companies, are both based on Average Value at Risk with levels $2.5\%$ and $1\%$, respectively. In this case, a company or portfolio is deemed adequately capitalized, if it generates profits on average conditional on its tail distribution below the chosen level. Average Value at Risk is sensitive to the tail, and, being convex, it does not penalize diversification. It is also a tractable ingredient to optimization problems in the context of asset-liability-management and provides an instrument for decentralized risk management, e.g., limit systems within firms.

Despite all its merits, Average Value at Risk fails --- just as Value at Risk --- at one central task: It cannot control recovery in the case of default, i.e., the probabilities that creditors recover prespecified fractions of claims! This goal is, of course, important from a regulatory point of view. Recovering, say, 80\% instead of 0\% in the case of default makes a big difference to creditors such as depositors or policyholders.
This failure is apparent when we consider Value at Risk. By design, the corresponding solvency tests only limit the probability of insolvency and are incapable of imposing any stricter bound on the loss given default.

But the same failure is shared by Average Value at Risk. In spite of being sensitive to tail losses, Average Value at Risk still leaves too many degrees of freedom to control recovery. This is because the loss given default is captured by way of an average loss, which is too gross to exert a fine control on the recovery probability. An additional key deficiency is that all monetary risk measures in current solvency regulation focus on a residual quantity, i.e., the difference between assets and liabilities, that is owned by shareholders. This quantity is insufficient to adequately capture what will happen in the case of default.

The goal of this paper is to address the question:
\begin{center}
\emph{How should solvency tests be designed that control \\ the recovery on creditors' claims in the case of default?}
 \end{center}
\noindent Our contributions are the following:
\begin{enumerate}
\item[I.] We demonstrate that classical monetary risk measures such as Value at Risk and Average Value at Risk are unable to control recovery on creditors' claims in the case of default. In fact, we argue that, to capture this important aspect of tail risk, one has to abandon solvency tests based on the net asset value only and consider more articulated solvency tests based on both the net asset value and the firm's liabilities.
\item[II.] A novel risk measure, Recovery Value at Risk, is developed in the paper that can successfully resolve the failure of the standard risk measures employed in solvency regulation. We demonstrate that Recovery Value at Risk can serve as the basis of solvency tests. It admits an operational interpretation as a capital requirement rule. This new risk measure can be applied to both external and internal risk management and helps to quantify how far standard regulatory risk measures are from controlling liability recovery risk.
\item[III.] Our conceptual approach is flexible and leads to the construction of general recovery risk measures that include Recovery Average Value at Risk. This allows to integrate the ability to control the recovery on creditors' claims with other desirable properties such as convexity or subadditivity. Convexity facilitates applications to optimization problems such as portfolio choice under risk constraints. Subadditivity provides incentives for the diversification of positions and enables limit systems within firms for decentralized risk management.
\item[IV.] In order to better understand the behavior of recovery risk measures we illustrate how they react to changes of the joint distribution of the assets and the liabilities on the firm's balance sheet. We focus on two characteristics -- marginal distributions and stochastic dependence -- and compare risk measurements to the classical solvency benchmarks, i.e., Value at Risk and Average Value at Risk.
\item[V.] We discuss a possible strategy to calibrate recovery risk measures consistently with existing regulatory standards, following a common methodology chosen by regulators in the context of classical risk measures.
\item[VI.] We demonstrate how recovery risk measures can be applied to performance-based management of business divisions of firms. We define and investigate the appropriate notion of RoRaC-compatibility.
\item[VII.] Optimal tradeoffs between risk and return, as originally suggested by Harry Markowitz in the special case of the variance, may also be characterized for recovery risk measures. We show how efficient frontiers can be computed in this case.
\end{enumerate}

\noindent The paper is structured as follows. Section~\ref{sec:int_model} reviews solvency regulation based on Value at Risk and Average Value at Risk and reveals its failure to control recovery. Section~\ref{sec:revar} describes how to resolve this problem and develops the novel risk measure Recovery Value at Risk. Section~\ref{sec:convex} introduces a notion of general recovery risk measures that include the convex Recovery Average Value at Risk. Section \ref{sect: numerics} complements the conceptual innovations of this paper with detailed case studies and applications. These provide insights on how risk measures react to the shape of distributions and stochastic dependence. We also discuss calibration issues that arise when solvency regimes are modified. In order to show the wide spectrum of applicability of recovery risk measures in practice, we discuss risk allocation in the context of decentralized risk and performance management of firms and portfolio optimization under risk constraints. Proofs and further technical supplements are collected in the appendix.

\subsection*{Literature}

Solvency capital requirements impose constraints on the operations of businesses such as bank and insurance companies. Their purpose is to protect creditors from excessive downside risk. Capital requirements are an integral part of broader regulatory frameworks that allow companies to freely operate within pre-specified legal boundaries. Historically, regulatory deliberations like Basel I and Solvency I formulated simple rules. However, these could be exploited by regulatory arbitrage, see, e.g., \ci{BaselI}, \ci{SolvencyIb}, \ci{SolvencyIa}, and \ci{Jones}. Regulatory frameworks have been modified multiple times during the past decades, but -- as we will demonstrate in this paper -- serious problems remain.

A key issue is how to define the required solvency capital in an appropriate manner. Basel II, Solvency II, and the upcoming international Insurance Capital Standard compute solvency capital on the basis of Value at Risk, while Basel III and the Swiss Solvency Test use Average Value at Risk. The risk measure Value at Risk has been criticized in the context of solvency regulation since the 1990s, in particular due to its tail blindness and lack of convexity. Alternatives are provided by the axiomatic theory of risk measures --- initiated in a seminal paper by \ci{ADEH99} --- that systematically analyzes properties of risk measures, implications, and examples. The notion of coherent risk measure is introduced in \ci{ADEH99} and is generalized to the class of convex risk measures in \ci{frittelli2002} and \ci{FS02}. Key developments are discussed in the monograph \ci{FS} and the surveys \ci{FSW09} and \ci{FW15}. The coherence of Average Value at Risk is first established in \ci{acerbi2002coherence}. We refer to \ci{wang2020axiomatic} for a recent axiomatic characterization of Average Value at Risk.

Monetary risk measures are based on the notion of acceptability. While preferences rank distributions, random variables, or processes, acceptance sets divide this universe into acceptable objects and those that are not acceptable. Monetary risk measures are numerical representations of acceptance sets and parallel in this respect utility functionals that represent preferences. Within the theory of choice, risk measures provide a model of guard rails for the actions of financial firms. At the same time, they possess an operational interpretation as capital requirement rules, measuring the distance from acceptability in terms of cash or, more generally, eligible assets. We refer to \ci{FS} for a broad discussion on these aspects and to \ci{filipovic2008optimal}, \ci{adk2009}, \ci{farkas2014beyond}, \ci{FRW17}, and \ci{BFFM19} for specific applications to capital adequacy, hedging, risk sharing, and systemic risk. Our paper follows the same approach, i.e., taking the notion of acceptability as the starting point when formalizing recovery-based solvency tests. A different approach is pursued by the literature on acceptability indices that mainly focus on performance measurement, see \ci{aumann2008economic}, \ci{cherny2009}, \ci{foster2009operational}, \ci{brown2012aspirational}, \ci{drapeau2013risk}, \ci{gianin2013acceptability}, \ci{bielecki2014dynamic}.\footnote{Parametric families of Value at Risk were previously studied in this literature. But acceptability indices are applied to fixed univariate positions (modelling net asset values). In our case, a parametric family of Value at Risk or different risk measures are applied to bivariate positions (modelling net asset values jointly with liabilities). As a consequence, the formal construction of acceptability and their financial interpretation differs substantially from our approach.} A related concept is also the notion of Loss Value at Risk introduced by \ci{bignozzi2020risk}.

Monetary risk measures have natural applications to risk allocation and portfolio optimization problems. Risk allocation in the context of decentralized risk and performance management of firms has been widely investigated, e.g., in  \ci{tasche1999risk}, \ci{kalkbrener2005axiomatic}, \ci{tasche2007capital},
\ci{dhaene2012optimal}, \ci{bauer2013capital}, \ci{bauer2016marginal}, \ci{embrechts2018quantile}, \ci{weber_solvency2017}, \ci{HaKnWe20}, and \ci{bauer2020}. Portfolio optimization under risk constraints with a characterization of efficient frontiers goes back to the classical approach in \ci{Markowitz52}. In the context of Average Value at Risk, the solution to this problem was developed in \ci{RU00} and \ci{RU02}. The methodology relies on a representation of Average Value at Risk that is closely related to optimized certainty equivalents, which are discussed in \ci{BT87} and \ci{BT07}.  Applications to robust portfolio management are studied, e.g., in \ci{ZhuFu09}. We complement this previous work by discussing how efficient frontiers can be characterized for risk constraints in terms of recovery-based risk measures.

To the best of our knowledge, this paper is the first to introduce and study solvency capital requirements that are designed to control the recovery on creditors' claims. The literature on recovery rates has historically focused on explaining the determinants of recovery rates in specific settings, e.g., for corporate and government bonds or bank loans. We refer to \ci{duffie1999modeling}, \ci{altman2005link}, and \ci{guo2009modeling} for a presentation of different models for recovery rates and to \ci{khieu2012determinants}, \ci{jankowitsch2014determinants}, and \ci{ivashina2016ownership} for some recent empirical investigations.


\section{Solvency Regulation and Claims Recovery}
\label{sec:int_model}

The protection of creditors is a key goal of capital regulation. To achieve this goal, financial institutions are required to hold a certain amount of capital as a buffer against future losses. The regulatory capital is chosen such that it ensures an acceptable level of safety against the risk of default. The standard rules used in practice to compute solvency capital requirements are based on risk measures such as Value at Risk or Average Value at Risk. We demonstrate that these rules are insufficient to provide a satisfactory control on the recovery on creditors' claims and suggest an alternative approach that achieves this goal.

\subsection{Risk-Sensitive Solvency Regimes}

Most existing regulatory frameworks share a ``balance sheet approach'' to determine capital requirements. The random evolution of assets and liabilities of a financial institution is captured at time horizons specified by regulators, typically one year.\footnote{A balance sheet approach requires an internal model of the stochastic evolution of the balance sheet of the financial firm or insurance company that is subject to capital regulation. Many firms do not have sufficient capacities and expertise to implement and analyze such models. For this reason, in practice, simplifications are admissible which may substantially deviate from the original objectives of the regulator. Examples are the standard approach in the Insurance Capital Standard or the standard formula in Solvency II.} The following table displays a stylized balance sheet of a company at a generic time $t$:

\begin{center}
\begin{tabular}{|c|c|}
\hline
\bf Assets & \bf Liabilities \\
\hline\hline
\multirow{2}{*}{$A_t$}&$L_t$\\
\cline{2-2}
 &$E_t=A_t-L_t$\\
\hline
\end{tabular}
\end{center}

\noindent The quantity $E_t$ represents the net asset value of the firm and can be either positive or negative depending on whether the asset value $A_t$ is larger than the liability value $L_t$ or not. In the typical setting of a one-year horizon  we have two reference dates, which are denoted by $t=0$ (today) and $t=1$ (end of the year). The quantities at time $t=0$ are known whereas the quantities at time $t=1$ are random variables on a given probability space $(\Omega,\cF,\probp)$. In a risk-sensitive solvency framework, a company is deemed adequately capitalized if its {\em available capital} $E_0$ is larger than a suitable {\em solvency capital requirement} that reflects the inherent risk in the evolution of the balance sheet. This is typically captured by applying a suitable risk measure $\rho$ to the net asset value variation $\Delta E_1:=E_1-E_0$.\footnote{In practice, solvency capital requirements may only refer to ``unexpected'' losses. In this case, $E_0$ is replaced by the expected value of (the suitably discounted) $E_1$. In this respect, the European regulatory framework for insurance companies Solvency II is contradictory in itself. We refer to \ci{HaKnWe20} for a detailed discussion.} The corresponding {\em solvency test} is formally defined by:\footnote{For simplicity, we assume in this paper that interest rates over the one-year horizon are approximately zero. For adjustments on the definition of the solvency tests if interest rates are not zero see \ci{christiansen2014}.}
\begin{equation}
\label{solvency test}
\rho(\Delta E_1) \; \leq  \; E_0.
\end{equation}
If $\rho$ is a monetary risk measure such as Value at Risk or Average Value at Risk, condition \eqref{solvency test} can be equivalently expressed in terms of the future net asset value $E_1$ only as
\begin{equation}
\label{solvency test 2}
\rho(E_1) \; \leq  \; 0.
\end{equation}

\noindent The standard risk measures are {\em Value at Risk} ($\VaR$) and {\em Average Value at Risk} ($\avar$) at some pre-specified level $\alpha\in(0,1)$:
\[
\VaR_\alpha(X):=\inf\{x\in\R \,; \ \probp(X+x<0)\leq\alpha\}, \ \ \ \ \avar_\alpha(X):=\frac{1}{\alpha}\int_0^\alpha\VaR_\beta(X)d\beta
\]
where $X$ is some random variable.\footnote{Throughout the paper we apply the following sign convention: Positive values of $X$ represents a profit or a positive balance, negative values of $X$ represent a loss or a negative balance.} In particular, the risk measure $\VaR$ corresponds to a quantile of the underlying probability distribution.

In insurance regulation, $\VaR$ at level $\alpha=0.5\%$ is used in the Insurance Capital Standard and in Solvency II while $\avar$ at level $\alpha=1\%$ is adopted in the Swiss Solvency Test. In banking regulation, $\avar$ with level $\alpha=2.5\%$ has recently become the reference risk measure in Basel III, where it replaces $\VaR$ at level $\alpha=1\%$. In a $\VaR$ setting, the solvency test~\eqref{solvency test 2} can equivalently be reformulated as
\begin{equation}
\label{eq:solvency test VaR}
\VaR_\alpha (E_1) \leq 0 \ \iff \ \probp(E_1<0)\leq\alpha \ \iff \ \probp(E_1\geq0)\geq1-\alpha.
\end{equation}
This shows that a company is adequately capitalized under $\VaR$ if it is able to maintain its default probability below a certain level. Similarly, in an $\avar$ setting, we can equivalently rewrite the solvency test~\eqref{solvency test 2} as\footnote{The second equivalence holds provided the cumulative distribution function of $E_1$ is, e.g., continuous.}
\begin{equation}
\label{eq:solvency test AVaR}
\avar_\alpha (E_1) \leq 0 \ \iff \ \int_0^\alpha\VaR_\beta(E_1)d\beta\leq0 \ \iff \ \E(E_1\vert E_1\leq-\VaR_\alpha(E_1))\geq0.
\end{equation}
Hence, a company is adequately capitalized under $\avar$ if on the lower tail beyond the $\alpha$-quantile it is solvent on average. In this case, we automatically have $\probp(E_1<0)\leq\alpha$. In other words, if we fix the same probability level $\alpha$, capital adequacy under $\avar$ is more conservative than capital adequacy under $\VaR$.

It is often stressed that --- in contrast to $\var$ --- $\avar$ is a tail-sensitive risk measure and, hence, captures tail risk in a more comprehensive way. In fact, $\var$ is completely blind to the tail of the reference loss distribution beyond a certain quantile level. While this is correct, one should bear in mind that $\avar$ captures tail risk in a {\em specific} way, namely via expected losses in the tail, thereby leaving many degrees of freedom to the behavior of the tail distribution.

\subsection{Claims Recovery Under $\VaR$ and $\avar$}

The point of departure of our contribution is to highlight that risk measures such as $\VaR$ and $\avar$ fail to provide a direct control on a fundamental aspect of tail risk, namely the recovery on creditors' claims. The basic problem is that both risk measures are functions of the net asset value $E_1$ only. The net asset value summarizes the financial resources of the equity holders without any reference to leverage, i.e., without imposing any direct constraints on the liabilities $L_1$. However, controlling the recovery on claims requires to deal explicitly with $L_1$.

This failure is documented by the next proposition. To motivate it, observe that, for given $\alpha\in(0,1)$, the solvency test based on $\var$ as described in \eqref{eq:solvency test VaR} guarantees that the probability of solvency is at least $1-\alpha$. The same is true for the solvency test based on $\avar$ at the same level because $\avar$ dominates $\var$. The question we ask is if and how the probability $\probp(A_1\geq\lambda L_1)$ of recovering at least a fraction $\lambda\in(0,1)$ of claims can be made higher than the probability of solvency. For V@R and AV@R the answer is negative: The lower bound $1-\alpha$ is sharp for any target fraction of claims payments. In other words, both $\VaR$ and $\avar$ impose the same weak lower bound on recovery probabilities, and this bound cannot be improved upon, even if the target recovery fraction is arbitrarily small.

\begin{prop}
\label{prop:no recovery control}
We denote by $\cX$ a set of positive random variables on some nonatomic probability space $(\Omega,\cF,\probp)$. We assume that $\cX$ contains all positive discrete random variables. For all $\alpha\in(0,1)$ and $\lambda\in(0,1)$ we have
\begin{eqnarray*}
1-\alpha & = & \inf\{\probp(A\geq\lambda L) \,; \ A,L\in\cX, \ \avar_\alpha(A-L)\leq0\} \\
& = &  \inf\{\probp(A\geq\lambda L) \,; \ A,L\in\cX, \ \VaR_\alpha(A-L)\leq0\}.
\end{eqnarray*}
\end{prop}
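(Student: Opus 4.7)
I would split the argument into a uniform lower bound and a matching upper bound realised by explicit AV@R--acceptable pairs (which, being AV@R--acceptable, are automatically V@R--acceptable since $\avar_\alpha \geq \VaR_\alpha$). The bulk of the work lies in the upper bound for the $\avar$ infimum; the lower bound is a short deterministic argument.

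\medskip
\textbf{Lower bound.} Assume $A, L \in \cX$ with $\VaR_\alpha(A-L) \leq 0$. By right-continuity of $x \mapsto \probp(A-L+x<0)$ and the definition of $\VaR$, the infimum is attained, so
$\probp(A - L < 0) \leq \probp(A-L+\VaR_\alpha(A-L) < 0) \leq \alpha$,
i.e.\ $\probp(A \geq L) \geq 1-\alpha$. Since $L \geq 0$ and $\lambda \in (0,1)$ give $\lambda L \leq L$, we have $\{A \geq L\} \subseteq \{A \geq \lambda L\}$, hence $\probp(A \geq \lambda L) \geq 1-\alpha$. For $\avar$ it suffices to observe that $\avar_\alpha \geq \VaR_\alpha$, so $\avar_\alpha(A-L)\le 0$ entails $\VaR_\alpha(A-L)\le 0$.

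\medskip
\textbf{Upper bound.} Fix $\epsilon \in (0,\alpha)$. Using nonatomicity, partition $\Omega$ into events $B_\epsilon, G_\epsilon^1, G_\epsilon^2$ of probabilities $\alpha-\epsilon$, $\epsilon$, $1-\alpha$. Set $L \equiv 1$ and define the discrete positive variable $A_\epsilon$ by $A_\epsilon = \delta$ on $B_\epsilon$ (with $0<\delta<\lambda$), $A_\epsilon = 1+c_\epsilon$ on $G_\epsilon^1$, and $A_\epsilon = 1+K$ on $G_\epsilon^2$, with parameters $c_\epsilon$ and $K$ chosen so that $0 < c_\epsilon < K$. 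Both $A_\epsilon$ and $L$ are positive and discrete, hence in $\cX$. On $B_\epsilon$ we have $A_\epsilon = \delta < \lambda = \lambda L$; on $G_\epsilon^1 \cup G_\epsilon^2$ we have $A_\epsilon \geq 1 > \lambda L$. Therefore
\[
\probp(A_\epsilon \geq \lambda L) = 1-\alpha + \epsilon.
\]
The quantile function of $X_\epsilon := A_\epsilon - L$ takes the value $\delta - 1$ on $(0, \alpha-\epsilon]$ and $c_\epsilon$ on $(\alpha-\epsilon, \alpha]$, which gives
\[
\avar_\alpha(X_\epsilon) = \frac{1}{\alpha}\bigl[(\alpha-\epsilon)(1-\delta) - \epsilon c_\epsilon\bigr].
\]
Choosing $c_\epsilon \geq (\alpha-\epsilon)(1-\delta)/\epsilon$ yields $\avar_\alpha(X_\epsilon) \leq 0$. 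Letting $\epsilon \downarrow 0$ shows that both infima are $\leq 1 - \alpha$, matching the lower bound.

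\medskip
\textbf{Main obstacle.} The subtle point is the $\avar$ upper bound: a pair that merely fails recovery on an event of probability $\alpha$ will, in general, have $\avar_\alpha > 0$, because a tail consisting solely of losses cannot average to a nonpositive number. The construction therefore has to embed a small sliver of \emph{large} profits whose $A-L$ value is small enough to be captured by the $\alpha$-tail yet large enough to offset the pure losses on $B_\epsilon$. The tradeoff $c_\epsilon \to \infty$ as $\epsilon \to 0$ is what forces the infimum to be only approached, not attained, for $\avar$, while the analogous V@R construction attains $1 - \alpha$ exactly. Nonatomicity of $(\Omega, \cF, \probp)$ and closure of $\cX$ under discrete positive variables are exactly what is needed to realise this construction.
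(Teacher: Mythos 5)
Your proof is correct and follows essentially the same route as the paper: a lower bound $1-\alpha$ from the fact that $\VaR_\alpha(A-L)\leq 0$ forces $\probp(A\geq L)\geq 1-\alpha$ (combined with $\avar_\alpha\geq\VaR_\alpha$ and $\{A\geq L\}\subseteq\{A\geq\lambda L\}$), and a matching upper bound via nonatomicity-based discrete pairs that are $\avar$-acceptable because a default event of probability just below $\alpha$ is offset by a large gain inside the $\alpha$-tail. The only difference is cosmetic: the paper takes $L_p=1_{F_p}$ and $A_p=\frac{p}{\alpha-p}1_{F_p^c}$ (a two-point $A-L$ with $\avar_\alpha$ exactly $0$), whereas you take $L\equiv1$ and a three-valued $A_\epsilon$ with an explicit compensating sliver of probability $\epsilon$ -- the mechanism and the use of the hypotheses are the same.
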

\begin{proof}
See Section \ref{proof:no recovery control}.
\end{proof}

The preceding proposition shows that, from the perspective of controlling the probability of claims recovery (beyond the probability of solvency), there is little difference between $\VaR$ and $\avar$. This lack of control is not desirable for a financial regulator, as companies that seek to boost the payoff to shareholders are not prevented from taking on excessive risk thereby significantly reducing recovery payments to creditors in the case of their own default. This is illustrated by the following stylized but insightful example.

\begin{ex}\label{ex:interests}
We consider a scenario space $\Omega$ consisting of two states, $g$ (the good state) and $b$ (the bad state). The probability of the bad state is $\probp(b)=\frac{\alpha}{2}$ with $\alpha$ close to zero, say $\alpha=0.5\%$ or $\alpha=1\%$. A financial company sells a contract to a customer that results in the following liability schedule for the company:
$$
L_1 (\omega)=
\begin{cases}
1 & \mbox{if} \ \omega = g,\\
100 & \mbox{if} \ \omega = b.
\end{cases}
$$
The company can manage its assets by engaging in a stylized financial contract with zero initial cost transferring dollars from the good state to the bad state. More specifically, we assume that the company can choose one of the following asset profiles at time 1:
 \[
A^k_1 (\omega)=
\begin{cases}
101-k & \mbox{if} \ \omega = g,\\
k & \mbox{if} \ \omega = b,
\end{cases} \quad \mbox{with}\quad k \in [0,100].
\]
Hedging its liabilities completely would require the company to choose $k = 100$. However, since the contract transforms dollars in the high probability scenario into dollars in the low probability scenario, this is not attractive from the point of view of the company. Indeed, for any $k \in [0,100]$, the company's net asset value is given by
\[
E^k_1 (\omega)=
\begin{cases}
100-k & \mbox{if} \ \omega = g,\\
k-100 & \mbox{if} \ \omega = b.
\end{cases}
\]
Due to limited liability, the corresponding shareholder value is
$$
\max\{E^k_1 (\omega), 0 \}=
\begin{cases}
100-k & \mbox{if} \ \omega = g,\\
0 & \mbox{if} \ \omega = b.
\end{cases}
$$
Hence, the choice $k = 0$ is optimal from the perspective of shareholders. We show that this choice is possible under capital requirements based on $\VaR$ and $\avar$. In fact, the company is adequately capitalized under $\VaR$ and $\avar$ at level $\alpha$ regardless of the size of $k$. Indeed,
\[
\var_\alpha(E^k_1)= k-100 \leq 0 , \ \ \ \avar_\alpha(E^k_1)=\frac{1}{\alpha}\left(\frac{\alpha}{2}(100-k)+\frac{\alpha}{2}(k-100)\right) = 0.
\]
At the same time, this choice is detrimental for the creditors because it leads to no recovery on the expected claims payment. Indeed, in the default state $b$, the creditor's recovery on their claims is equal to $\frac{k}{100}$ and may take any value between 0 and 1, depending on the level of $k$. For the optimal choice from the perspective of shareholders, namely $k=0$, the recovery fraction in state $b$ is minimal, in fact zero.
\end{ex}

The example shows that pursuing the interests of shareholders might trigger investment decisions with adverse effects on creditors. A solvency framework based on $\var$ and $\avar$ fails to disincentivize firms from taking investment decisions that increase shareholders' value at the price of jeopardizing their ability to cover liabilities. In the next section we show how to mitigate this deficiency of current regulatory frameworks.


\section{Recovery Value at Risk}
\label{sec:revar}

In this section we introduce a solvency test that controls the loss given default by imposing suitable bounds on the recovery on creditors' claims. The test is based on a new risk measure called {\em Recovery Value at Risk}. The main difference with respect to standard risk measures like $\var$ and $\avar$ is that Recovery Value at Risk is not a function of the net asset value $E_1$ only but also of the liabilities $L_1$. As shown in the previous section, this extension is necessary if we want to explicitly control the recovery on claims. Throughout the section we continue to use the balance sheet notation introduced in Section~\ref{sec:int_model}.

\subsection{Introducing $\revar$}\label{sec:introrevar}

Creditors receive at least a recovery fraction $\lambda\in[0,1]$ on their claims payments if\footnote{For simplicity, we neglect bankruptcy costs (administrative expenses, legal fees, etc.) which can substantially impair the size of recovery. Regulators may improve the efficiency of bankruptcy procedures and thereby decrease their costs, e.g., by requiring \emph{last wills of financial institutions}.}
\begin{equation}\label{eq:recovery_ref}
A_1\geq\lambda L_1 \ \iff \ E_1 + (1-\lambda) L_1 \geq 0.
\end{equation}
In this event, assets may not be sufficient to meet all obligations, but they cover at least a fraction $\lambda$ of liabilities. We control recovery by imposing lower bounds on the recovery probabilities
$$
\probp(A_1\geq\lambda L_1)
$$
for all recovery fractions $\lambda\in [0,1]$.\footnote{We can rewrite the event of recovering a fraction of $\lambda$ in different ways:
$$\{A_1\geq \lambda L_1 \}  = \{ A_1-\lambda L_1\geq0 \} = \left\{\frac{A_1}{L_1}\geq\lambda\right\}$$
where the last equality holds only if $L_1>0$. In this sense, our approach can also be interpreted in terms of target probabilities for future leverage ratios. Focusing on the modified net asset value $A_1-\lambda L_1$ instead of $\frac{A_1}{L_1}$ is more aligned with current regulation and has the mathematical advantage to avoid divisions by zero.} For this purpose, we introduce the following risk measure.
\begin{defi}\label{def:revar}
We denote by $L^0$ the vector space of all random variables on some probability space $(\Omega, \fil,\probp)$. Let $\gamma:[0,1]\to(0,1)$ be an increasing function. The \emph{Recovery Value at Risk}
$$
\revar_\gamma:L^0\times L^0\to\bbr\cup\{\infty \}
$$
with \emph{level function} $\gamma$ is defined by
\begin{equation}\label{eq:revar}
\revar_\gamma (X,Y) := \sup_{\lambda \in [0,1]}\var_{\gamma(\lambda)}(X+(1-\lambda)Y).
\end{equation}
\end{defi}

\noindent If the random variables $X$ and $Y$ in Definition \ref{def:revar} are interpreted, respectively, as the net asset value $E_1$ and liabilities $L_1$ in a company's balance sheet,\footnote{To allow for different applications, we mathematically define recovery risk measures over generic pairs $(X,Y)$ without any restriction on the sign of $X$ and $Y$ and without any specific assumptions about their relationship and interpretation. In the relevant applications, we have $X=E_1$ or $X= \Delta E_1$ and $Y=L_1$.} the risk measure $\revar$ can be used to formulate a solvency test of the form \eqref{solvency test}. As shown in Remark~\ref{rem:revar_solvency test}, the condition
\begin{equation}\label{revar_solvency test}
\revar_\gamma(\Delta E_1,L_1)\leq E_0
\end{equation}
is equivalent to requiring that the recovery probabilities satisfy
\begin{equation}\label{revar_solvency test 2}
\probp(A_1<\lambda L_1) \leq \gamma(\lambda) \ \iff \ \probp(A_1\geq\lambda L_1) \geq 1-\gamma(\lambda)
\end{equation}
for all recovery fractions $\lambda \in [0,1]$. This guarantees the desired control on the loss given default. In particular, the solvency test \eqref{revar_solvency test} can be seen as a refinement of the standard solvency test \eqref{eq:solvency test VaR} based on $\VaR$ where the probability bound $\alpha$ is replaced by a bound that depends on the target recovery fraction through the function $\gamma$. The assumption that $\gamma$ is increasing captures the basic requirement that smaller recovery fractions on liabilities should be guaranteed at higher probability levels.

\begin{remark}
The recovery-adjusted solvency test \eqref{revar_solvency test} can easily be combined with a standard solvency test based on $\var$ at level $\alpha$. Indeed, setting $\gamma(1)=\alpha$, it follows that
\begin{equation}
\label{eq: comparison rec var with var}
\revar_\gamma(\Delta E_1,L_1) \geq \VaR_\alpha(\Delta E_1),
\end{equation}
showing that recovery-based capital requirements are more stringent than the standard ones. The standard $\VaR$ test can be reproduced by setting $\gamma(\lambda)=\alpha$ for all recovery fractions $\lambda\in[0,1]$, in which case the inequality in \eqref{eq: comparison rec var with var} becomes an equality. It is worth highlighting that the level $\gamma(1)$ may also be strictly larger than a regulatory level $\alpha$. In this case, the inequality in \eqref{eq: comparison rec var with var} may be reversed. The recovery-based risk measure $\revar$ can be viewed as a flexible generalization of $\VaR$ that reacts to the entire loss tail as specified by the recovery function $\gamma$. As explained in Remark~\ref{rem:cond_recov}, the solvency test \eqref{revar_solvency test} also controls the conditional recovery probabilities given default. Contrary to $\VaR$ and $\avar$, the risk measure $\revar$ depends on the joint distribution of the tuple $(E_1,L_1)$. In particular, the marginal distributions of $E_1$ and $L_1$ are not a sufficient statistic for $\revar$ but knowledge of the dependence structure, as captured, e.g., by the copula of the pair, is additionally required.\footnote{The evaluation of $\revar$ is technically not more complicated than the computation of standard solvency capital requirements, since it only requires the computation of a supremum of distribution-based risk measures, namely $\VaR$'s. In practical situations, knowledge of the precise joint distribution between assets and liabilities is challenging. We refer to Section \ref{sect: numerics} for a detailed numerical illustration. The problem is akin to risk estimation in the presence of aggregate positions where a model for the joint distribution is also needed. The structure of recovery risk measures opens up a variety of interesting technical questions related to dependence modelling that, however, go beyond the scope of the current work. The rich and growing literature on the topic is a good starting point to address such questions, see, e.g., \ci{embrechts2013model}, \ci{bernard2014risk}, \ci{bernard2017risk}, \ci{cai2018asymptotic}.}
\end{remark}


\subsection{Choosing the Recovery Function}
\label{sect: choice of gamma}

The choice of the recovery function $\gamma$ is a critical step in our model and should reflect the risk profile of the (external or internal) regulators. In this section we describe a class of parametric recovery functions\footnote{We describe a methodology to calibrate $\gamma$ to an existing regulatory framework in Section \ref{sect: numerics}. This mirrors a common strategy chosen by regulators when adapting a new solvency setting to replace a pre-existing one. Another possibility to choose $\gamma$ is to elicit it from the risk profile of risk managers or customers, e.g., by way of a questionnaire targeting recovery distributions. This would raise a number of interesting questions for future research that are, however, beyond the scope of the paper.} that  provides an ideal compromise between flexibility and tractability and can be successfully tailored to different applications as demonstrated in Section~\ref{sect: numerics}.

We consider step-wise recovery functions of the form
\begin{equation}
\label{eq: parametric gamma}
\gamma(\lambda)=
\begin{cases}
\alpha_1 & \mbox{if} \ 0=r_0\leq\lambda<r_1,\\
\alpha_2 & \mbox{if} \ r_1\leq\lambda<r_2,\\
 \ \vdots\\
\alpha_n & \mbox{if} \ r_{n-1}\leq\lambda<r_n,\\
\alpha_{n+1} & \mbox{if} \ r_n\leq\lambda\leq r_{n+1}=1,\\
\end{cases}
\end{equation}
with $0<\alpha_1<\cdots<\alpha_{n+1}<1$ and $0<r_1<\cdots<r_n<1$. The parameters $r_i$ correspond to critical target recovery fractions while the parameters $\alpha_i$ define bounds on the corresponding recovery probabilities for every $i=1,\dots,n+1$. As shown in the next proposition, the $\revar$ induced by such recovery functions can be expressed as a maximum of finitely many $\VaR$'s.

\begin{prop}
\label{prop: revar with piecewise gamma}
Let $\gamma$ be defined as in \eqref{eq: parametric gamma}. Then, for all $X,Y\in L^0$ with $Y\geq0$
\[
\revar_\gamma(X,Y) = \max_{i=1,\dots,n+1}\VaR_{\alpha_i}(X+(1-r_i)Y).
\]
\end{prop}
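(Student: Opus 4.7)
The strategy is to split the supremum defining $\revar_\gamma(X,Y)$ into a supremum over each interval on which $\gamma$ is constant, and then compute the supremum on each piece by exploiting the fact that $Y\geq 0$ forces the map $\lambda\mapsto \VaR_{\alpha}(X+(1-\lambda)Y)$ to be monotone. Concretely, I would write
\[
\revar_\gamma(X,Y)=\max\left\{\,\sup_{\lambda\in[r_{i-1},r_i)}\!\VaR_{\alpha_i}(X+(1-\lambda)Y)\ :\ i=1,\dots,n\,\right\}\vee \sup_{\lambda\in[r_n,1]}\!\VaR_{\alpha_{n+1}}(X+(1-\lambda)Y),
\]
and then identify each of these sectional suprema with $\VaR_{\alpha_i}(X+(1-r_i)Y)$, where $r_{n+1}=1$.

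The easy observation is monotonicity. Since $Y\geq 0$, the map $\lambda\mapsto X+(1-\lambda)Y$ is pointwise non-increasing, so by the monotonicity of $\VaR$ (which, under the sign convention of the paper, is non-increasing in its argument), the function $\lambda\mapsto \VaR_{\alpha_i}(X+(1-\lambda)Y)$ is non-decreasing on each constancy interval of $\gamma$. Consequently, on the closed right-endpoint interval $[r_n,1]$ the supremum is attained at $\lambda=1$ and equals $\VaR_{\alpha_{n+1}}(X)=\VaR_{\alpha_{n+1}}(X+(1-r_{n+1})Y)$, covering the case $i=n+1$.

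The delicate step is the left-open intervals $[r_{i-1},r_i)$ for $i\leq n$. There the supremum is attained only in the limit $\lambda\uparrow r_i$, and I must show that this left limit equals the value $\VaR_{\alpha_i}(X+(1-r_i)Y)$ at the right endpoint. Fix $i$ and set $W_k:=X+(1-\lambda_k)Y$ for a sequence $\lambda_k\uparrow r_i$ inside $[r_{i-1},r_i)$, so that $W_k\downarrow W:=X+(1-r_i)Y$ pointwise (since $Y\geq 0$). For every $y\in\mathbb{R}$, the events $\{W_k+y<0\}$ form an increasing sequence whose union equals $\{W+y<0\}$ a.s., hence $\probp(W_k+y<0)\uparrow\probp(W+y<0)$. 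Let $v:=\lim_k \VaR_{\alpha_i}(W_k)$, which exists and satisfies $v\leq \VaR_{\alpha_i}(W)$ by monotonicity. If the inequality were strict, the definition of $\VaR_{\alpha_i}(W)$ as an infimum would give $\probp(W+v<0)>\alpha_i$; on the other hand, using that $y\mapsto\probp(W_k+y<0)$ is right-continuous and non-increasing, one sees that $\VaR_{\alpha_i}(W_k)\leq v$ implies $\probp(W_k+v<0)\leq \alpha_i$ for each $k$, whence $\probp(W+v<0)=\lim_k\probp(W_k+v<0)\leq\alpha_i$, a contradiction. Hence $v=\VaR_{\alpha_i}(W)=\VaR_{\alpha_i}(X+(1-r_i)Y)$.

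Combining the two cases yields the asserted equality $\revar_\gamma(X,Y)=\max_{i=1,\dots,n+1}\VaR_{\alpha_i}(X+(1-r_i)Y)$. I expect the monotone-limit argument in the third paragraph to be the main technical point, since it relies crucially on $Y\geq 0$ (to obtain a monotone sequence of random variables) together with the right-continuity of the distribution function, which jointly give the left-continuity of $\VaR$ needed to close the gap between the open endpoint of a constancy interval and the value at that endpoint.
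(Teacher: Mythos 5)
Your proposal is correct and follows the same decomposition as the paper: split the supremum over the constancy intervals of $\gamma$ and use $Y\geq0$ together with monotonicity of $\VaR$ to bound each sectional supremum by the value at the right endpoint, the case $i=n+1$ being attained at $\lambda=1$. Where you go beyond the paper's own proof is the half-open intervals $[r_{i-1},r_i)$ for $i\leq n$: the paper only records the inequality $\VaR_{\alpha_i}(X+(1-\lambda)Y)\leq\VaR_{\alpha_i}(X+(1-r_i)Y)$ and then asserts that the supremum equals the right-hand side, but this is not automatic, since at $\lambda=r_i$ the level jumps to $\alpha_{i+1}$ and the term $\VaR_{\alpha_i}(X+(1-r_i)Y)$ is never attained inside the interval, so the ``$\geq$'' direction genuinely needs an argument. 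Your monotone-limit step supplies exactly that: with $W_k=X+(1-\lambda_k)Y\downarrow W=X+(1-r_i)Y$, the attainment of the infimum in the definition of $\VaR$ (right-continuity of $x\mapsto\probp(W_k+x<0)$) combined with continuity of $\probp$ from below along the increasing events $\{W_k+v<0\}$ yields left-continuity of $\lambda\mapsto\VaR_{\alpha_i}(X+(1-\lambda)Y)$ at $r_i$; this is valid precisely for the paper's upper-quantile definition of $\VaR$. The only omission is the degenerate case where some $\VaR_{\alpha_i}(X+(1-\lambda_k)Y)=\infty$, in which monotonicity makes the identity trivial; apart from that, your argument is complete and in fact makes rigorous a step that the published proof glosses over.
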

\begin{proof}
See Section \ref{sect: proof parametric gamma}.
\end{proof}

The preceding proposition shows that, under a recovery function of the form \eqref{eq: parametric gamma}, the recovery-based solvency test \eqref{revar_solvency test} takes the particularly simple form:
\begin{equation}
\label{eq: solvency test under stepwise gamma}
\probp(A_1\geq r_i L_1) \geq 1-\alpha_i, \quad i = 1, \dots, n+1.
\end{equation}
In this case, a company is adequately capitalized under $\revar$ if, for every $i=1,\dots,n+1$, assets are sufficient to cover a fraction $r_i$ of liabilities with a probability of at least $1-\alpha_i$. The largest recovery probability $\alpha_{n+1}$ with target recovery fraction $r_{n+1}=1$ caps the default probability and could correspond to the level of $\VaR$ in a classical solvency test. This shows that a solvency test of the form \eqref{eq: solvency test under stepwise gamma} can easily be harmonized with the solvency tests currently used in solvency regulation.


\subsection{Basic Properties of $\revar$}
\label{sect: basic properties}

We ask which basic properties of $\VaR$ are inherited by its recovery counterpart $\revar$. For a comprehensive survey on scalar monetary risk measures we refer to \ci{FS}. A monetary risk measure is a function
$\rho:L^0\to\R\cup\{\infty\}$ that satisfies the following two properties:
\begin{itemize}
    \item \emph{Cash invariance}: $\rho(X+m)=\rho(X)-m$ for all $X\in L^0$ and $m\in\R$;
    \item \emph{Monotonicity}: $\rho(X_1)\leq\rho(X_2)$ for all $X_1,X_2\in L^0$ with $X_1\geq X_2$ $\probp$-almost surely.
\end{itemize}
The cash invariance property formalizes that adding cash to a capital position reduces risk by exactly the same amount and implies that risk is measured on a monetary scale. In particular, cash invariance allows to rewrite the risk measure as a capital requirement rule:
\[
\rho(X) = \inf\{m\in\R \,; \ \rho(X+m)\leq0\},
\]
i.e., the quantity $\rho(X)$ can be interpreted as the minimal amount of cash that needs to be injected into the position $X$ in order to pass the solvency test in \eqref{solvency test 2}. If the position already fulfills this solvency condition, then $-\rho(X)$ corresponds to the maximal amount of capital that can be extracted from the balance sheet without compromising capital adequacy. Monotonicity reflects that larger capital positions correspond to lower risk and to lower capital requirements. In addition to its defining properties, a monetary risk measure may possess the following properties:
\begin{itemize}
    \item \emph{Convexity}: $\rho(aX_1+(1-a)X_2)\leq a\rho(X_1)+(1-a)\rho(X_2)$ for all $X_1,X_2\in L^0$ and $a\in[0,1]$;
    \item \emph{Subadditivity}: $\rho(X_1+X_2)\leq\rho(X_1)+\rho(X_2)$ for all $X_1,X_2\in L^0$;
    \item \emph{Positive homogeneity}: $\rho(aX)=a\rho(X)$ for all $X\in L^0$ and $a\in(0,\infty)$.
    \item \emph{Normalization}: $\rho(0)=0$.
\end{itemize}
The first two properties characterize the behavior of the risk measure with respect to aggregation and require that diversification is not penalized. The third property specifies that risk measurements scale with the size of positions.

The next proposition records elementary properties of $\revar$. In particular, $\revar$ is a standard monetary risk measure if the second argument is fixed.

\begin{prop}\label{prop:elementvar}
The risk measure $\revar_\gamma$ has the following properties:
\begin{enumerate}
  \item \emph{Cash invariance in the first component:} For all $X,Y\in L^0$ and $m\in\R$
\[
\revar_\gamma(X+m,Y)=\revar_\gamma(X,Y)-m.
\]
  \item \emph{Monotonicity:} For all $X_1,X_2,Y_1,Y_2\in L^0$ with $X_1\geq X_2$ and $Y_1\geq Y_2$ $\probp$-almost surely\footnote{Note that monotonicity does not mean that increasing leverage would lead to a decrease in risk. While it is true that an increase in the value of liabilities might be accompanied by a decrease in risk, this is only possible if the value of assets increase in parallel. However, if assets are held constant, then an increase in the value of liabilities will always cause an increase in risk. In this respect, monotonicity does not differ from the standard monotonicity property of classical monetary risk measures.}
\[
\revar_\gamma(X_1,Y_1)\leq\revar_\gamma(X_2,Y_2).
\]
  \item \emph{Positive homogeneity:} For all $X,Y\in L^0$ and $a\in[0,\infty)$
\[
\revar_\gamma(aX,aY) = a\revar_\gamma(X,Y).
\]
  \item \emph{Star-shapedness\footnote{We refer to the recent preprint \ci{castagnoli2021star} for a study of star-shaped risk measures.} in the first component:} For all $X,Y\in L^0$ with $Y\geq0$ and $a\in[1,\infty)$
\[
\revar_\gamma(aX,Y)\geq a\revar_\gamma(X,Y).
\]
 \item \emph{Normalization:} For every $Y\in L^0$ with $Y \geq 0$ we have $\revar_\gamma(0,Y)=0$.
 \item \emph{Finiteness:}
For all $X, Y\in L^0$ with $Y \geq 0$ we have $\revar_\gamma(X,Y)<\infty$ under any of the following conditions:
$\gamma(0)>0$, $\VaR_{\gamma(0)}(X)<\infty$, or $X$ is bounded from below.
\end{enumerate}
\end{prop}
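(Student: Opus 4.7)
The overall strategy is to derive each property of $\revar_\gamma$ directly from the corresponding property of $\VaR$ by exploiting the structural representation $\revar_\gamma(X,Y) = \sup_{\lambda \in [0,1]} \VaR_{\gamma(\lambda)}(X + (1-\lambda)Y)$. Three ingredients will be used repeatedly: cash invariance, monotonicity, and positive homogeneity of $\VaR$, together with the fact that $\VaR_\alpha$ is decreasing in $\alpha$ and $\gamma$ is increasing in $\lambda$, so $\gamma(\lambda_1)\le \gamma(\lambda_2)$ implies $\VaR_{\gamma(\lambda_1)}(Z)\ge \VaR_{\gamma(\lambda_2)}(Z)$ for every $Z \in L^0$.

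Properties (i), (ii), (iii), (v) are essentially immediate. For (i), cash invariance of $\VaR$ gives $\VaR_{\gamma(\lambda)}(X+m+(1-\lambda)Y)=\VaR_{\gamma(\lambda)}(X+(1-\lambda)Y)-m$ pointwise in $\lambda$, and taking the supremum yields the claim. For (ii), the assumption $Y_1\ge Y_2$ combined with $1-\lambda\ge 0$ gives $X_1+(1-\lambda)Y_1\ge X_2+(1-\lambda)Y_2$, hence the inner $\VaR$ is ordered by monotonicity of $\VaR$, and the ordering transfers to the supremum. For (iii) with $a>0$, factor out $a$ inside each $\VaR$ using positive homogeneity; the case $a=0$ reduces to $\VaR_{\gamma(\lambda)}(0)=0$. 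For (v), note that $Y\ge 0$ and $1-\lambda\ge 0$ imply $(1-\lambda)Y\ge 0$, so $\VaR_{\gamma(\lambda)}((1-\lambda)Y)\le 0$ for every $\lambda$, while the value $0$ is attained at $\lambda=1$. Property (vi) uses the key monotone bound $\revar_\gamma(X,Y)\le \VaR_{\gamma(0)}(X)$, obtained from $X+(1-\lambda)Y\ge X$ together with $\gamma(\lambda)\ge\gamma(0)$: the right-hand side is finite under the stated conditions, because $X$ bounded below entails $\VaR_{\gamma(0)}(X)<\infty$, and a positive lower bound on $\gamma(0)$ together with the integrability assumptions on $X$ gives the same conclusion.

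The only step that requires real work is the star-shapedness (iv). Writing $a\revar_\gamma(X,Y) = \sup_{\mu\in[0,1]}\VaR_{\gamma(\mu)}\bigl(aX+a(1-\mu)Y\bigr)$ by positive homogeneity of $\VaR$, the plan is to produce, for each $\mu\in[0,1]$, a parameter $\lambda(\mu)\in[0,1]$ such that $\VaR_{\gamma(\lambda(\mu))}\bigl(aX+(1-\lambda(\mu))Y\bigr)\ge \VaR_{\gamma(\mu)}\bigl(aX+a(1-\mu)Y\bigr)$, so that the supremum on the left is dominated by $\revar_\gamma(aX,Y)$. In the regime $\mu\ge 1-1/a$ set $\lambda(\mu):=1-a(1-\mu)\in[0,1]$, which yields $(1-\lambda(\mu))Y=a(1-\mu)Y$ exactly, and a short computation shows $\lambda(\mu)-\mu=(a-1)(\mu-1)\le 0$, so that $\gamma(\lambda(\mu))\le\gamma(\mu)$ and the desired inequality follows from decreasing monotonicity of $\VaR$ in its level. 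In the complementary regime $\mu<1-1/a$ set $\lambda(\mu):=0$: then $1\le a(1-\mu)$ and $Y\ge 0$ give $aX+(1-\lambda(\mu))Y\le aX+a(1-\mu)Y$, so monotonicity of $\VaR$ in the random variable together with $\gamma(0)\le\gamma(\mu)$ again produces the required inequality. Taking the supremum over $\mu$ on both sides completes the proof. The main obstacle, and what makes the argument slightly delicate, is precisely this change of variable: one must simultaneously match the coefficient $(1-\lambda)Y$ with $a(1-\mu)Y$ and keep $\gamma(\lambda)\le\gamma(\mu)$, which forces the case split at $\mu=1-1/a$ and exploits in an essential way that $a\ge 1$ and $Y\ge 0$.
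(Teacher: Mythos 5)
Your proposal is correct, and for items (i), (ii), (iii), (v), (vi) it follows the same route as the paper (which proves these via the general Proposition~\ref{prop:elementrerho} for families $\rho_\lambda=\VaR_{\gamma(\lambda)}$ and then specializes). The one place where you genuinely diverge is star-shapedness (iv): your change of variable $\mu\mapsto\lambda(\mu)$ with the case split at $\mu=1-1/a$ is valid, but the ``obstacle'' you describe is not actually there. The paper's argument keeps the \emph{same} $\lambda$ throughout: since $a\geq1$ and $Y\geq0$ give $aX+(1-\lambda)Y\leq a\bigl(X+(1-\lambda)Y\bigr)$ pointwise, antitonicity and positive homogeneity of $\VaR$ yield $\VaR_{\gamma(\lambda)}\bigl(aX+(1-\lambda)Y\bigr)\geq a\VaR_{\gamma(\lambda)}\bigl(X+(1-\lambda)Y\bigr)$ for every $\lambda$, and taking suprema finishes the proof with no reparametrization, no level change, and no case distinction. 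Your version buys nothing extra here, though it does show that one can dominate each term of $a\revar_\gamma(X,Y)$ by a term of $\revar_\gamma(aX,Y)$ at a possibly different (smaller) recovery fraction, which is a slightly sharper structural observation. One small wording issue in (vi): no ``integrability assumptions on $X$'' are needed (and none are available, since $X\in L^0$); $\gamma(0)>0$ alone gives $\VaR_{\gamma(0)}(X)<\infty$ because $\probp(X<-x)\to0$ as $x\to\infty$ for any almost surely finite $X$, after which your bound $\revar_\gamma(X,Y)\leq\VaR_{\gamma(0)}(X)$ concludes exactly as in the paper.
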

\begin{proof}
See Section \ref{proof:elementvar}.
\end{proof}

The previous proposition shows that $\revar$ is a standard monetary risk measure in its first component and can conveniently be expressed as a capital requirement:
\begin{eqnarray*}
\revar_\gamma(E_1,L_1) &  =  & \inf\{m\in\R \,; \ \revar_\gamma(E_1+m,L_1)\leq0\} \\
&=&
\inf\{m\in\R \,; \ \probp(A_1+m<\lambda L_1)\leq\gamma(\lambda), \ \forall \lambda\in[0,1]\} \\
&=&
\inf\{m\in\R \,; \ \probp(A_1+m\geq\lambda L_1)\geq1-\gamma(\lambda), \ \forall \lambda\in[0,1]\},
\end{eqnarray*}
where the second equality is a consequence of Remark~\ref{rem:revar_solvency test}. This leads to the following useful operational interpretation of $\revar$:
\begin{itemize}
\item  If $\revar_\gamma(E_1,L_1)>0$, the company fails to pass the recovery-based solvency test \eqref{revar_solvency test} and $\revar_\gamma(E_1,L_1)$ is the minimal amount of cash that needs to be added to its assets in order to become adequately capitalized.
\item If $\revar_\gamma (E_1,L_1)<0$, the company is adequately capitalized according to the recovery-based solvency test \eqref{revar_solvency test} and  $-\revar_\gamma (E_1,L_1)$ is the maximal amount of cash that may be extracted from the asset side without compromising capital adequacy.
\end{itemize}

\begin{rem}\label{rem:2nd-oper}
From an operational perspective the interpretation of monetary risk measures as capital requirement rules relies on the cash invariance property. $\revar$ is cash invariant in the first but not in the second argument. If one intends to modify the liabilities, e.g.\ by transferring them to another institution, instead of the assets on the balance sheet, an alternative definition of $\revar$ is appropriate, namely (``L'' stands for ``liabilities'')
\begin{equation}\label{eq:Lrevar}
\Lrevar_\gamma (A_1,L_1) := \sup_{\lambda \in (0,1]} \;  \frac 1 \lambda  \cdot \var_{\gamma(\lambda)}(A_1-\lambda L_1).
\end{equation}
In this case, the correct way to express the solvency test \eqref{revar_solvency test} is
\begin{equation}
\label{eq: Lrevar_solvency test}
\Lrevar_\gamma (A_1,L_1) \leq 0,
\end{equation}
which is still equivalent to condition \eqref{revar_solvency test 2}. Note that $\Lrevar$ is cash invariant (in the appropriate sense) with respect to its second argument, i.e., for all $A_1,L_1\in L^0_+$ and $m\in\R$
\[
\Lrevar_\gamma(A_1,L_1+m)=\Lrevar_\gamma(A_1, L_1)+m.
\]
This leads to the following operational interpretation:
\begin{itemize}
\item  If $\Lrevar_\gamma (A_1,L_1)>0$, the company fails the solvency test \eqref{eq: Lrevar_solvency test} and $\Lrevar_\gamma (A_1,L_1)$ is the minimal nominal amount of liabilities that needs to be removed from the balance sheet in order to pass the test, e.g., by transferring these liabilities to suitable equity holders outside the firm.
\item If $\Lrevar_\gamma (A_1,L_1)<0$, the company is adequately capitalized. The company may at most create an additional amount $-\Lrevar_\gamma (A_1,L_1)$ of liabilities, e.g., via additional debt, and immediately  distribute the same amount of cash to its shareholders.
\end{itemize}
Observe that assets $A_1$ and liabilities $L_1$ are used in the definition of $\Lrevar$ instead of the net asset value $E_1$ and liabilities $L_1$ in order to obtain a simple cash-invariant recovery risk measure with a transparent operational interpretation.\footnote{Combining $\revar$ and $\Lrevar$ leads to the question of how to combine asset and liability management for capital adequacy purposes, which, however, goes beyond the scope of this paper. The literature on set-valued risk measures may help to address this question.}
\end{rem}


\section{General Recovery Risk Measures}\label{sec:convex}

The risk measure $\revar$ allows to control the loss given default by prescribing suitable bounds on the probability that part of the creditors' claims can be recovered. If one replaces $\VaR$ with other monetary risk measures, e.g. convex risk measures, one obtains recovery risk measures of a different type. In particular, by choosing appropriate monetary risk measures as the basic ingredients, it is possible to construct convex recovery risk measures. This may be desirable from the perspective of decentralized risk management or optimal risk sharing and capital allocation. In this section we describe the general structure of recovery risk measures and give special attention to recovery risk measures based on $\avar$. We continue to use the balance sheet notation introduced in Section~\ref{sec:int_model}.

\subsection{Introducing Recovery Risk Measures}
\label{sec:mathstruc}

To motivate the general definition of a recovery risk measure, we observe that $\revar$ may be expressed in terms of a decreasing family of monetary risk measures indexed by recovery fractions $\lambda\in[0,1]$. Indeed, for a given level function $\gamma$, the collection of monetary risk measures $\rho_\lambda:L^0\to\R$ given by
\[
\rho_\lambda(X) = \var_{\gamma(\lambda)}(X), \quad \lambda \in [0,1],
\]
defines the associated $\revar$ by setting
$$
\revar_\gamma(X,Y) =  \sup_{\lambda\in[0,1]}\rho_\lambda(X+(1-\lambda)Y).
$$
By construction, smaller recovery fractions are guaranteed with higher probability, which is captured by $\gamma$ being increasing. As a consequence, the family of maps  $\rho_\lambda$, $\lambda \in [0,1]$, is decreasing in the sense that $\rho_{\lambda_1} \geq \rho_{\lambda_2}$ whenever $\lambda_1\leq\lambda_2$. A smaller recovery fraction corresponds to a more conservative risk measure. This motivates the general definition of a recovery risk measure.

\begin{defi}\label{def:rerisk}
Let $L^0$ be the set of random variables on some probability space $(\Omega, \fil,\probp)$. We denote by $\xcal \subseteq L^0$ a vector space that contains the constants. For every $\lambda\in[0,1]$ consider a map $\rho_\lambda:\cX\to\R\cup\{\infty\}$ and assume that $\rho_{\lambda_1} \geq \rho_{\lambda_2}$ whenever $\lambda_1\leq\lambda_2$. The \emph{recovery risk measure}
$$
\rerho:\xcal\times\xcal \to\bbr\cup\{\infty\}
$$
is defined by
\begin{equation}\label{eq:rerisk}
\rerho(X,Y) := \sup_{\lambda\in[0,1]}\rho_\lambda(X+(1-\lambda)Y).
\end{equation}
\end{defi}

\noindent In line with our discussion on $\revar$, if the random variables $X$ and $Y$ in Definition \ref{def:rerisk} are respectively interpreted as the net asset value $E_1$ and liabilities $L_1$ in a company's balance sheet, the recovery risk measure $\rerho$ can be employed to formulate a solvency test of the form \eqref{solvency test}. Indeed, similarly to what we have shown in Remark~\ref{rem:revar_solvency test}, we have
\begin{equation}
\label{rerho_solvency test}
\rerho(\Delta E_1,L_1)\leq E_0\quad \Longleftrightarrow \quad \forall\,\lambda\in[0,1]\,:\;\rho_\lambda(A_1-\lambda L_1)\leq0.
\end{equation}
The specific interpretation of this recovery-based solvency test will, of course, depend on the choice of the monetary risk measures used to build $\rerho$.

The next result collects some basic properties of recovery risk measures. In particular, we analyze how a recovery risk measure inherits the key properties of its underlying building blocks. If the risk measures $\rho_\lambda$'s are convex, the recovery risk measure $\rerho$ admits a dual representation, which is recorded in Section \ref{sec:dual representation general} in the appendix. This type of duality results plays an important role in applications such as optimization problems involving risk measures.

\begin{prop}\label{prop:elementrerho}
A recovery risk measure $\rerho:\xcal\times\xcal \to\bbr\cup\{\infty\}$ has the following properties:
\begin{enumerate}
  \item  \emph{Cash invariance in the first component:} If $\rho_\lambda$ is cash invariant for every $\lambda\in[0,1]$, then for all $X,Y\in\xcal$ and $m\in\R$
\[
\rerho(X+m,Y)=\rerho(X,Y)-m.
\]
  \item \emph{Monotonicity:} If $\rho_\lambda$ is monotone for every $\lambda\in[0,1]$, then for all $X_1,X_2,Y_1,Y_2\in\xcal$ such that $X_1\geq X_2$ and $Y_1\geq Y_2$ $\probp$-almost surely
\[
\rerho(X_1,Y_1) \leq \rerho(X_2,Y_2).
\]
  \item \emph{Convexity:} If $\rho_\lambda$ is convex for every $\lambda\in[0,1]$, then for all $X_1,X_2,Y_1,Y_2\in\xcal$ and $a\in[0,1]$
\[
\rerho(aX_1+(1-a)X_2,aY_1+(1-a)Y_2) \leq a\rerho(X_1,Y_1)+(1-a)\rerho(X_2,Y_2).
\]
  \item \emph{Subadditivity:} If $\rho_\lambda$ is subadditive for every $\lambda\in[0,1]$, then for all $X_1,X_2,Y_1,Y_2\in\xcal$
\[
\rerho(X_1+X_2,Y_1+Y_2) \leq \rerho(X_1,Y_1)+\rerho(X_2,Y_2).
\]
  \item \emph{Positive homogeneity:} If $\rho_\lambda$ is positively homogeneous for every $\lambda\in[0,1]$, then for all $X,Y\in\xcal$ and $a\in[0,\infty)$
\[
\rerho(aX,aY) = a\rerho(X,Y).
\]
  \item \emph{Star-shapedness in the first component:} If $\rho_\lambda$ is monotone and positively homogeneous for every $\lambda\in[0,1]$, then for all $X, Y\in\xcal$ with $Y\geq 0$ and $a\in[1,\infty)$
\[
\rerho(aX,Y)\geq a\rerho(X,Y).
\]
 \item \emph{Normalization:} If $\rho_\lambda$ is monotone and $\rho_\lambda(0)=0$ for every $\lambda\in[0,1]$, then $\rerho(0,Y)=0$ for every $Y\in\xcal$ with $Y\geq0$.
  \item \emph{Finiteness:} If $\rho_\lambda$ is monotone for every $\lambda\in[0,1]$, then for every $X\in\xcal$ with $\rho_0(X)<\infty$ and for every $Y\in\xcal$ with $Y\geq0$ we have $\rerho(X,Y)<\infty$.
\end{enumerate}
\end{prop}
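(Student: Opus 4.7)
The plan is to exploit the master identity $\rerho(X,Y) = \sup_{\lambda\in[0,1]}\rho_\lambda(X+(1-\lambda)Y)$ and reduce each of the eight claims to an assertion about the base risk measure $\rho_\lambda$ at a fixed level, which then survives passage to the supremum over $\lambda\in[0,1]$. The map $(X,Y)\mapsto X+(1-\lambda)Y$ is affine with a nonnegative coefficient on $Y$, and the operation $\sup$ preserves inequalities, is subadditive, and commutes with scaling by nonnegative constants, so the bookkeeping is essentially mechanical.

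For items (1)--(5) I would proceed as follows. For (1), cash invariance of $\rho_\lambda$ gives $\rho_\lambda(X+m+(1-\lambda)Y) = \rho_\lambda(X+(1-\lambda)Y)-m$, and the constant $-m$ factors through the supremum. For (2), since $1-\lambda\geq 0$, the hypotheses $X_1\geq X_2$ and $Y_1\geq Y_2$ imply $X_1+(1-\lambda)Y_1\geq X_2+(1-\lambda)Y_2$, and monotonicity of $\rho_\lambda$ reverses this into the desired inequality inside the supremum. For (3)--(5), one writes the shifted argument as a convex combination (respectively a sum, or an $a$-multiple) of the shifted individual arguments, applies convexity, subadditivity or positive homogeneity of $\rho_\lambda$, and concludes by $\sup(f+g)\leq \sup f+\sup g$ or $\sup(af)=a\sup f$ for $a\geq 0$.

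The last three items need small additional observations. For (6), with $a\geq 1$ and $Y\geq 0$ we have $a(X+(1-\lambda)Y)-(aX+(1-\lambda)Y) = (a-1)(1-\lambda)Y\geq 0$, so monotonicity followed by positive homogeneity of $\rho_\lambda$ yields $\rho_\lambda(aX+(1-\lambda)Y)\geq \rho_\lambda(a(X+(1-\lambda)Y)) = a\rho_\lambda(X+(1-\lambda)Y)$, and taking suprema delivers $\rerho(aX,Y)\geq a\rerho(X,Y)$. For (7), evaluating at $\lambda=1$ gives $\rho_1(0)=0$, while for $\lambda<1$ the inequality $(1-\lambda)Y\geq 0$ combined with $\rho_\lambda(0)=0$ and monotonicity gives $\rho_\lambda((1-\lambda)Y)\leq 0$; hence the supremum equals zero. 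For (8), since the family $(\rho_\lambda)$ is assumed decreasing in $\lambda$, one has $\rho_\lambda(X+(1-\lambda)Y)\leq \rho_\lambda(X)\leq \rho_0(X)<\infty$ uniformly in $\lambda$, the first inequality by monotonicity and $Y\geq 0$; the supremum is therefore finite.

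The main (and only) subtle point is keeping track of signs and inequality directions: one must remember that monetary risk measures are decreasing in their argument (positive values represent profits), that $1-\lambda\in[0,1]$ throughout, and that the hypothesis $Y\geq 0$ is essential in (6)--(8). Item (6) is the most natural place to reverse an inequality by mistake, since it combines monotonicity and positive homogeneity with a sign argument, and so I would verify that step with particular care; beyond this, I anticipate no genuine obstacle.
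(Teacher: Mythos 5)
Your proposal is correct and follows essentially the same route as the paper: each item is reduced to the corresponding property of $\rho_\lambda$ applied to the shifted argument $X+(1-\lambda)Y$, using $1-\lambda\geq0$, positivity of $Y$ where needed (items (f)--(h)), and the monotone-decreasing structure of the family $(\rho_\lambda)$ for finiteness, before passing to the supremum. Your handling of the inequality directions (in particular in monotonicity and star-shapedness) is accurate, so there is nothing further to add.
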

\begin{proof}
See Section~\ref{proof:elementrerho}.
\end{proof}


\subsection{Recovery Average Value at Risk}\label{sec:avar}

The recovery risk measure $\revar$ shares one major deficiency with the classical $\VaR$, namely the lack of convexity. Unlike convex risk measures, $\VaR$ and its recovery counterpart may thus penalize diversification, i.e., the capital requirement of a diversified position might be higher than the maximum of the capital requirements of the individual non-diversified positions. In addition, the lack of convexity complicates the solution of portfolio optimization problems with constraints on the downside risk and prevents the construction of limit systems within companies that facilitate decentralized risk management. A useful alternative is the following recovery-based version of $\avar$.

\begin{defi}\label{def:reavar}
We denote by $L^1$ the vector space of integrable random variables on some probability space $(\Omega, \fil,\probp)$. Let $\gamma: [0,1] \to (0,1)$ be an increasing function. The \emph{Recovery Average Value at Risk}
$$
\reavar_\gamma:L^1\times L^1\to\bbr\cup\{\infty\}
$$
with level function $\gamma$ is defined by
\begin{equation}\label{eq:reavar}
\reavar_\gamma (X,Y) := \sup_{\lambda \in [0,1]}\avar_{\gamma(\lambda)}(X+(1-\lambda)Y).
\end{equation}
\end{defi}

\noindent In line with our discussion on general recovery risk measures, if the random variables $X$ and $Y$ in Definition \ref{def:reavar} are interpreted, respectively, as the net asset value $E_1$ and liabilities $L_1$ in a company's balance sheet, the recovery risk measure $\reavar$ can be used to formulate the solvency test~\eqref{rerho_solvency test}:
\begin{equation}
\label{reavar_solvency test}
\reavar_\gamma(\Delta E_1,L_1)\leq E_0\quad \Longleftrightarrow \quad \forall\,\lambda\in[0,1]\,:\;\avar_{\gamma(\lambda)}(A_1-\lambda L_1)\leq0.
\end{equation}
This means that a company will be adequately capitalized according to $\reavar$ with level function $\gamma$ if for all recovery fractions $\lambda\in[0,1]$ the modified net asset value $A_1-\lambda L_1$ is positive on average on the lower tail beyond the $\gamma(\lambda)$-quantile. Since $\avar$ dominates $\VaR$ at the same level, domination is inherited by their recovery-based versions, i.e., for all $X,Y\in L^1$
$$
\reavar_\gamma (X,Y) \geq \revar_\gamma (X,Y).
$$
This implies, in particular, that the solvency test \eqref{reavar_solvency test} is stricter than \eqref{revar_solvency test} and the recovery probabilities are still controlled as described in Remark~\ref{rem:revar_solvency test}. In the next example we show that $\reavar$ is the maximum of finitely many AV@R's if the recovery function $\gamma$ is piecewise constant. This parallels the representation of $\revar$ recorded in Proposition \ref{prop: revar with piecewise gamma}.

\begin{prop}
\label{prop: parametric gamma avar}
Let $\gamma$ be defined as in \eqref{eq: parametric gamma}. Then, for all $X,Y\in L^1$ with $Y\geq0$
\[
\reavar_\gamma(X,Y) = \max_{i=1,\dots,n+1}\avar_{\alpha_i}(X+(1-r_i)Y).
\]
\end{prop}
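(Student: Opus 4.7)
\emph{Plan.} The strategy is to mirror the argument behind Proposition \ref{prop: revar with piecewise gamma}, exploiting the piecewise constant structure of $\gamma$ to reduce the supremum in the definition of $\reavar_\gamma$ to a finite maximum. First, I would decompose $[0,1] = \bigcup_{i=1}^{n+1} I_i$, where $I_i := [r_{i-1}, r_i)$ for $i \leq n$ (with the convention $r_0 := 0$) and $I_{n+1} := [r_n, 1]$, so that $\gamma \equiv \alpha_i$ on $I_i$. This yields
\[
\reavar_\gamma(X,Y) = \max_{i=1,\dots,n+1} \sup_{\lambda \in I_i} \avar_{\alpha_i}(X + (1-\lambda)Y),
\]
reducing the problem to evaluating each inner supremum.

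Next, for each fixed $i$, I would observe that the map $\lambda \mapsto \avar_{\alpha_i}(X+(1-\lambda)Y)$ is increasing on $[0,1]$: since $Y \geq 0$, the position $X + (1-\lambda)Y$ is decreasing in $\lambda$, and by monotonicity of $\avar$ (a larger position has smaller risk) the value increases in $\lambda$. For $i = n+1$ the right endpoint $\lambda = 1 = r_{n+1}$ belongs to $I_{n+1}$ and the supremum is attained there, yielding $\avar_{\alpha_{n+1}}(X)$. For $i \leq n$ the interval is half-open with $r_i \notin I_i$, so the supremum has to be identified with the left-limit as $\lambda \uparrow r_i$.

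The main technical step -- the only place where the proof genuinely departs from the $\VaR$ version -- is to verify that this left-limit coincides with $\avar_{\alpha_i}(X+(1-r_i)Y)$. My plan is to invoke the $L^1$-Lipschitz continuity of $\avar_{\alpha_i}$, namely $|\avar_{\alpha_i}(Z_1)-\avar_{\alpha_i}(Z_2)| \leq \tfrac{1}{\alpha_i}\E|Z_1-Z_2|$ for $Z_1, Z_2 \in L^1$, which is a standard consequence of the dual representation of $\avar$ via densities bounded by $1/\alpha_i$. Applying this with $Z_1 = X + (1-\lambda)Y$ and $Z_2 = X+(1-r_i)Y$, and using $Y \in L^1$, the difference vanishes as $\lambda \to r_i^-$. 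Combining monotonicity with this continuity then yields $\sup_{\lambda \in I_i}\avar_{\alpha_i}(X+(1-\lambda)Y) = \avar_{\alpha_i}(X+(1-r_i)Y)$, and taking the maximum over $i$ gives the claim.
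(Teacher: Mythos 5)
Your proof is correct and follows essentially the same route as the paper: partition $[0,1]$ into the intervals on which $\gamma$ is constant and use positivity of $Y$ together with monotonicity of $\avar$ on each piece, with the last interval handled at $\lambda=1$. The only difference is that you explicitly justify, via the $L^1$-Lipschitz bound $|\avar_{\alpha_i}(Z_1)-\avar_{\alpha_i}(Z_2)|\leq \alpha_i^{-1}\E|Z_1-Z_2|$, that the supremum over the half-open interval $[r_{i-1},r_i)$ coincides with the value at the excluded endpoint $r_i$ --- a left-continuity step the paper's proof asserts without comment --- so your version is, if anything, slightly more complete.
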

\begin{proof}
See Section \ref{sect: proof parametric gamma avar}.
\end{proof}

As an application of Proposition \ref{prop:elementrerho}, we record some basic properties of $\reavar$ in the next result. We refer to Section \ref{proof:dual representation reavar} in the appendix for a proof of a dual representation of $\reavar$ in line with the general dual representation recorded in Section \ref{sec:dual representation general}. As mentioned above, this type of duality results plays an important role in applications such as optimization problems involving risk measures.

\begin{prop}\label{prop:elementreavar}
The risk measure $\reavar_\gamma$ is cash invariant in its first component, monotone, convex, subadditive, positively homogeneous, star shaped in its first component, and normalized. Moreover, $\reavar_\gamma(X,Y)<\infty$ for all $X,Y\in L^1$ with $Y\geq0$ under any of the following conditions: $\gamma(0)>0$, $\avar_{\gamma(0)}(X)<\infty$, or $X$ is bounded from below.
\end{prop}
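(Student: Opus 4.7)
The plan is to obtain the statement as a direct application of Proposition~\ref{prop:elementrerho}, specialized to the building blocks $\rho_\lambda := \avar_{\gamma(\lambda)}$ for $\lambda\in[0,1]$. The first step is to verify that these blocks form a decreasing family in the sense required by Definition~\ref{def:rerisk}. Since $\gamma$ is increasing and the map $\alpha\mapsto\avar_\alpha$ is decreasing (the average of the lower tail grows as the tail level shrinks), $\lambda_1\leq\lambda_2$ implies $\gamma(\lambda_1)\leq\gamma(\lambda_2)$ and therefore $\avar_{\gamma(\lambda_1)}\geq\avar_{\gamma(\lambda_2)}$, so the monotonicity of the family holds. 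Consequently $\reavar_\gamma$ is a well-defined recovery risk measure in the sense of Definition~\ref{def:rerisk}.

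The second step is to note that $\avar_\alpha$ at any fixed level $\alpha\in(0,1)$ is a coherent normalized monetary risk measure on $L^1$: it is cash invariant, monotone, convex, subadditive, positively homogeneous, and satisfies $\avar_\alpha(0)=0$. These are exactly the assumptions appearing in items (i)--(vii) of Proposition~\ref{prop:elementrerho}. Applying that proposition immediately yields that $\reavar_\gamma$ is cash invariant in the first component, monotone, convex, subadditive, positively homogeneous, star-shaped in the first component, and normalized. There is essentially no computation to do here beyond quoting the known properties of $\avar$; all the work of transporting these properties through the $\sup_{\lambda}$ in \eqref{eq:reavar} has already been done once and for all in Proposition~\ref{prop:elementrerho}.

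For the finiteness claim, the strategy is to verify the hypotheses of item (viii) of Proposition~\ref{prop:elementrerho}, which requires $\rho_0(X)=\avar_{\gamma(0)}(X)<\infty$. Under the explicit assumption $\avar_{\gamma(0)}(X)<\infty$ there is nothing to show. Under $\gamma(0)>0$ and $X\in L^1$, the standard bound $\avar_\beta(X)\leq\tfrac{1}{\beta}\E[X^-]$ (valid for any $\beta\in(0,1)$ and integrable $X$) applied at $\beta=\gamma(0)$ gives $\avar_{\gamma(0)}(X)<\infty$. Under the assumption that $X$ is bounded from below by some constant $c$, monotonicity of $\avar$ and its cash invariance yield $\avar_{\gamma(0)}(X)\leq\avar_{\gamma(0)}(c)=-c<\infty$. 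In each case, combining with $Y\geq 0$ and applying Proposition~\ref{prop:elementrerho}(viii) gives $\reavar_\gamma(X,Y)<\infty$.

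The whole proof is thus essentially a verification exercise; the only mild subtlety is checking finiteness of $\avar_{\gamma(0)}$ in the three alternative regimes, but each of these follows from elementary and well-known properties of $\avar$ on $L^1$, so no serious obstacle is expected.
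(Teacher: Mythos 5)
Your proposal is correct and follows essentially the same route as the paper: properties (a)--(g) are obtained by specializing Proposition~\ref{prop:elementrerho} to $\rho_\lambda=\avar_{\gamma(\lambda)}$, and finiteness is reduced to $\avar_{\gamma(0)}(X)<\infty$, which you verify in each of the three regimes before invoking item (h). Your explicit bound $\avar_\beta(X)\leq\tfrac{1}{\beta}\E[X^-]$ for the case $\gamma(0)>0$ is a slightly more detailed justification than the paper gives, but it is a valid and standard estimate, so there is no substantive difference.
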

\begin{proof}
See Section \ref{proof:elementreavar}.
\end{proof}

\begin{rem}
\label{rem: on subadditivity of reavar}
The subadditivity of $\reavar$ makes it suitable to serve as a basis for limit systems that enable decentralized risk management within firms. We consider a bank or an insurance company that consists of $N$ subentities. For each date $t=0,1$ their assets, liabilities, and net asset value are denoted by $A_t^i$, $L_t^i$, and $E_t^i$, $i=1,\dots,N$. The consolidated figures are denoted by
\[
A_t= \sum_{i=1}^N A_t^i, \ \ \ L_t= \sum_{i=1}^N L_t^i, \ \ \ E_t= \sum_{i=1}^N E_t^i.
\]
The firm may enforce entity-based risk constraints of the form
\[
\reavar_\gamma (E^i_1,L^i_1) \leq c^i, \ \ \ i=1,\dots,N,
\]
where $c^1,\dots,c^N\in\bbr$ are given risk limits. If the limits are chosen to satisfy $\sum_{i=1}^N c^i\leq0$, then
\[
\reavar_\gamma (E_1,L_1) \leq \sum_{i=1}^N\reavar_\gamma (E^i_1,L^i_1) \leq \sum_{i=1}^N c^i \leq 0
\]
by subadditivity. This shows that imposing risk constraints at the level of subentities allows to fulfill the ``global'' solvency test \eqref{reavar_solvency test}. A closely related issue is performance measurement and adaptive management of the balance sheets of firms, as often seen in practice. This is discussed for general recovery risk measures in Section~\ref{sec:performance}.
\end{rem}

\begin{rem}\label{rem:Lreavar}
As in Remark \ref{rem:2nd-oper}, we may construct a version of $\reavar$ that is cash invariant with respect to its second component. This is given by
\[
\Lreavar_\gamma (A_1,L_1) := \sup_{\lambda \in (0,1]} \;  \frac{1}{\lambda}\avar_{\gamma(\lambda)}(A_1-\lambda L_1).
\]
The operational interpretation is analogous to that of $\Lrevar$.
\end{rem}


\section{Applications}
\label{sect: numerics}

We complement the foundations on recovery risk measures with detailed case studies and applications. In Section~\ref{sec:pilh} we demonstrate that recovery-based solvency requirements may help align the decisions of the management of firms with the interest of creditors in protecting their claims in the case of default. In Sections \ref{sec:incs} and \ref{sect: maximal rec adj}  we compare in case studies the standard risk measures adopted in solvency regulation --- $\VaR$ and $\avar$ --- to the recovery risk measure $\revar$. Such a comparison enables to identify and to quantify potential failures of the current regulatory standards by revealing those situations in which creditors are not appropriately protected from low recovery rates on their claims payments. In Section~\ref{ctrf} we address the problem of calibrating the recovery function to pre-specified benchmarks, an issue that is relevant in the context of regulatory regime changes. In Section~\ref{sec:performance} we focus on performance-based management of business divisions of firms for recovery risk measures. Finally, in Section~\ref{sec:optimization} we show that efficient combinations of risk and return can be characterized by a linear program if downside risk is quantified by $\reavar$.


\subsection{Protecting the Interests of Creditors}
\label{sec:pilh}

In Example~\ref{ex:interests}, we demonstrated that capital requirements based on $\var$ and $\ES$ may fail to provide an adequate protection to creditors. We return to this example and show that $\revar$ and $\reavar$ can successfully be employed to enforce guarantees on claims recovery. For detailed calculations we refer to Section~\ref{sec:ex-recovery} in the appendix.

\begin{ex}
\label{ex: recovery VaR}
We consider the situation of Example~\ref{ex:interests}, but with a different risk constraint in terms of $\revar$. While solvency constraints in terms of $\var$ or $\avar$ led to recovery $0$, the recovery risk measure $\revar$ is able to guarantee a pre-specified recovery level.

We fix a recovery function in the class described in Section~\ref{sect: choice of gamma} with $n=1$. For a probability level $\beta\in(0,\alpha)$ and a recovery level $r\in(0,1)$, we set
\[
\gamma(\lambda)=
\begin{cases}
\beta & \mbox{if} \ \lambda\in[0,r),\\
\alpha & \mbox{if} \ \lambda\in[r,1].
\end{cases}
\]
For every choice of $k\in[0,100]$ we obtain from Proposition \ref{prop: revar with piecewise gamma} that
\[
\revar_\gamma(E^k_1,L_1) = \max\{\VaR_\alpha(E^k_1),\VaR_\beta(E^k_1+(1-r)L_1)\}.
\]
A direct computation shows that
\[
\revar_\gamma(E^k_1,L_1) =
\begin{cases}
100r-k & \mbox{if} \ \beta<\frac{\alpha}{2}, \  k\leq50(r+1),\\
k-100 & \mbox{otherwise}.
\end{cases}
\]
According to \eqref{revar_solvency test} the company is adequately capitalized if
\[
\revar_\gamma(E^k_1,L_1)\leq0 \ \iff \
\begin{cases}
k\geq100r & \mbox{if} \ \beta<\frac{\alpha}{2},\\
k\geq0 & \mbox{if} \ \beta\geq\frac{\alpha}{2}.
\end{cases}
\]
A maximal shareholder value under the recovery-based solvency constraint is attained with $k=100r$ when $\beta<\frac{\alpha}{2}$ and with $k=0$ otherwise.  The first case corresponds to successfully controlling recovery. Hence, the regulator may choose a suitable recovery function such that $\revar$ is more stringent than $\VaR$ and the recovery fraction in the default state is equal to $r$. This is in contrast to Example~\ref{ex:interests} with solvency constraints in terms of $\var$ or $\avar$ that led to recovery $0$ when the management maximizes shareholder value.
\end{ex}

\begin{ex}\label{ex: recovery AVaR}
We consider the same situation as in Example~\ref{ex: recovery VaR}, but replace $\revar$ by $\reavar$ with the same recovery function. We will demonstrate that the recovery risk measure $\reavar$ is also able to guarantee a pre-specified recovery level.

To be more specific, it follows from Proposition \ref{prop: parametric gamma avar} for every choice of $k\in[0,100]$  that
\[
\reavar_\gamma(E^k_1,L_1) = \max\{\avar_\alpha(E^k_1),\avar_\beta(E^k_1+(1-r)L_1)\}.
\]
A direct computation shows that
\[
\reavar_\gamma(E^k_1,L_1) =
\begin{cases}
100r-k & \mbox{if} \ \beta<\frac{\alpha}{2}, \ k\leq100r,\\
r-101+\frac{\alpha}{2\beta}(101+99r)+(1-\frac{\alpha}{\beta})k & \mbox{if} \ \beta\geq\frac{\alpha}{2}, \ k\leq\frac{(99\alpha+2\beta)r-101(2\beta-\alpha)}{2(\alpha-\beta)},\\
0 & \mbox{otherwise}.
\end{cases}
\]
Hence, the company is adequately capitalized under \eqref{reavar_solvency test} if
\[
\reavar_\gamma(E^k_1,L_1)\leq0 \ \iff \
\begin{cases}
k\geq100r & \mbox{if} \ \beta<\frac{\alpha}{2},\\
k\geq\max\left\{\frac{(99\alpha+2\beta)r-101(2\beta-\alpha)}{2(\alpha-\beta)},0\right\} & \mbox{if} \ \beta\geq\frac{\alpha}{2}.
\end{cases}
\]
If $\beta<\frac{\alpha}{2}$ and the management selects the individually optimal admissible level of $k$, the recovery fraction in the default state is equal to $r$ as observed in Example~\ref{ex: recovery VaR}. In this case, there is no difference between $\reavar$ and $\revar$.

Interestingly enough, contrary to $\revar$, claims recovery can be controlled under $\reavar$ even in the situation where $\beta\geq\frac{\alpha}{2}$.
In this case, under the assumption that shareholder value is maximized, the fraction of claims recovered in the default state equals
\[
\max\bigg\{\frac{(99\alpha+2\beta)r-101(2\beta-\alpha)}{200(\alpha-\beta)},0\bigg\}.
\]
This expression is strictly positive as soon as $r$ is strictly larger than the bound $\frac{101(2\beta-\alpha)}{99\alpha+2\beta} \in[0,1)$. (For example, taking $\beta=\frac{\alpha}{2}$ always ensures a recovery equal to $r$). As claimed, solvency capital requirements based on $\reavar$ are more effective in controlling claims recovery in comparison to those based on $\revar$ in Example~\ref{ex: recovery VaR}
\end{ex}


\subsection{The Impact of the Distribution of the Balance Sheet}

Standard solvency capital requirements based on $\VaR$ and $\avar$ cannot control the probability of recovering certain pre-specified fractions of claims. Additional capital is required which needs to be computed on the basis of recovery risk measures such as  $\revar$ and $\reavar$. In this section, we study the impact of a variation in the distribution of the underlying balance sheet figures on the size of necessary capital adjustments. Section~\ref{sec:incs} numerically illustrates this for standard parametric distributions. This allows to understand the influence of correlation between assets and liabilities and the tail size of liabilities. Section \ref{sect: maximal rec adj} presents a stylized example demonstrating that  under standard solvency regimes sophisticated asset-liability-management may hide substantial tail risk. These situations correspond to high capital adjustments, if the required capital is instead computed by recovery risk measures.

Throughout the section, we consider a financial institution with assets $A_t$, liabilities $L_t$, and net asset value $E_t=A_t-L_t$ at dates $t=0,1$. The changes of the net asset value over the considered time window or, equivalently, the corresponding cash flows are $\Delta E_1 = E_1 - E_0$.


\subsubsection{Parametric Distributions}
\label{sec:incs}

In this section, we consider parametric distributions that model the evolution of the company's assets and liabilities and show how the gap between standard capital requirements and those based on recovery risk measures is influenced by the dependence between assets and liabilities and by their marginal distributions, in particular the liability tail size.  We refer to Section \ref{sect: complementary numerical study} for further details and to Section \ref{sect: plots appendix} for several complementary plots.

\paragraph{Distribution of assets and liabilities.}  We assume that $A_1$ possesses a lognormal distribution with log-mean $\mu\in\R$ and log-standard deviation $\sigma>0$. This specification for the asset distribution is standard in the finance literature and compatible, e.g., with the Black-Scholes setting. We fix $\mu=2$ and $\sigma=0.2$. Liabilities $L_1$ follow a mixture gamma distribution. More precisely, up to the $95\%$ quantile $L_1$ possesses a gamma distribution with shape parameter  $\tau_0>0$ and rate parameter $\delta_0>0$; beyond the $95\%$ quantile $L_1$ is determined by a gamma distribution with shape parameter $\tau>0$ and rate parameter $\delta>0$. This specification is encountered in many applications, including insurance, and allows a flexible control on the tail distribution (heavier tails correspond to higher levels of $\tau$). Setting $\delta_0=\delta=1$ and $\tau_0=1$, we focus on the range $\tau\in[1,5]$.

Assets and liabilities are linked by a Gaussian copula. This choice allows to capture dependence by a single parameter, the correlation coefficient $\rho\in[-1,1]$. Under positive dependence ($\rho>0$), shocks increasing the value of liabilities are more frequently accompanied by increased asset values. In this case, the asset position may be considered a reasonable hedge of the liability position. We focus on the range $\rho\in[0,1]$.

\begin{figure}[t]
\vspace{-0.8cm}
\centering
\subfigure{
\includegraphics[width=0.42\textwidth]{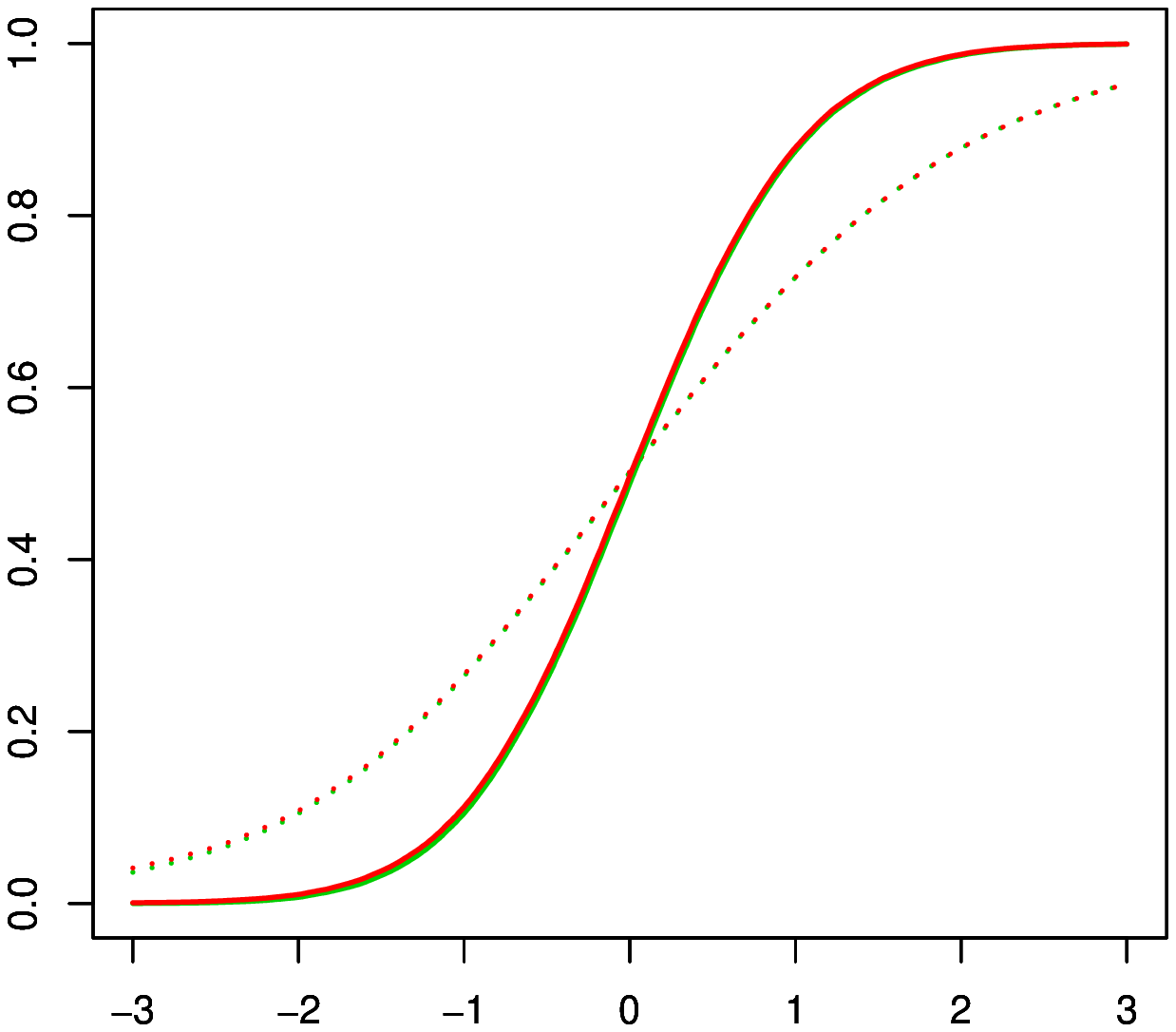}
}
\hspace{1cm}
\subfigure{
\includegraphics[width=0.42\textwidth]{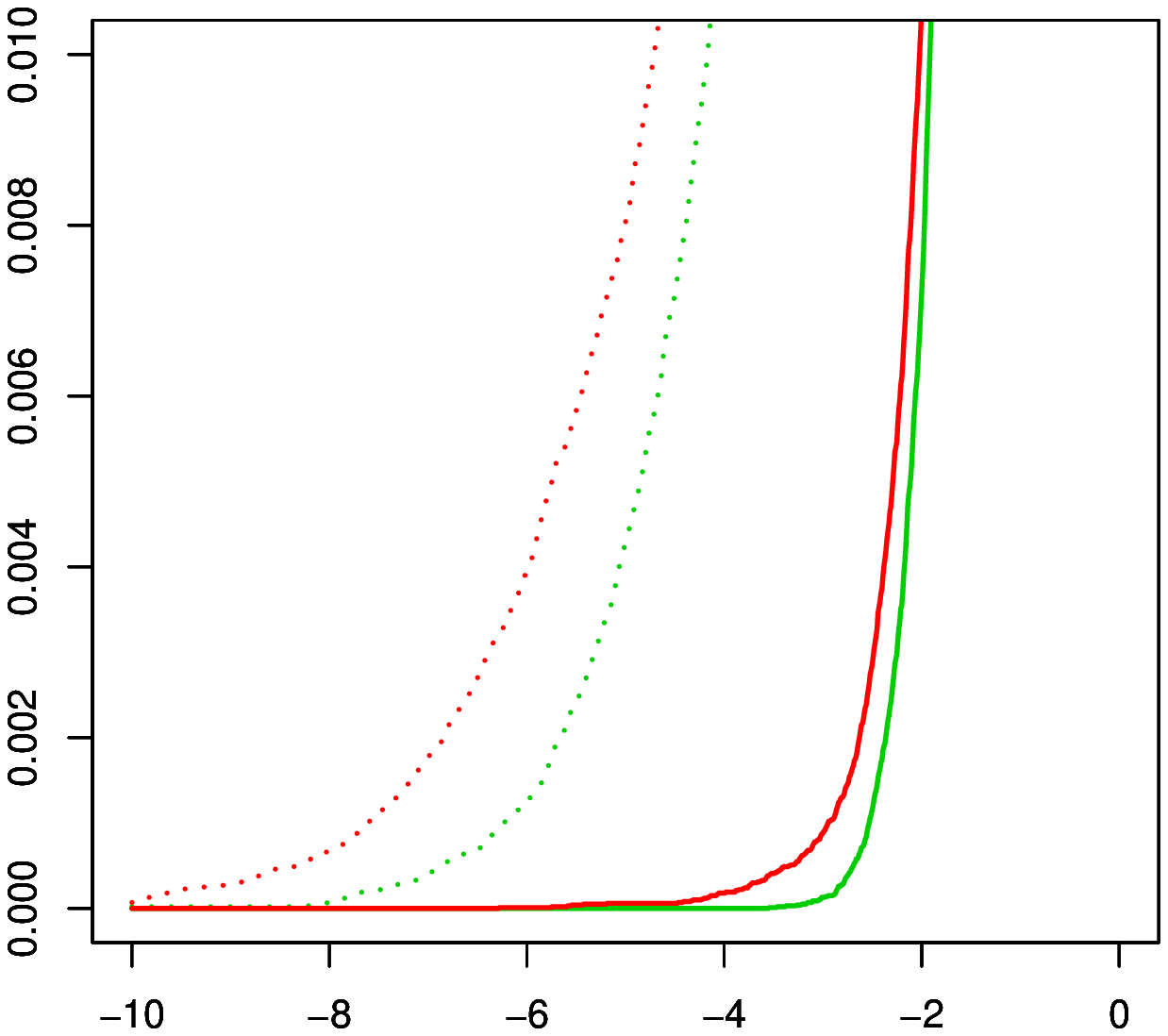}
}
\vspace{-0.5cm}
\caption{Probability distribution function of $\Delta E_1$ (left) and a detail of its tail (right) for $\rho=0.1$ (dotted) and $\rho=0.9$ (plain) and for $\tau=1$ (green) and $\tau=5$ (red).}
\label{fig: distribution equity}
\end{figure}

\paragraph{Simulated distribution of assets and liabilities.}  Our computations are implemented using the software R. We resort to standard Monte Carlo simulation based on quantile inversion (for the marginal distributions) and Cholesky decomposition (for the joint distribution); see, e.g., \ci{glasserman2013}. The simulated cash flow distribution is displayed in Figure \ref{fig: distribution equity}. The choice of $E_0$ is made to ensure a realistic probability of observing negative cash flows over the considered period of time, i.e., $\probp(E_1<E_0)$; we target a value of about $50\%$. This constraint is met in our case if, e.g., $E_0=6.5$. As expected, increasing the correlation level between assets and liabilities leads to a more concentrated cash flow distribution. Increasing the size of the liability tail leads to a heavier cash flow tail. Probabilities of negative cash flows are decreasing functions of the correlation level and increasing functions of the liability tail size.\footnote{This is illustrated in Figure~\ref{fig: default probabilities} in the appendix.} The first observation is due to the fact that the more positive the dependence, the more effective the assets as a hedge against liabilities and the lower the probability of negative cash flows.

\paragraph{Regulatory capital requirements.}  We focus on the two most prominent solvency regimes in insurance,  Solvency II and the Swiss Solvency Test, with regulatory capital requirements
\[
\rho_{reg}(\Delta E_0)=
\begin{cases}
\VaR_{0.5\%}(\Delta E_0) & \mbox{under Solvency II},\\
\ES_{1\%}(\Delta E_0) & \mbox{under the Swiss Solvency Test}.
\end{cases}
\]
\noindent Figure~\ref{fig: regulatory standards} displays solvency capital requirements as functions of the correlation level between assets and liabilities and of the liability tail size. In line with our previous discussion, the level of regulatory capital is a decreasing function of correlation and an increasing function of tail size. The risk measure $\rho_{reg}(E_1)$ is always negative under our specifications, indicating that we are focusing on companies that are technically solvent with respect to the regulatory solvency tests under consideration. In addition, the solvency ratio $\frac{E_0}{\rho_{reg}(\Delta E_0)}$ lies in the interval $[1,3]$, which is the relevant range in practice.\footnote{This is illustrated in Figure \ref{fig: regulatory risk measures} and Figure~\ref{fig: solvency ratio} in the appendix.}

\begin{figure}[t]
\vspace{-0.8cm}
\centering
\subfigure{
\includegraphics[width=0.42\textwidth]{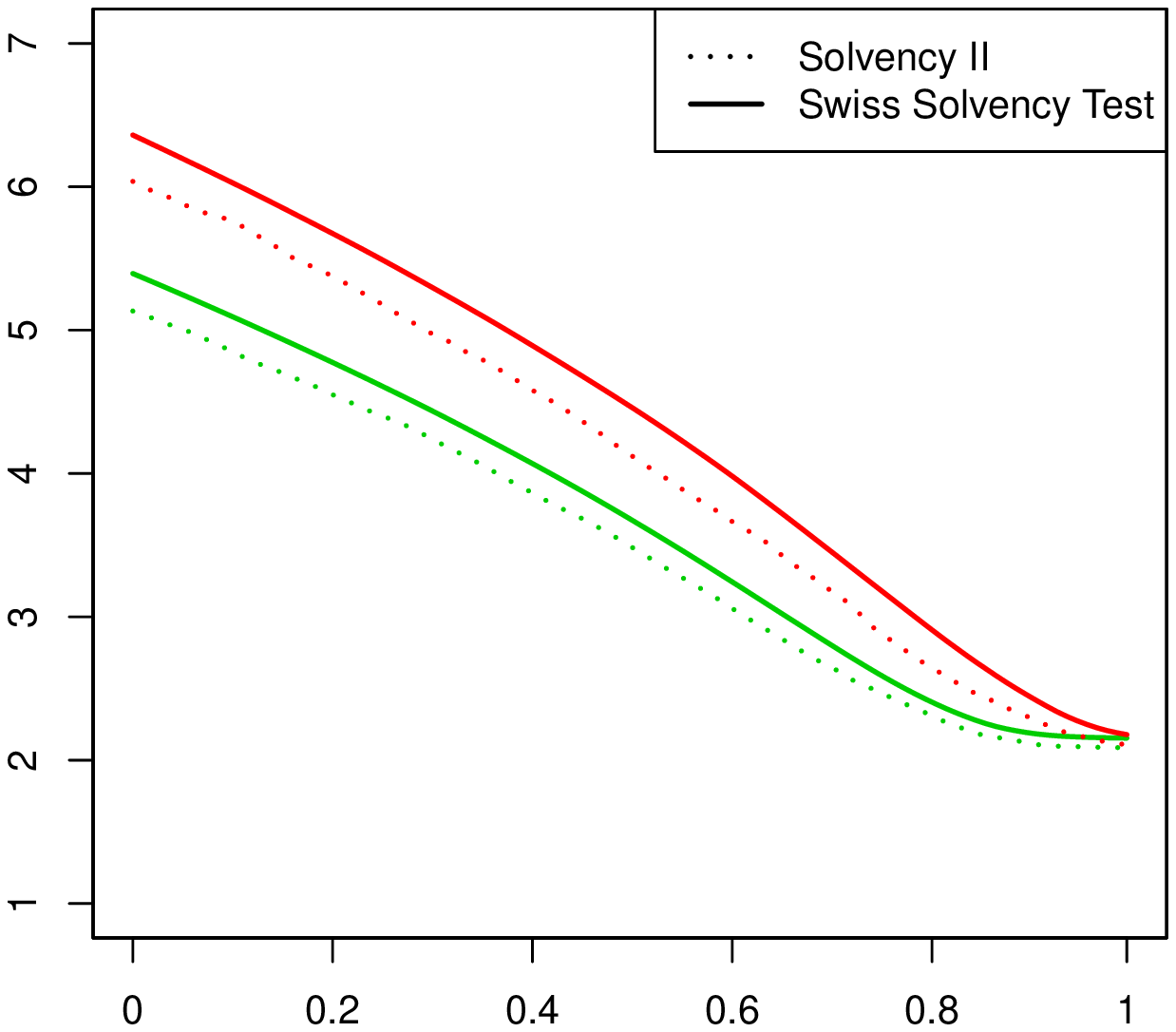}
}
\hspace{1cm}
\subfigure{
\includegraphics[width=0.42\textwidth]{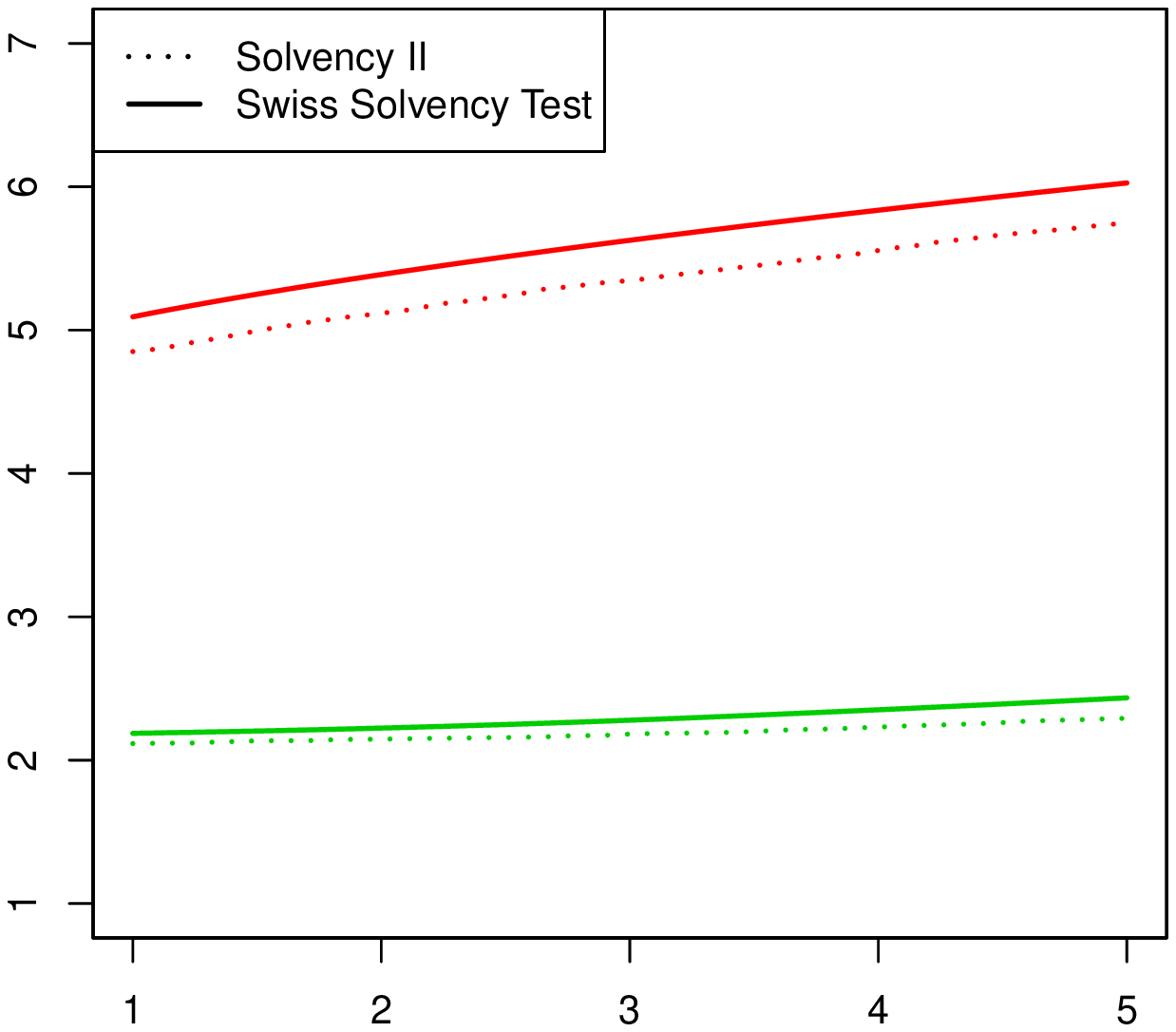}
}
\vspace{-0.5cm}
\caption{The solvency capital requirement $\rho_{reg}(\Delta E_1)$ as a function of $\rho$ (left) for $\tau=1$ (green) and $\tau=5$ (red) and as a function of $\tau$ (right) for $\rho=0.1$ (red) and $\rho=0.9$ (green).}
\label{fig: regulatory standards}
\end{figure}

\paragraph{Recovery-based capital requirements.} For comparison, we consider recovery risk measures with a simple parametric recovery function belonging to the class described in Section~\ref{sect: choice of gamma} with $n=1$. We fix a regulatory level $\alpha\in(0,1)$ and consider a piecewise constant recovery function
\begin{equation}
\gamma(\lambda)=
\begin{cases}
\beta & \mbox{if} \ \lambda\in[0,r)\\
\alpha & \mbox{if} \ \lambda\in[r,1]
\end{cases}
\end{equation}
for suitable $\beta\in(0,\alpha)$ and $r\in(0,1)$. In line with the Solvency II standards we take $\alpha=0.5\%$. The solvency capital requirement induced by the corresponding $\revar$ is given by
\[
\reVaR_\gamma(\Delta E_1,L_1) = \max\{\VaR_\alpha(\Delta E_1),\VaR_\beta(\Delta E_1+(1-r)L_1)\},
\]
the solvency test \eqref{revar_solvency test} is equivalent to
\[
\probp(A_1<L_1)\leq\alpha \ \ \ \mbox{and} \ \ \ \probp(A_1<rL_1)\leq\beta.
\]
Besides controlling the default probability $\probp(A_1<L_1)$ at the pre-specified regulatory level $\alpha$, the recovery risk measure additionally bounds the probability $\probp(A_1<rL_1)$ of covering less than a fraction $r$ of liabilities by a more stringent level $\beta$.

\paragraph{Recovery adjustments.}  In order to capture the extent to which the regulatory solvency capital requirements fail to control the recovery on liabilities we define the {\em recovery adjustment}
\begin{equation}\label{eq:rec_adj}
\RecAdj_\gamma(\Delta E_1,L_1) := \max\left\{\frac{\reVaR_\gamma(\Delta E_1,L_1)}{\rho_{reg}(\Delta E_1)},1\right\}.
\end{equation}
This quantity is the maximum of 1 and the multiplicative factor by which regulatory requirements would have to be adjusted to guarantee the considered recovery levels. Recovery adjustments may also be conveniently expressed as a function of the regulatory level $\beta$ and the recovery rate $r$ as
\[
\RecAdj(\beta,r) := \max\left\{\frac{\max\{\VaR_\alpha(\Delta E_1),\VaR_\beta(\Delta E_1+(1-r)L_1)\}}{\rho_{reg}(\Delta E_1)},1\right\}.
\]
\noindent
 We consider the {\em aggregate recovery adjustment}\footnote{We refer to Section \ref{sect: plots appendix} in the appendix for several plots of recovery adjustments for specific choices of $\beta$ and $r$. These confirm the findings that are observed on an aggregate level.}
\[
\AggRecAdj := \int_{\beta_{min}}^{\beta_{max}}\int_{r_{min}}^{r_{max}}\RecAdj(\beta,r)d\beta dr
\]
with $(\beta_{min},\beta_{max})=(0.1\%,0.25\%)$ and $(r_{min},r_{max})=(80\%,90\%)$. Apart from a normalization constant, this quantity corresponds to the average recovery adjustment over the chosen range of the recovery parameters $\beta$ and $r$. Figure~\ref{fig: aggregate recovery} displays the aggregate recovery adjustment as a function of the correlation between assets and liabilities and of the size of the liability tail.

\begin{figure}[t]
\vspace{-0.8cm}
\centering
\subfigure{
\includegraphics[width=0.42\textwidth]{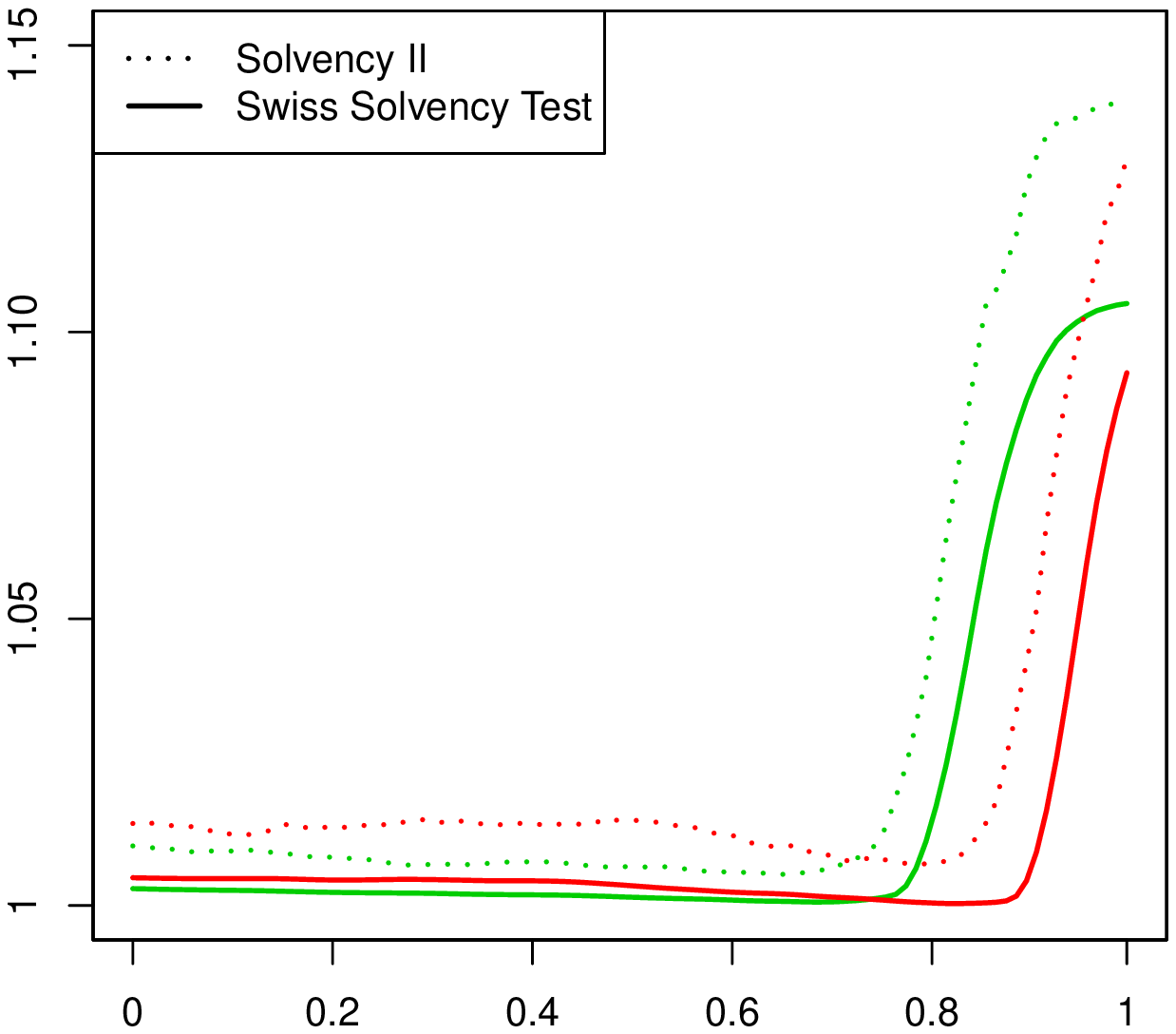}
}
\hspace{1cm}
\subfigure{
\includegraphics[width=0.42\textwidth]{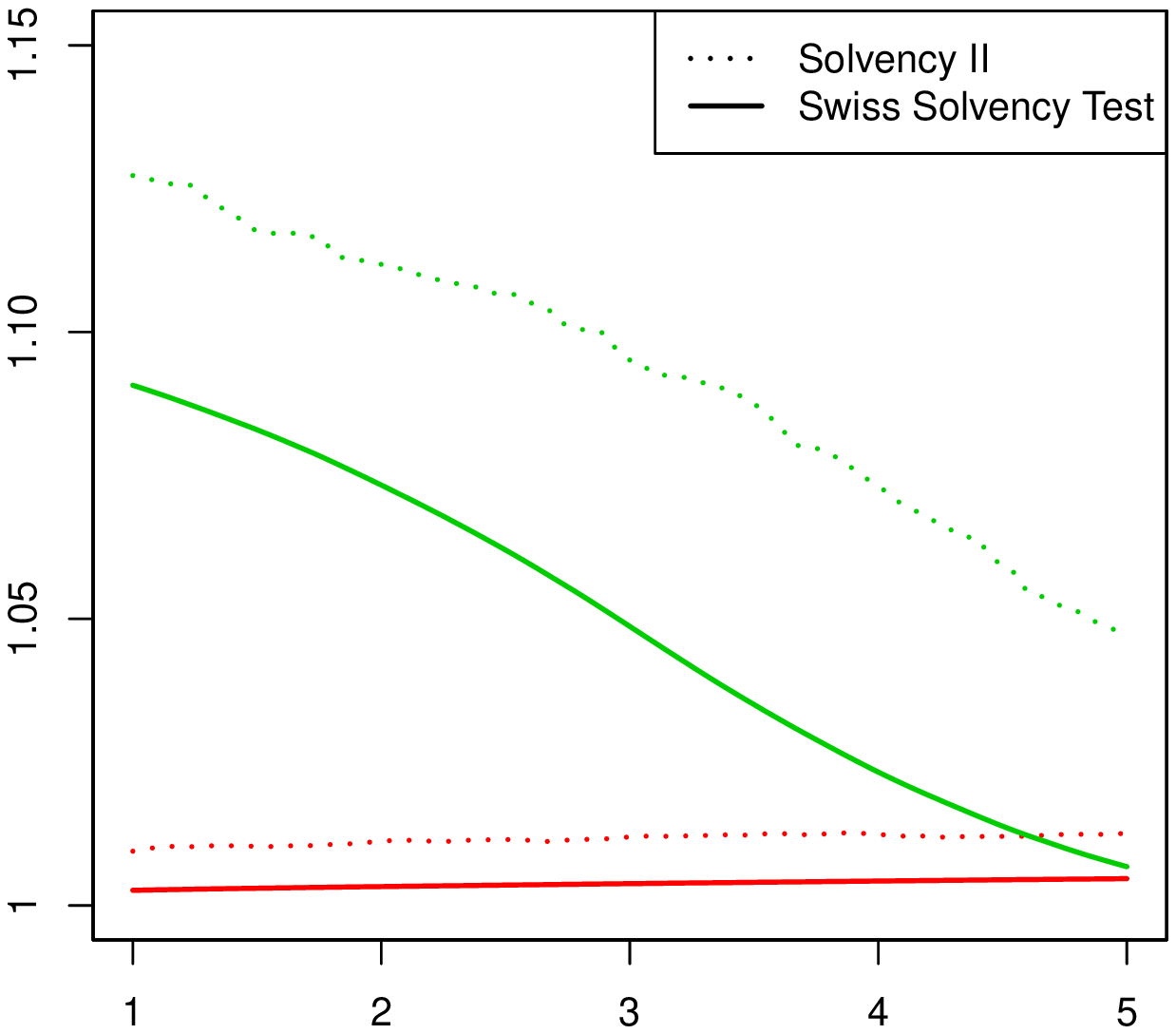}
}
\vspace{-0.5cm}
\caption{The aggregate recovery adjustment $\AggRecAdj$ as a function of $\rho$ (left) for $\tau=1$ (green) and $\tau=5$ (red) and as a function of $\tau$ (right) for $\rho=0.1$ (red) and $\rho=0.9$ (green).}
\label{fig: aggregate recovery}
\end{figure}

\paragraph{Observations.} The qualitative behavior of recovery adjustments can be described as follows:
\begin{enumerate}
    \item Recovery adjustments are typically larger than $1$, indicating that regulatory solvency requirements are too low to fulfill the target recovery-based solvency condition.
    \item The size of the recovery adjustment depends on the recovery-based regulatory level $\beta$ and the recovery rate $r$ as expected: It is larger if $\beta$ is lower (a tighter constraint on the recovery probability) and if $r$ is higher (a larger portion of liabilities to recover). This relation holds across liability tail sizes and correlation levels but is more pronounced in the presence of lighter liability tails and higher correlations.
    \item For sufficiently large correlation levels, recovery adjustments are increasing functions of the correlation level between assets and liabilities, suggesting that the failure of regulatory capital requirements to control the recovery on liabilities is more pronounced in the presence of large correlation levels. This relation holds across all liability tail sizes.
    \item For sufficiently light tails, recovery adjustments are decreasing functions of the liability tail size, suggesting that the failure of regulatory capital requirements to control the recovery on liabilities is stronger in the presence of lighter liability tails. This relation holds across different correlation levels.
    \item These observations hold for both regulatory frameworks under investigation. In comparison, recovery adjustments in the Swiss Solvency Test are lower than those induced in Solvency II under our distributional specifications.
\end{enumerate}
Our observations demonstrate the importance of recovery risk measures from a risk management perspective. First, we observe that, under standard distributional assumptions, there may exist a considerable gap between the standard risk measures used in practice and our reference recovery risk measure. Second, this gap tends to be wider in the presence of lighter liability tails and higher levels of correlation between assets and liabilities. In situations when assets appear to better hedge liability claims, standard risk measures show lower ability to control recovery rates.

\subsubsection{Sophisticated Asset-Liability-Management}
\label{sect: maximal rec adj}

In this section we consider a firm with a stylized balance sheet. Assets are deterministic, but the firm is capable of controlling the shape of the liability distribution in a sophisticated way. Our case study provides another perspective on the failure of standard solvency regulation to control recovery and highlights that this deficiency might be associated with large recovery adjustments. For detailed calculations we refer to Section \ref{sect: preliminary supplementary section maximal rec adj}.

\paragraph{Distribution of assets and liabilities.}  We assume that assets evolve in a deterministic way with $A_1$ being equal to a constant $k>0$. The future value of liabilities $L_1$ follows a probability density function with two peaks as displayed in Figure \ref{fig: peaked density}. The probability of falling in the light tail peak is equal to $99.5\%$ and that of falling in the heavy tail peak is equal to $0.5\%$.\footnote{The explicit expression of the density function is provided in Section \ref{sect: preliminary supplementary section maximal rec adj} in the appendix.}

\paragraph{Regulatory capital requirements.}  In line with Solvency II and the Swiss Solvency Test, we focus on VaR at level $0.5\%$ and AVaR at level $1\%$. The chosen regulatory risk measure is denoted by $\rho_{reg}$. The corresponding solvency capital requirements admit analytic solutions:
\[
\rho_{reg}(\Delta E_1)=
\begin{cases}
\VaR_{0.5\%}(\Delta E_1) = a-k+E_0,\\
\ES_{1\%}(\Delta E_1) = \Big(\frac{1}{2}-\frac{1}{3}\sqrt{\frac{\alpha}{2(1-\alpha)}}\Big)a+\frac{b+c}{4}-k+E_0.
\end{cases}
\]
The capital requirements based on $\VaR$ are  blind to all liability payments beyond the first peak while capital requirements based on $\avar$ react to the entire distribution of liabilities.

\paragraph{Recovery-based capital requirements.} Fixing a regulatory level $\alpha\in(0,1)$,  consider a piecewise constant recovery function belonging to the class described in Section~\ref{sect: choice of gamma} with $n=1$:
\begin{equation}\label{eq:gammafctn}
\gamma(\lambda)=
\begin{cases}
\beta & \mbox{if} \ \lambda\in[0,r)\\
\alpha & \mbox{if} \ \lambda\in[r,1]
\end{cases}
\end{equation}
with $\beta\in(0,\alpha)$, $r\in(0,1)$, and $\alpha=0.5\%$. The choice of $\alpha$ is motivated by the standards implemented in Solvency II.
The solvency capital requirement corresponding to $\revar$ is
\[
\reVaR_\gamma(\Delta E_1,L_1) = \max\{\VaR_\alpha(\Delta E_1),\VaR_\beta(\Delta E_1+(1-r)L_1)\} = \max\left\{a,r\frac{b+c}{2}\right\}-k+E_0
\]
and depends on the entire distribution of liabilities. This paralles $\avar$, but tail risk is captured in a more sophisticated way: The recovery level $r$ determines the relative importance of the peaks of the liability distribution.

\begin{figure}[t]
\vspace{-0.8cm}
\centering
\includegraphics[width=0.4\textwidth]{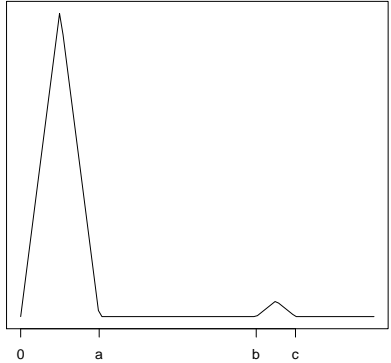}
\vspace{-0.5cm}
\caption{Qualitative plot of the probability density function of $L_1$. The area below the left peak equals $99.5\%$ while the area below the right peak equals $0.5\%$.}
\label{fig: peaked density}
\end{figure}

\paragraph{Recovery adjustments.}
As in Section~\ref{sec:incs} we consider recovery adjustments as introduced in \eqref{eq:rec_adj}. We will answer the question how large the recovery adjustments may become, if a firm's asset-liability-management is constrained by the following conditions:
\begin{center}
\begin{tabular}{|c|l|c|}
\hline
 (1) & Solvent profile under $\rho_{reg}$ & $\rho_{reg}(E_1)\leq0$ \\ \hline
 (2) & Capital requirement under $\rho_{reg}$  & $\rho_{reg}(\Delta E_1)>0$  \\ \hline
 (3) & Solvent profile under $\revar_\gamma$ & $\revar_\gamma(E_1,L_1)\leq0$  \\ \hline
 (4) & Capital requirement under $\revar_\gamma$ & $\revar_\gamma(\Delta E_1,L_1)>0$  \\ \hline
(5) & $\VaR_\alpha$ insufficient to control claims recovery &  $\revar_\gamma(E_1,L_1)>\VaR_\alpha(E_1)$  \\ \hline
(6) & Range of admissible regulatory solvency ratios & $s_{min}\leq\frac{E_0}{\rho_{reg}(\Delta E_1)}\leq s_{max}$\\
\hline
\end{tabular}
\end{center}
More precisely, we focus on the optimization problem
\begin{equation}\label{eq: optimal rec adj}
\max \; \RecAdj_\gamma(\Delta E_1,L_1)
\mbox{ \ \ over~}
 A_1 \ \mbox{and} \ L_1 \ \mbox{as specified above}
 \end{equation}
under the constraints (1) to (6). The solvency ratios in (6) are in practice typically in the range between $1.2$ and $3$.\footnote{In general, we assume that $1<s_{min}<s_{max}$.}

It turns out that in many cases the highest admissible recovery adjustment  coincides with $s_{\max}$, i.e., capital requirements under $\revar$ can be as large as $s_{\max}$ times the capital requirements under $\VaR$ or $\avar$. This implies that the difference between the current capital requirements and their recovery-based versions in our asset-liability setting may be substantial. The next proposition states this in detail.

\begin{prop}
\label{prop: optimal rec adj}
The optimal value of \eqref{eq: optimal rec adj} is bounded from above by
$s_{max}.$
If $\rho_{reg}=\VaR_{0.5\%}$, this upper bound is attained for every choice of $\gamma$. If $\rho_{reg}=\avar_{1\%}$, this upper bound is attained for special choices of $\gamma$, e.g., when $\beta\geq\frac{\alpha}{2}$ and
$
\frac{1}{4}\sqrt{\frac{2\alpha}{\alpha-\beta}} < r \leq \frac{1}{4}\frac{\sqrt{2\alpha}}{\sqrt{2\alpha}-\sqrt{\alpha-\beta}}\frac{1}{\frac{1}{2}+\frac{1}{3}\sqrt{\frac{\alpha}{2(1-\alpha)}}}.
$\footnote{For instance, if $\beta=\frac{\alpha}{2}$, then we can take $50\%<r\leq95\%$.}
\end{prop}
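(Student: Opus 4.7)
The plan is to split the argument into an easy upper bound that uses only the balance-sheet constraints, and a more delicate construction of extremal balance sheets that realize this bound in the two regulatory regimes.

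\emph{Upper bound $s_{max}$.} By cash invariance of $\rho_{reg}$ and of $\revar_\gamma$ in the first argument (Proposition \ref{prop:elementvar}(i)), constraints (1) and (3) translate into $\rho_{reg}(\Delta E_1)\leq E_0$ and $\revar_\gamma(\Delta E_1,L_1)\leq E_0$. Combined with the lower bound $\rho_{reg}(\Delta E_1)\geq E_0/s_{max}$ from (6), this immediately yields
\[
\RecAdj_\gamma(\Delta E_1,L_1)=\max\!\left\{\frac{\revar_\gamma(\Delta E_1,L_1)}{\rho_{reg}(\Delta E_1)},1\right\}\leq \frac{E_0}{E_0/s_{max}}=s_{max}.
\]

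\emph{Attainment when $\rho_{reg}=\VaR_{0.5\%}$.} The aim is to exhibit admissible parameters $(a,b,c,k,E_0)$ for which all six constraints hold with $\RecAdj_\gamma=s_{max}$. Using the closed-form expressions preceding the statement, I would impose the two equations
\[
a-k+E_0=\tfrac{E_0}{s_{max}},\qquad \max\!\left\{a,r\tfrac{b+c}{2}\right\}-k+E_0=E_0,
\]
i.e.\ the regulatory capital requirement equals $E_0/s_{max}$ and the recovery-based capital requirement equals $E_0$. Solving the first for $a$ (giving $a=k-E_0(s_{max}-1)/s_{max}$) and picking $b,c$ with $r(b+c)/2=k>a$ forces the maximum in the recovery expression to be realized by the heavy-tail peak. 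One then chooses $k$ sufficiently large, and $b<c$ sufficiently separated, to keep the density two-peaked with valid supports and to ensure (2), (4), (5) hold strictly; constraints (1), (3) and the lower bound in (6) hold automatically by construction.

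\emph{Attainment when $\rho_{reg}=\avar_{1\%}$.} The same strategy applies, but the $\avar$ formula couples $a$ with the pair $(b,c)$ through the term $(b+c)/4$, so the two matching equations are
\[
\bigl(\tfrac{1}{2}-\tfrac{1}{3}\sqrt{\tfrac{\alpha}{2(1-\alpha)}}\bigr)a+\tfrac{b+c}{4}-k+E_0=\tfrac{E_0}{s_{max}},\qquad \max\!\left\{a,r\tfrac{b+c}{2}\right\}-k+E_0=E_0.
\]
The plan is again to put the maximum on the heavy-tail branch, set $r(b+c)/2=k$, solve for $(b+c)$ and then for $a$, and finally verify nonnegativity of $a$ and compatibility with the two-peak support. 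The technical point is that this last verification requires $r$ to be neither too small (otherwise $a$ produced by the $\avar$ equation exceeds $r(b+c)/2$, violating the assumption on which branch of the max is active) nor too large (otherwise $a$ is forced negative or the supports of the two peaks overlap). Tracking these two thresholds is exactly the origin of the double inequality $\tfrac{1}{4}\sqrt{\tfrac{2\alpha}{\alpha-\beta}}<r\leq\tfrac{1}{4}\tfrac{\sqrt{2\alpha}}{\sqrt{2\alpha}-\sqrt{\alpha-\beta}}\bigl(\tfrac{1}{2}+\tfrac{1}{3}\sqrt{\tfrac{\alpha}{2(1-\alpha)}}\bigr)^{-1}$ appearing in the statement, and the condition $\beta\geq\alpha/2$ ensures that the interval is nonempty and that the recovery constraint $\revar_\gamma(E_1,L_1)\leq0$ reduces to checking only the branch that has been activated. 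This bookkeeping is the main obstacle; I would handle it by solving the two linear equations symbolically for $a$ and $b+c$ in terms of $r$, $k$, $E_0$, $s_{max}$, and then reading off the admissibility range of $r$ from the conditions $a\geq 0$ and $r(b+c)/2\geq a$, which leads directly to the stated bounds.
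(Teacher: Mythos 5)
Your upper-bound argument and your $\VaR$-case construction are essentially correct and in the same spirit as the paper's proof: the paper also reduces the problem via the closed-form expressions, saturates the recovery-based requirement (taking $k=rq_\beta(b,c)$) and pushes the regulatory requirement down to $E_0/s_{max}$ via the solvency-ratio constraint; your direct derivation of the bound $s_{max}$ from constraints (2), (3) and (6) is, if anything, cleaner. One small caution in the $\VaR$ case: you want $b$ and $c$ large but \emph{not} too far apart (so that $a=rq_\beta(b,c)-\frac{s_{max}-1}{s_{max}}E_0<b$ still holds); ``sufficiently separated'' points in the wrong direction, though this is easily repaired.

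The genuine problems are in the $\avar$ case, which is exactly where the stated conditions on $\beta$ and $r$ come from. First, you replace $\VaR_\beta(-L_1)$ by the midpoint $\frac{b+c}{2}$ of the heavy peak; this is only valid for $\beta=\frac{\alpha}{2}$. For $\beta\geq\frac{\alpha}{2}$ the correct quantile is $q_\beta(b,c)=(1-\lambda)b+\lambda c$ with $\lambda=\sqrt{\frac{\alpha-\beta}{2\alpha}}$, and the $\beta$-dependence of the admissible $r$-interval enters precisely through $\lambda$ (lower bound $\frac{1}{4\lambda}=\frac{1}{4}\sqrt{\frac{2\alpha}{\alpha-\beta}}$, upper bound $\frac{1}{4(1-\lambda)(1-\xi)}$ with $\xi=\frac{1}{2}-\frac{1}{3}\sqrt{\frac{\alpha}{2(1-\alpha)}}$). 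With your two equations involving only $b+c$, ``solving for $b+c$ and then for $a$'' cannot reproduce these bounds: you must keep $b$ and $c$ as separate degrees of freedom, since the $\avar$ formula involves $\frac{b+c}{4}$ while the recovery term involves the different combination $(1-\lambda)b+\lambda c$. Second, your attribution of the two thresholds is reversed. Since $a=\xi^{-1}\big(rq_\beta(b,c)-\frac{b+c}{4}-\frac{s_{max}-1}{s_{max}}E_0\big)$ is increasing in $r$, it is \emph{small} $r$ that forces $a\leq0$ (violating the support condition $0<a<b$), which is the source of the lower bound on $r$; and it is \emph{large} $r$ that makes $a$ overshoot $rq_\beta(b,c)$ (violating constraint (5), i.e.\ the branch of the maximum you activated), which is the source of the upper bound. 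Finally, $\beta\geq\frac{\alpha}{2}$ does not ``ensure the interval is nonempty'' --- the interval is empty inside $(0,1)$ once $\beta$ is close to $\alpha$ (e.g.\ $\beta\geq\frac{7\alpha}{8}$); its role is only to select the branch $\lambda\leq\frac{1}{2}$ of the quantile formula on which the paper's computation is carried out. As written, the deferred ``bookkeeping'' would therefore not deliver the stated conditions; carrying it out with $q_\beta(b,c)$ and the correct roles of the two constraints is precisely what the paper's proof does.
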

\begin{proof}
See Section \ref{sect: supplementary section maximal rec adj}.
\end{proof}

The preceding result suggests that companies subject to capital requirements based on $\VaR$ and $\avar$ may be far from guaranteeing acceptable recovery rates on their creditors' claims. In the $\VaR$ case, this is a consequence of tail blindness, which, in the absence of external controls, allows companies to accumulate tail risk without any regulatory cost. Also in the $\avar$ case this problem does not disappear because increased tail risk may often be compensated by a suitable shift in the asset distribution or in the body of the liability distribution. In our example, a more dispersed distribution beyond the $99.5\%$ quantile may leave the $\avar$ unchanged provided the distribution within the same quantile level shrinks.\footnote{This phenomenon refers to the lack of {\em surplus invariance} and was studied in \ci{koch2016unexpected}.}


\subsection{Calibrating the Recovery Function}\label{ctrf}

When regulatory solvency standards in practice are modified and improved, the old regulatory framework is often used as a benchmark for the new one. New and old requirements will, of course, differ for many distributions of assets and liabilities at the considered time horizon, and many companies might experience corresponding changes in solvency requirements. For this reason, a common approach in practice is to calibrate the new standards in such a way that they produce the same solvency requirement for a prototypical benchmark company. The rational behind this strategy is to ensure some form of continuity in the sense that benchmark firms are not too much affected over short time horizons. At the same time, regulatory standards are ideally modified in such a way that their new design is more efficient in achieving key regulatory goals in the long run. Choosing a benchmark balance sheet for calibration naturally remains a political decision.

In this section, we explain in the context of an example how recovery risk measures could be calibrated to existing regulatory standards. We begin by recalling the transition from Basel II to Basel III. Basel II was based on $\VaR$ at level $\alpha=1\%$ while the new Basel III has adopted $\avar$ at level $\beta=2.5\%$. The choice of $\beta=2.5\%$ was justified a) by assuming in a benchmark model that changes in net asset value $\Delta E_1$ are normally distributed and b) by requiring\footnote{We refer to \ci{PELVEwang} for a general study on calibration of $\var$ and $\avar$.}
$$
\ES_\beta(\Delta E_1) \approx \VaR_\alpha(\Delta E_1).
$$
The new regulatory level equates capital requirements of normally distributed positions for old and new standards. In the same spirit, we describe how to calibrate the recovery level function of $\revar$. A challenge is that we deal with a function $\gamma$ instead of a single parameter $\beta$ as well as with a pair of random variables, $E_1-E_0$ and $L_1$, instead of just one random variable, $E_1-E_0$. The aim is to equate, for a given level $\alpha\in(0,1)$ close to zero in the case that $\Delta E_1$ is normally distributed,
\begin{equation}\label{eq:goal}
\reVaR_\gamma(\Delta E_1,L_1) = \VaR_\alpha(\Delta E_1).
\end{equation}

\noindent {\bf Definition of a benchmark.} \ The choice of a benchmark is a political decision of the regulator. In our recovery-based setting, we consider a particularly simple choice. As discussed before, we assume that $\Delta E:=E_1-E_0$ is normally distributed with mean $\mu_{\Delta E}\in\R$ and standard deviation $\sigma_{\Delta E}>0$. In addition, we suppose that $L:=L_1$ is normally distributed with mean $\mu_L\in\R$ and standard deviation $\sigma_L>0$ and is independent of $\Delta E$. The latter assumption is not meant to capture realistic balance sheets, but is simply chosen for illustration as it leads to explicit calculations. By independence, for every $\lambda\in[0,1]$ the random variable $\Delta E+(1-\lambda)L$ is also normal with mean $\mu_{\Delta E}+(1-\lambda)\mu_L$ and standard deviation $\sqrt{\sigma^2_{\Delta E}+(1-\lambda)^2\sigma^2_L}$. We denote by $\Phi$ the distribution function of a standard normal random variable. (Note that a positive random variable like $L$ cannot have a normal distribution. In practice, this can be taken into account by imposing the condition $\Phi(-\frac{\mu_L}{\sigma_L}) = \probp(L<0)\leq\varepsilon$ for a sufficiently small $\varepsilon>0$).

\noindent {\bf Deriving the level function.} \ We seek a function $\gamma$ such that \eqref{eq:goal} holds. A sufficient requirement is that $\VaR_{\gamma(\lambda)}(\Delta E+(1-\lambda)L) = \VaR_\alpha(\Delta E)$ for every $\lambda\in[0,1]$, or equivalently
\[
-(1-\lambda)\mu_L-\sqrt{\sigma^2_{\Delta E}+(1-\lambda)^2\sigma^2_L}\Phi^{-1}(\gamma(\lambda)) =
-\sigma_{\Delta E}\Phi^{-1}(\alpha), \quad  \lambda\in[0,1].
\]
Solving for $\gamma$ gives
\[
\gamma(\lambda) =  \Phi\left(\frac{\sigma_{\Delta E}\Phi^{-1}(\alpha)-(1-\lambda)\mu_L}{\sqrt{\sigma^2_{\Delta E}+(1-\lambda)^2\sigma^2_L}}\right), \quad  \lambda\in[0,1],
\]
with $\gamma(1)=\alpha$. This function might be inconsistent with the requirements in Definition~\ref{def:revar}, namely with $\gamma$ being increasing. A potential remedy to this problem could be to modify the choice of $\gamma$ as follows. Since our solution for $\gamma$ is differentiable with respect to $\lambda$, by taking derivatives, its increasing part can easily be characterized by \[
\gamma'(\lambda)\geq0 \ \iff \ (1-\lambda)\Phi^{-1}(\alpha)\sigma_L^2+\mu_L\sigma_{\Delta E}\geq0, \quad \lambda\in(0,1).
\]
Here, we have used that $\Phi^{-1}(\alpha)<0$ because $\alpha$ is assumed to be close to zero. Hence, $\gamma$ is increasing on the interval $[\lambda^\ast,1]$ where
\[
\lambda^\ast := \max\left\{1+\frac{\mu_L\sigma_{\Delta E}}{\sigma_L^2\Phi^{-1}(\alpha)},0\right\} < 1.
\]
If $\lambda^\ast=0$, the function $\gamma$ is increasing on the whole interval $[0,1]$. Otherwise, we compute
\[
\gamma(\lambda^\ast) = \Phi\left(-\sqrt{\Phi^{-1}(\alpha)^2+\frac{\mu_L^2}{\sigma_L^2}}\right).
\]
and redefine $\gamma$ as follows:
\[
\gamma(\lambda) :=
\begin{cases}
\Phi\left(-\sqrt{\Phi^{-1}(\alpha)^2+\frac{\mu_L^2}{\sigma_L^2}}\right) & \mbox{if} \ \lambda\in[0,\lambda^\ast),\\
\Phi\left(\frac{\sigma_{\Delta E}\Phi^{-1}(\alpha)-(1-\lambda)\mu_L}{\sqrt{\sigma^2_{\Delta E}+(1-\lambda)^2\sigma^2_L}}\right) & \mbox{if} \ \lambda\in[\lambda^\ast,1].
\end{cases}
\]
\noindent Finally, if a piecewise-constant recovery function is sought, see Section \ref{sect: choice of gamma}, a suitable approximation of $\gamma$ may be chosen. This provides a strategy to calibrate recovery risk measures to existing regulatory standards.


\subsection{Performance Measurement}\label{sec:performance}

An important issue that is closely related to solvency capital requirements is performance measurement. We show that this task can be implemented on the basis of recovery risk measures. A popular metric in practice is the \emph{return on risk-adjusted capital} (RoRaC). Consider a recovery risk measure $\rerho$ that is subadditive and positively homogeneous as defined in Section~\ref{sec:mathstruc}. The associated RoRaC is defined by
$$
{\rm RoRaC} (\Delta E_1, L_1) \; := \; \frac{\E(\Delta E_1)}{\rerho(\Delta E_1, L_1)}.
$$
This quantity measures the \emph{expected return} \emph{per unit of economic capital} expressed in terms of the risk measure $\rerho$. The goal of the firm is to improve its RoRaC. We assume that the company is composed of different subentities labelled $i=1,\dots, n$. The central management may impose risk limits and adjust the size of different business units. A key question is which allocation of economic capital to business units and corresponding performance measurements provide appropriate information to improve the overall performance of the firm. We denote the net asset values and liabilities of the subentities at time $t=0,1$ by $E^i_t$ and $L^i_t$ for $i=1,\dots,N$, respectively. Note that for $t=0,1$
\[
L_t= \sum_{i=1}^N L_t^i, \ \ \ E_t= \sum_{i=1}^N E_t^i.
\]
The subadditivity of $\rerho$ implies
$$
\rerho (\Delta E_1, L_1 )  \;   \leq   \;  \sum _{i=1}^N \rerho (\Delta E^i_1, L^i_1 ).
$$
We seek an allocation of economic capital $\kappa^i:= \rerho^{\Delta E_1, L_1 } (\Delta E ^i_1, L^i_1 )$, $i=1,\dots,N$, satisfying:
\begin{itemize}
\item {\em Full allocation}: $\sum_{i=1}^N \kappa^i = \rerho (\Delta E_1, L_1 )$;
\item {\em Diversification}: $\kappa^i \leq \rerho (\Delta E^i_1, L^i_1 )$ for all $i= 1,\dots, N$;
\item {\em RoRaC-compatibility}: If for some $i=1,\dots,N$ we have
\[
{\rm RoRaC}^i := \frac{\E(\Delta E^i_1)}{\kappa^i} \; > \; {\rm RoRaC} (\Delta E_1,L_1) \quad  (\mbox{resp. } <),
\]
then there exists $\varepsilon >0$ such that for every $h\in(0,\varepsilon)$
\[
{\rm RoRaC}(\Delta E_1+h\Delta E^i_1,L_1+h L^i_1) \; > \; {\rm RoRaC}(\Delta E_1,L_1)   \quad  (\mbox{resp. } <).
\]
\end{itemize}

\noindent The full allocation property requires that the entire solvency capital is allocated to the individual subentities. The diversification property specifies that no more capital is allocated to the individual subentities than their stand-alone solvency capital, taking beneficial diversification effects into account, which is feasible due to the subadditivity of $\rerho$. Finally, RoRaC-compatibility guarantees that performance measurement based on the chosen capital allocation provides the correct information to the management of the firm to improve the overall performance of the firm. To be more precise, if the performance of subentity $i$ --- as captured by ${\rm RoRaC}^i$ --- is better than the overall RoRaC, the performance of the entire firm can be improved by growing subentity $i$. An allocation fulfilling the above three properties is called a \emph{suitable allocation}.

The existence of suitable allocations has been extensively studied in the literature, see, e.g., \ci{tasche1999risk}, \ci{tasche2004allocating}, \ci{kalkbrener2005axiomatic}, \ci{tasche2007capital}, \ci{dhaene2012optimal}, \ci{bauer2013capital}, and the general review by \ci{bauer2020}. It follows from the general results in these papers that the only suitable allocation in the above sense is the {\em Euler allocation}
$$
\kappa^i  \; = \;  \frac{d}{dh} \rerho(\Delta E_1+h\Delta E^i_1,L_1+h L^i_1)_{|_{h=0}}
$$
for all $i=1,\dots, N$. In the specific case of $\rerho=\reavar$ with a simple piecewise constant level function, the Euler allocation can explicitly be computed.

\begin{prop}\label{prop:euler for reavar}
Let $\gamma$ be defined as in \eqref{eq: parametric gamma} and let $j\in\{1,\dots,N\}$ satisfy
\[
\avar_{\alpha_j}(\Delta E_1+(1-r_j)L_1) > \max_{i=1,\dots,N,\,\,i\neq j}\avar_{\alpha_i}(\Delta E_1+(1-r_i)L_1).
\]
Under suitable assumptions on the joint distribution of $\Delta E_1^1+(1-r_j)L_1^1,\dots,\Delta E_1^N+(1-r_j)L_1^N$ (see Section \ref{proof:euler for reavar}), the Euler allocation based on $\reavar_\gamma$ is given for every $i=1,\dots,N$ by
\[
\kappa^i \; = \; -E\big(\Delta E^i_1+(1-r_j)L^i_1 \, | \, E_1+(1-r_j)L_1\leq -\var_{\alpha_j}(E_1+(1-r_j)L_1)\big).
\]
\end{prop}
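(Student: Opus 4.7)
The plan is to combine Proposition~\ref{prop: parametric gamma avar} with the well-known differentiation formula for Average Value at Risk along a direction. First, applying Proposition~\ref{prop: parametric gamma avar} to the perturbed position yields
\[
\reavar_\gamma(\Delta E_1+h\Delta E^i_1,L_1+h L^i_1) \;=\; \max_{k=1,\dots,n+1}\avar_{\alpha_k}\!\big((\Delta E_1+h\Delta E^i_1)+(1-r_k)(L_1+h L^i_1)\big).
\]
Each map $h\mapsto \avar_{\alpha_k}\!\big((\Delta E_1+h\Delta E^i_1)+(1-r_k)(L_1+h L^i_1)\big)$ is continuous in $h$ (e.g.\ by cash invariance plus continuity of $\avar$ in $L^1$). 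The strict maximality assumption $\avar_{\alpha_j}(\Delta E_1+(1-r_j)L_1) > \max_{k\neq j}\avar_{\alpha_k}(\Delta E_1+(1-r_k)L_1)$ then implies, by continuity, that there exists a neighborhood of $h=0$ on which the outer maximum is attained uniquely at index $j$. Consequently, on this neighborhood,
\[
\reavar_\gamma(\Delta E_1+h\Delta E^i_1,L_1+h L^i_1) \;=\; \avar_{\alpha_j}\!\big(\Delta E_1+h\Delta E^i_1+(1-r_j)(L_1+h L^i_1)\big).
\]

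Next, I would differentiate the right-hand side at $h=0$. The classical directional differentiation formula for $\avar$ (see, e.g., the results used in Tasche's work on Euler allocation) asserts that for a reference position $X$ with a sufficiently regular distribution near its $\alpha_j$-quantile and a perturbation $Z\in L^1$,
\[
\frac{d}{dh}\avar_{\alpha_j}(X+hZ)\Big|_{h=0} \;=\; -\,\E\!\big(Z \,\big|\, X\leq -\var_{\alpha_j}(X)\big).
\]
Applying this with $X=\Delta E_1+(1-r_j)L_1$ and $Z=\Delta E^i_1+(1-r_j)L^i_1$ produces exactly the claimed expression for $\kappa^i$, after noting that by cash invariance $\{\Delta E_1+(1-r_j)L_1\leq -\var_{\alpha_j}(\Delta E_1+(1-r_j)L_1)\}$ and $\{E_1+(1-r_j)L_1\leq -\var_{\alpha_j}(E_1+(1-r_j)L_1)\}$ coincide, so that the conditioning event can be written in terms of $E_1$ as stated.

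The main obstacle is identifying and justifying the ``suitable assumptions'' needed for the $\avar$-derivative formula to hold. The standard sufficient condition is that the joint distribution of $(\Delta E^1_1+(1-r_j)L^1_1,\dots,\Delta E^N_1+(1-r_j)L^N_1)$ admits a (jointly) sufficiently smooth density so that the distribution function of the aggregate $\Delta E_1+(1-r_j)L_1$ is continuously differentiable in a neighborhood of its $\alpha_j$-quantile and the quantile itself depends smoothly on perturbations; this yields differentiability of the quantile map and, via the Lebesgue integral representation of $\avar$, the conditional-expectation formula above. I would state these assumptions explicitly (paralleling the hypotheses used in \ci{tasche1999risk} and \ci{tasche2004allocating}) and verify interchange of differentiation and integration using dominated convergence. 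The remaining steps --- continuity of the maximum and application of the chain rule in the direction $(\Delta E^i_1, L^i_1)$ --- are routine.
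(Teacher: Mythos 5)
Your proposal is correct and follows essentially the same route as the paper: localize the finite maximum from Proposition~\ref{prop: parametric gamma avar} to the single index $j$ on a neighborhood of $h=0$ via the strict-dominance hypothesis and continuity (the paper uses convexity of each $h\mapsto\avar_{\alpha_k}(\cdot)$ to get continuity), and then invoke Tasche's directional-derivative formula for $\avar$ under his (S)-Assumption, which is exactly the paper's ``suitable assumptions'' on the joint distribution. Your explicit remark that cash invariance lets one replace $\Delta E_1$ by $E_1$ in the conditioning event is a detail the paper leaves implicit, but otherwise the arguments coincide.
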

\begin{proof}
See Section \ref{proof:euler for reavar}.
\end{proof}
In summary, performance measurement inside firms can be based on recovery risk measures. Notions such as return on risk-adjusted capital (RoRaC) and RoRaC-compatible allocations may be extended to solvency regimes that control the size of recovery on creditors' claims in the case of default.


\subsection{Portfolio Optimization}\label{sec:optimization}

Risk measures are an important instrument to limit downside risk in portfolio optimization problems. This idea is related to the classical Markowitz problem, originally studied by \ci{Markowitz52}, in which standard deviation quantifies the risk. Efficient frontiers characterize the best tradeoffs between return and risk. In this section, we show how recovery risk measures may successfully be applied to portfolio optimization in practice. Our results extend related contributions on the risk measure $\ES$ --- discussed in \ci{RU00}, \ci{RU02}, and \ci{ZhuFu09} --- to the recovery risk measure $\reavar$.

\noindent {\bf Assets and liabilities.} \ We consider $k=1,\dots,K$ assets whose (ask) prices at dates $t=0,1$ are described by $S^k_t$ and whose random one-period returns are denoted by $R^k$ so that
$$
S^k_1 = S^k_0\cdot(1 + R^k).
$$
We assume that one-period returns have finite expectation. For every $k=1, \dots, K$ an investor invests a fraction $x^k>0$ of his or her total budget $b>0$ into asset $k$ so that $\sum_{k=1}^K x^k =1$. The total asset value at time $t=1$ is thus equal to
$$
b \cdot  \sum_{k=1}^K x^k   (1 + R^k)    \; = \; b \cdot \left( 1 + \sum_{k=1}^K x^k   R^k \right).
$$
We set $\bm{R} = (R^1, \dots , R^K)^\top$ and $\bm{x} = (x^1, \dots, x^K)^\top$. In addition, we suppose that the investor's liabilities at time $t=1$ amount to a random fraction $Z$ of the initial budget, i.e., the liabilities are equal to $bZ$. We assume that liabilities have finite expectation.

\noindent {\bf Efficient frontier.} \ We are interested in optimal combinations of return and downside risk -- the \emph{efficient frontier} -- but with risk measured by $\reavar$ instead of classical risk measures such as standard deviation or $\ES$. This problem can equivalently be stated either as the maximization of return under a risk constraint or as the minimization of risk for a given expected return. We will consider the latter formulation.

\noindent {\bf Expected return.} \ The expected future net asset value of the investor equals
$$
b \, \cdot \,  \left( 1\; + \;\sum_{k=1}^K  \, x^k  \, \E(R^k) \; - \; \E(Z) \right),
$$
implying that a target expected return can be achieved by requiring for some $c\in \bbr$ that
$$
\sum_{k=1}^K \, x^k \,   \E(R^k) \;   = \;  c.
$$

\noindent {\bf Downside risk and a minimax theorem.} \ We are interested in computing and optimizing
$$
\reavar_\gamma \left(  b \cdot \left[ 1 + \sum_{k=1}^K x^k   R^k  - Z  \right]  ,  bZ \right)   =   -b + b \cdot \reavar_\gamma \left(   \sum_{k=1}^K x^k   R^k  - Z ,  Z \right).
$$
We focus on the special case of piecewise-constant recovery functions $\gamma$ introduced in Section \ref{sect: choice of gamma}. In this case, as shown in Proposition \ref{prop: parametric gamma avar}, $\reavar$ is a maximum of finitely many terms involving $\ES$, namely
$$
\reavar_\gamma \left(   \sum_{k=1}^K x^k   R^k  - Z ,  Z \right)
\quad = \quad
\max_{i=1,\dots,n+1}\;\ES_{\alpha_i}\left(\sum_{k=1}^K x^kR^k-r_iZ\right).
$$
For convenience, for every $i=1, \dots ,n+1$ we define the auxiliary function $\Psi^i(\bm{x},\cdot):\R\to\R$ by
$$
\Psi^i(\bm{x},v) = \frac 1 {\alpha_i}  \cdot  E \left( \left[v- \sum_{k=1}^K x^k   R^k - r_i Z \right]^+\right) - v ,
$$
which allows us to write
$$
\reavar_\gamma \left(   \sum_{k=1}^K x^k   R^k  - Z ,  Z \right)
\quad = \quad
\max_{i=1,\dots,n+1}
\;
\min_{v\in\bbr}
\;
\Psi^i (\bm{x}, v) .
$$
This follows from a representation of $\ES$ originally due to \ci{RU00} and \ci{RU02}. Their results show that $\ES$ belongs to the family of divergence risk measures, which also correspond to optimized certainty equivalents, see \ci{BT87} and \ci{BT07}. The following theorem facilitates the computation of an efficient frontier in the case of $\reavar$.

\begin{thm}\label{thm:minimax}
The following minimax equality holds:
$$
\max_{i=1,\dots,n+1}
\;
\min_{v\in\bbr}
\;
\Psi^i (\bm{x}, v)
\; = \;
\min_{v\in\bbr}
\;
\max_{i=1,\dots,n+1}
\;
\Psi^i (\bm{x}, v).
$$
\end{thm}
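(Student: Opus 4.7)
The inequality $\max_{i=1,\dots,n+1} \min_{v\in\bbr} \Psi^i(\bm{x},v) \le \min_{v\in\bbr} \max_{i=1,\dots,n+1} \Psi^i(\bm{x},v)$ is standard weak duality and holds unconditionally, so the only task is the reverse inequality. My plan is to lift the discrete $\max_i$ to a supremum over the probability simplex and then invoke Sion's minimax theorem. Set $\Delta_{n+1} := \{p\in\R_+^{n+1} : \sum_i p_i = 1\}$ and define
\[
\tilde{\Psi}(v,p) \; := \; \sum_{i=1}^{n+1} p_i \, \Psi^i(\bm{x},v).
\]
Because $\tilde{\Psi}(v,\cdot)$ is affine, $\max_{p\in\Delta_{n+1}} \tilde\Psi(v,p) = \max_i \Psi^i(\bm{x},v)$, so $\min_v \max_i \Psi^i(\bm{x},v) = \min_v \max_{p\in\Delta_{n+1}} \tilde{\Psi}(v,p)$.

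Next I verify the hypotheses of Sion's theorem. Each $\Psi^i(\bm{x},\cdot)$ is convex in $v$ (since $v\mapsto (v-Y)^+$ is convex for any random variable $Y$ and expectation preserves convexity), so $\tilde{\Psi}(\cdot,p)$ is convex and continuous in $v$ for every $p$; and $\tilde{\Psi}(v,\cdot)$ is affine, hence continuous and concave (in particular quasi-concave) in $p$ for every $v$. The simplex $\Delta_{n+1}$ is compact and convex. Although $\R$ is not compact, each $\Psi^i(\bm{x},\cdot)$ is coercive: it coincides with $-v$ for $v$ below $\essinf(\sum_k x^k R^k+r_i Z)$ and has positive asymptotic slope $1/\alpha_i - 1 > 0$ as $v\to\infty$, so the $v$-minimization can be restricted to a sufficiently large bounded interval without changing the value. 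Sion's minimax theorem then delivers
\[
\min_{v\in\bbr} \, \max_{p\in\Delta_{n+1}} \tilde{\Psi}(v,p) \; = \; \max_{p\in\Delta_{n+1}} \, \min_{v\in\bbr} \tilde{\Psi}(v,p).
\]

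The final step, which I expect to be the main obstacle, is to identify $\max_{p\in\Delta_{n+1}} \min_v \tilde{\Psi}(v,p)$ with $\max_i \min_v \Psi^i(\bm{x},v)$. The inequality $\ge$ is immediate by taking $p = e_i$; the reverse requires showing that the outer supremum is attained at a vertex of the simplex. This is delicate because $p\mapsto \min_v \tilde{\Psi}(v,p)$ is concave on $\Delta_{n+1}$ (as an infimum of affine functions), and maxima of concave functions on a simplex need not lie at vertices in general. To force a vertex solution I would exploit the specific structure of the $\Psi^i$: each is the Rockafellar--Uryasev representation of $\avar_{\alpha_i}$ applied to $\sum_k x^k R^k - r_i Z$, these quantities are deterministic monotone transforms of the single pair $(\sum_k x^k R^k, Z)$, and both $\alpha_i$ and $r_i$ are monotone. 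Concretely, I would write the KKT/subdifferential conditions for the outer $\min_v \max_i \Psi^i$: any minimizer $\bar v$ satisfies $0\in\partial[\max_i\Psi^i(\bm{x},\cdot)](\bar v) = \mathrm{conv}\{\partial\Psi^i(\bm{x},\bar v) : i\in I(\bar v)\}$, where $I(\bar v)$ is the active index set. Using the comonotone structure of the $W_i := \sum_k x^k R^k - r_i Z$ together with the monotonicity of $\alpha_i$, I would argue that the first-order system can be solved with the Lagrange multipliers supported on a single index $i^\star$, and that then $\bar v$ is itself a minimizer of $\Psi^{i^\star}(\bm{x},\cdot)$, giving $\min_v\max_i\Psi^i = \Psi^{i^\star}(\bm{x},\bar v) = \min_v \Psi^{i^\star}(\bm{x},\cdot) \leq \max_i \min_v \Psi^i(\bm{x},v)$, which closes the loop.
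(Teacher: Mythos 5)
Your reduction is correct as far as it goes, and it is essentially the paper's own route: weak duality, lifting $\max_i$ to the probability simplex, and a minimax theorem on the compact simplex (the paper first restricts $v$ to a compact interval built from the $\alpha_i$-quantile intervals and then invokes Fan's classical theorem instead of Sion's; that difference is immaterial). The genuine gap is the step you leave as a plan: showing that $\max_{p\in\Delta_{n+1}}\min_{v}\tilde{\Psi}(v,p)$ collapses to $\max_i\min_v\Psi^i(\bm{x},v)$, i.e.\ that the concave function $p\mapsto\min_v\tilde{\Psi}(v,p)$ is maximized at a vertex. You correctly identify this as the crux (for what it is worth, the paper does not establish it either: its proof opens by ``observing'' precisely this identity), but your KKT/comonotonicity argument is only sketched, so the proposal does not prove the theorem.

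More importantly, that step cannot be completed in the stated generality: the vertex property, and with it the minimax equality itself, fails in examples having exactly the structure you intend to exploit. Take $n+1=2$, $\alpha_1=0.01$, $\alpha_2=0.5$, $r_1=\tfrac12$, $r_2=1$, a degenerate portfolio return $\sum_k x^kR^k\equiv0$, and $Z$ with $\probp(Z=1)=0.1$, $\probp(Z=0.3)=0.4$, $\probp(Z=0)=0.5$; read $\Psi^i(\bm{x},v)=\frac{1}{\alpha_i}\E[(v-W_i)^+]-v$ with $W_i=\sum_k x^kR^k-r_iZ$, which is the reading required for $\min_v\Psi^i=\avar_{\alpha_i}(W_i)$ and the one you adopt (the printed ``$-r_iZ$'' inside the positive part is a sign slip; under that literal reading the family is pointwise decreasing in $i$ and the identity is true but vacuous). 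Here the $W_i$ are comonotone and decreasing in $i$ and the $\alpha_i$ are increasing, i.e.\ your structural hypotheses hold. Yet $\Psi^1$ has the unique minimizer $v=-\tfrac12$ with $\min_v\Psi^1=\avar_{0.01}(W_1)=\tfrac12$ and $\min_v\Psi^2=\avar_{0.5}(W_2)=0.44$, so $\max_i\min_v\Psi^i=\tfrac12$; but $\Psi^2(-\tfrac12)=0.6$, and since $\Psi^1$ has slope $9$ and $\Psi^2$ has slope $-0.8$ immediately to the right of $-\tfrac12$, one gets $\min_v\max_i\Psi^i\approx0.59>\tfrac12$. Equivalently, the optimal simplex weight is interior, $p\approx(0.08,\,0.92)$, and at the minimizer of $\max_i\Psi^i$ both indices are active with $0$ in neither individual subdifferential --- exactly the configuration your multiplier argument would have to exclude, which it cannot, since all your structural assumptions are satisfied (and smoothing $Z$ slightly preserves the strict gap, so continuity of the distributions does not help). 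So the final step of your plan would fail; closing it requires additional hypotheses or a reformulation of the statement, not a sharper vertex argument.
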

\begin{proof}
See Section~\ref{proof:minimax}.
\end{proof}

\noindent {\bf Computing the efficient frontier.} \ We now focus on the problem of determining the efficient frontier. This extends the methodology of \ci{RU00} and \ci{ZhuFu09} to the case of $\reavar$. Characterizing the efficient frontier --- consisting of pairs of returns and downside risk --- is equivalent to minimizing the function in Theorem~\ref{thm:minimax} additionally over $\bm{x}\in \xcal$ where $\xcal = \{ \bm{x}\in[0,\infty)^K \,; \ \sum_{k=1}^K  x^k=1, \ \sum_{k=1}^K  x^k\,\E(R^k)    =  c \}$ is a convex polyhedron. This problem can be conveniently reformulated as
$$
\min_{(\bm{x}, v, \Upsilon )\in \xcal \times \bbr \times \bbr}  \;
\left\{
\Upsilon \,; \
\Psi^i (\bm{x}, v) \leq \Upsilon , \ i=1,\dots, n+1 \right\}.
$$
The evaluation of $\Psi^i$ involves the computation of an expectation. In typical applications in practice, these expectations are approximated via Monte Carlo simulations. This approach allows to reformulate the problem as a linear program, i.e., the minimization of a linear function on a convex polyhedron of the form
\begin{eqnarray*}
\min \; \Upsilon&&\\
\mbox{s.t.}
&& \frac 1 {M \cdot \alpha_i}  \cdot  \sum_{m=1}^M  u^{i,m} - v \leq \Upsilon , \quad i=1,\dots, n+1,\\
&& u^{i,m} \geq v- \sum_{k=1}^K x^k   R^{k,m} - r_i Z^m, \quad i=1,\dots, n+1, \ m=1,\dots, M,\\
&& u^{i,m}\geq 0, \quad i=1,\dots, n+1, \ m=1,\dots, M,\\
\mbox{over}
&& (\bm{x}, v, \Upsilon)\in \xcal \times \bbr \times \bbr,
\end{eqnarray*}
where $(\bm{R}^1, Z^1),\dots,(\bm{R}^M, Z^M)$ are $M$ independent simulations of the pair $(\bm{R},Z)$. This problem is tractable on the basis of standard algorithmic implementations.

In this section, we showed how efficient frontiers can be computed that characterize the best tradeoffs between risk and return when risk is measured by recovery risk measures. This demonstrates that these risk measures can   successfully be applied to portfolio optimization in practice.


\section{Conclusion}

Risk measures used in solvency regulation specify guard rails for financial firms such as banks or insurance companies. Within their legal boundaries firms can otherwise freely choose their actions, e.g., in order to maximize shareholder value. As a consequence, an axiomatic theory of risk measures for solvency regulation should carefully formulate and capture the goals of regulation, and determine and investigate suitable instruments to meet them. The issue of recovery on creditors' claims has not yet been considered in sufficient detail, and the existing literature lacks solvency requirements that provide adequate protection of customers and counterparties in the case of default. In this paper, we propose the novel concept of recovery risk measures to resolve this issue. We analyze the properties of these risk measures and describe how to apply them in the context of solvency regulation, performance measurement, and portfolio optimization. Our findings suggest that recovery risk measures add value to the current risk management toolkit. They are tractable tools for both internal risk management and solvency regulation and can be employed to provide a more comprehensive picture on tail risk with a focus on safeguarding the interests of creditors and policyholders.

Various extensions of the suggested framework are possible. Our framework focuses on a static setting as common in solvency regulation where relevant time horizons are fixed. Dynamic or conditional solvency risk measures would be an interesting extension; see \ci{bc17} for a survey on previous research on this topic. Recovery risk measures are applied to single firms in this paper. From this perspective, another natural extension is the regulation of financial systems. As outlined in the literature,  systemic risk measures should capture the local and global interaction of economic agents and operationalize the emerging risk at the level of the entire system; see, e.g., \ci{chen2013axiomatic}, \ci{kromer2013systemic}, \ci{FRW17}, \ci{BFFM19}. A related issue is a more comprehensive analysis of optimal investment under constraints in terms of recovery risk measures and equilibrium models of markets. Our approach to model firms' balance sheets directly offers a natural and flexible starting point to address such problems.


\vfill
\pagebreak

\appendix


\section{Proofs}

\subsection{Proof of Proposition \ref{prop:no recovery control}}
\label{proof:no recovery control}

\begin{proof}
Fix $\lambda\in(0,1)$. By nonatomicity, for every $p\in(0,\alpha)$ we find an event $F_p\in\cF$ such that $\probp(F_p)=p$. Set $A_p=\frac{p}{\alpha-p}1_{F^c_p}$ and $L_p=1_{F_p}$ and note that $A_p-L_p=-1_{F_p}+\frac{p}{\alpha-p}1_{F_p^c}$. Note also that both $A_p$ and $L_p$ belong to $\cX$. A simple computation shows that
\[
\VaR_\alpha(A_p-L_p) \leq \avar_\alpha(A_p-L_p) = \frac{1}{\alpha}\left(p-(\alpha-p)\frac{p}{\alpha-p}\right) = 0.
\]
Moreover, we have $\probp(A_p\geq\lambda L_p)=\probp(F_p^c)=1-p$. As a result,
\begin{align*}
1-\alpha
&\leq
\inf\{\probp(A\geq L) \,; \ A,L\in\cX, \ \VaR_\alpha(A-L)\leq0\} \\
&\leq
\inf\{\probp(A\geq\lambda L) \,; \ A,L\in\cX, \ \VaR_\alpha(A-L)\leq0\} \\
&\leq
\inf\{\probp(A\geq\lambda L) \,; \ A,L\in\cX, \ \avar_\alpha(A-L)\leq0\} \\
&\leq
\inf\{\probp(A_p\geq\lambda L_p) \,; \ 0<p<\alpha\} \\
&=
\inf\{1-p \,; \ 0<p<\alpha\} = 1-\alpha.
\end{align*}
This yields the desired statements.
\end{proof}


\subsection{Remarks to Section~\ref{sec:introrevar}}

\begin{rem}\label{rem:revar_solvency test}
In order to justify the equivalence between \eqref{revar_solvency test} and \eqref{revar_solvency test 2}, observe that
\begin{eqnarray*}
\revar_\gamma(\Delta E_1,L_1)\leq E_0 \ &\iff & \ \revar_\gamma(E_1,L_1)\leq 0 \\
&\iff& \ \forall\, \lambda\in[0,1]\,:\quad \var_{\gamma(\lambda)}(A_1-\lambda L_1)\leq 0 \\
&\iff& \ \forall\, \lambda\in[0,1]\,:\quad \probp(A_1<\lambda L_1)\leq\gamma(\lambda) \\
&\iff& \ \forall\, \lambda\in[0,1]\,:\quad  \probp(A_1\geq\lambda L_1)\geq1-\gamma(\lambda).
\end{eqnarray*}
\end{rem}

\begin{rem}\label{rem:cond_recov}
The solvency test \eqref{revar_solvency test} also controls the conditional recovery probabilities given default. Indeed, assuming that $\probp(E_1<0)>0$, for all fractions $\lambda\in[0,1]$ we have
\[
\probp(A_1\geq\lambda L_1\,|\,E_1<0) = \frac{\probp(\lambda L_1\leq A_1<L_1)}{\probp(E_1<0)} = 1-\frac{\probp(A_1<\lambda L_1)}{\probp(E_1<0)}.
\]
This implies the following equivalent formulation of the recovery-adjusted solvency test:
\[
\revar_\gamma(E_1,L_1)\leq 0 \ \iff \ \forall \, \lambda\in[0,1]\,:\quad \probp(A_1\geq\lambda L_1\,|\,A_1<L_1)\geq 1-\frac{\gamma(\lambda)}{\probp(E_1<0)}.
\]
In particular, if the company's unconditional default probability $\probp(E_1<0)$ attains $\gamma(1)$, then the lower bound on conditional recovery probabilities depends only on $\gamma$.
\end{rem}

\subsection{Proof of Proposition \ref{prop: revar with piecewise gamma}}\label{sect: proof parametric gamma}

\begin{proof}
Fix $i=1,\dots,n$ and observe that $\gamma$ is constant and equal to $\alpha_i$ on the interval $[r_{i-1},r_i)$. Hence, we get
\[
\VaR_{\gamma(\lambda)}(X+(1-\lambda)Y) = \VaR_{\alpha_i}(X+(1-\lambda)Y) \leq \VaR_{\alpha_i}(X+(1-r_i)Y)
\]
for every $\lambda\in[r_{i-1},r_i)$ by positivity of $Y$ and monotonicity of $\VaR$. As a result,
\[
\sup_{\lambda\in[r_{i-1},r_i)}\VaR_{\gamma(\lambda)}(X+(1-\lambda)Y) = \VaR_{\alpha_i}(X+(1-r_i)Y).
\]
Similarly, observe that $\gamma$ is constant and equal to $\alpha_{n+1}$ on the interval $[r_n,1]$. Hence, we get
\[
\VaR_{\gamma(\lambda)}(X+(1-\lambda)Y) = \VaR_{\alpha_{n+1}}(X+(1-\lambda)Y) \leq \VaR_{\alpha_{n+1}}(X)
\]
for every $\lambda\in[r_n,1]$ by positivity of $Y$ and monotonicity of $\VaR$. As a result,
\[
\sup_{\lambda\in[r_n,1]}\VaR_{\gamma(\lambda)}(X+(1-\lambda)Y) = \VaR_{\alpha_{n+1}}(X).
\]
The desired assertion is a direct consequence of the above identities.
\end{proof}


\subsection{Proof of Proposition \ref{prop:elementvar}}\label{proof:elementvar}

\begin{proof}
The statements from {\em (a)} to {\em (d)} follow directly from Proposition \ref{prop:elementrerho}. To establish {\em (e)}, take an arbitrary $X\in L^0$ and note that both conditions $\gamma(0)>0$ and $X$ is bounded from below imply that $\VaR_{\gamma(0)}(X)<\infty$. In particular, $X\geq m$ for some $m\in\R$ yields $\VaR_{\gamma(0)}(X) \leq \VaR_{\gamma(0)}(m) = -m$ by monotonicity and cash invariance of $\VaR$. As a result, it suffices to show that $\VaR_{\gamma(0)}(X)<\infty$ implies $\revar(X)<\infty$. This is an immediate consequence of Proposition \ref{prop:elementrerho}.
\end{proof}

\subsection{Proof of Proposition~\ref{prop:elementrerho}}\label{proof:elementrerho}

\begin{proof}
{\em (a)} For every $X\in\xcal$ and $m\in\R$ the cash invariance of $\rho_\lambda$ readily implies
\[
\rerho(X+m,Y)
=
\sup_{\lambda\in[0,1]}\rho_\lambda(X+m+(1-\lambda)Y)
=
\sup_{\lambda\in[0,1]}\rho_\lambda(X+(1-\lambda)Y)-m
=
\rerho(X,Y)-m.
\]

\noindent{\em (b)} Since $X_1+(1-\lambda)Y_1\leq X_2+(1-\lambda)Y_2$ for every $\lambda\in[0,1]$, the monotonicity of $\rho_\lambda$ yields
\[
\rerho(X_1,Y_1) = \sup_{\lambda\in[0,1]}\rho_\lambda(X_1+(1-\lambda)Y_1) \geq \sup_{\lambda\in[0,1]}\rho_\lambda(X_2+(1-\lambda)Y_2) = \rerho(X_2,Y_2).
\]

\noindent{\em (c)} It follows from the convexity of $\rho_\lambda$ that
{\footnotesize
\begin{eqnarray*}
\rerho(aX_1+(1-a)X_2,aY_1+(1-a)Y_2)
&=&
\sup_{\lambda\in[0,1]}\rho_\lambda(a(X_1+(1-\lambda)Y_1)+(1-a)(X_2+(1-\lambda)Y_2)) \\
&\leq&
\sup_{\lambda\in[0,1]}\big(a\rho_\lambda(X_1+(1-\lambda)Y_1)+(1-a)\rho_\lambda(X_2+(1-\lambda)Y_2)\big) \\
&\leq&
a\sup_{\lambda\in[0,1]}\rho_\lambda(X_1+(1-\lambda)Y_1)+(1-a)\sup_{\lambda\in[0,1]}\rho_\lambda(X_2+(1-\lambda)Y_2) \\
&=&
a\rerho(X_1,Y_1)+(1-a)\rerho(X_2,Y_2).
\end{eqnarray*}
}

\noindent{\em (d)} Similarly, we can use the subadditivity of $\rho_\lambda$ to get
\begin{eqnarray*}
\rerho(X_1+X_2,Y_1+Y_2)
&=&
\sup_{\lambda\in[0,1]}\rho_\lambda(X_1+(1-\lambda)Y_1+X_2+(1-\lambda)Y_2) \\
&\leq&
\sup_{\lambda\in[0,1]}\{\rho_\lambda(X_1+(1-\lambda)Y_1)+\rho_\lambda(X_2+(1-\lambda)Y_2))\} \\
&\leq&
\sup_{\lambda\in[0,1]}\rho_\lambda(X_1+(1-\lambda)Y_1)+\sup_{\lambda\in[0,1]}\rho_\lambda(X_2+(1-\lambda)Y_2)) \\
&=&
\rerho(X_1,Y_1)+\rerho(X_2,Y_2).
\end{eqnarray*}

\noindent{\em (e)} Using the positive homogeneity of $\rho_\lambda$, we easily see that
\[
\rerho(aX,aY) = \sup_{\lambda\in[0,1]}\rho_\lambda(aX+(1-\lambda)aY) = \sup_{\lambda\in[0,1]}a\rho_\lambda(X+(1-\lambda)Y) = a\rerho(X,Y).
\]

\noindent{\em (f)} Observe that $a(1-\lambda)Y\geq(1-\lambda)Y$ for every $\lambda\in[0,1]$. This is because $Y$ is positive. Then, it follows from the monotonicity and positive homogeneity of $\rho_\lambda$ that
\[
\rerho(aX,Y) = \sup_{\lambda\in[0,1]}\rho_\lambda(aX+(1-\lambda)Y) \geq \sup_{\lambda\in[0,1]}\rho_\lambda(a(X+(1-\lambda)Y)) = a\rerho(X,Y).
\]

\noindent{\em (g)} It follows from monotonicity that $\rho_\lambda((1-\lambda)Y)\leq\rho_\lambda(0)$ for every $\lambda\in[0,1]$. This is because $Y$ is positive. Then, normalization yields
\[
0 = \rho_1(0) \leq \rerho(0,Y) = \sup_{\lambda\in[0,1]}\rho_\lambda((1-\lambda)Y) \leq \sup_{\lambda\in[0,1]}\rho_\lambda(0) = 0.
\]

\noindent{\em (h)} Take $X\in\xcal$ such that $\rho_0(X)<\infty$ and observe that $X+(1-\lambda)Y\geq X$ for every $\lambda\in[0,1]$. This is because $Y$ is positive. Then, monotonicity implies
\[
\rerho(X,Y) = \sup_{\lambda\in[0,1]}\rho_\lambda(X+(1-\lambda)Y) \leq \sup_{\lambda\in[0,1]}\rho_\lambda(X) = \rho_0(X) < \infty,
\]
where we used that $\rho_0(X)\geq\rho_\lambda(X)$ for every $\lambda\in[0,1]$ by assumption.
\end{proof}


\subsection{General Dual Representation}\label{sec:dual representation general}

The next proposition focuses on dual representations of recovery risk measures defined on standard $L^p$ spaces. We refer to \ci{FS} for a broad discussion on dual representations. For every $q\in[1,\infty]$ we denote by $\mcal^q_1(\probp)$ the set of probability measures $\probq$ over $(\Omega,\fil)$ that are absolutely continuous with respect to $\probp$ and satisfy $\frac{d\probq}{d\probp}\in L^q$. Recall that a map $\rho:\xcal\to\R\cup\{\infty\}$ satisfies the {\em Fatou property} if for every sequence $(X_n)\subset\xcal$ and every $X\in\xcal$ we have
\[
\mbox{$X_n\to X$ $\probp$-almost surely}, \  \sup_{n\in\N}\vert X_n\vert\in\xcal \ \implies \ \rho(X)\leq\liminf_{n\to\infty}\rho(X_n).
\]
The Fatou property corresponds to a weak form of continuity, namely lower semicontinuity, with respect to dominated almost-sure convergence. A well-known result by \ci{Jouini2006} shows that every distribution-based monetary risk measure defined on $L^\infty$ has the Fatou property. We refer to \ci{gao2018fatou} for a general result beyond bounded random variables.

\begin{prop}
\label{prop:dual representation general}
Let $\cX=L^p$ for some $p\in[1,\infty]$ and assume that $\rho_\lambda$ is a convex monetary risk measure with the Fatou property for every $\lambda\in[0,1]$. Then, for all $X,Y\in\xcal$
\[
\rerho(X,Y) = \sup_{\probq\in\mcal^q_1(\probp)}\big(\E_\probq(-X)-\alpha_Y(\probq)\big)
\]
where $q=\frac{p}{1-p}$ and for every $\probq\in\mcal^q_1(\probp)$ we set
\[
\alpha_Y(\probq) = \E_\probq(Y)+\inf_{\lambda\in[0,1]}\bigg(\sup_{X\in\cA_\lambda}\E_\probq(-X)-\lambda \E_\probq(Y)\bigg)
\]
where $\cA_\lambda=\{X\in L^p \,; \ \rho_\lambda(X)\leq0\}$.
\end{prop}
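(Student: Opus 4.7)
The plan is to reduce the dual representation of $\rerho$ to the classical dual representation of each individual convex monetary risk measure $\rho_\lambda$, and then simply interchange two suprema. Since $\sup$ commutes with $\sup$, no minimax argument is needed and the argument should be essentially algebraic once the classical duality is invoked.

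First, I would recall that for each $\lambda\in[0,1]$ the map $\rho_\lambda:L^p\to\R\cup\{\infty\}$ is a convex monetary risk measure with the Fatou property, and hence admits the dual representation
\[
\rho_\lambda(Z) \;=\; \sup_{\probq\in\mcal^q_1(\probp)}\bigl(\E_\probq(-Z)-\alpha_\lambda(\probq)\bigr),\qquad Z\in L^p,
\]
with minimal penalty function $\alpha_\lambda(\probq)=\sup_{W\in\cA_\lambda}\E_\probq(-W)$; see, e.g., the treatment in \ci{FS} for $p=\infty$ and \ci{gao2018fatou} for $p\in[1,\infty)$. Applying this to $Z=X+(1-\lambda)Y$ and using linearity of expectations under $\probq$ gives
\[
\rho_\lambda\bigl(X+(1-\lambda)Y\bigr) \;=\; \sup_{\probq\in\mcal^q_1(\probp)}\bigl(\E_\probq(-X)-(1-\lambda)\E_\probq(Y)-\alpha_\lambda(\probq)\bigr).
\]

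Next, I would take the supremum over $\lambda\in[0,1]$ on both sides. Since a double supremum may be interchanged freely, I obtain
\[
\rerho(X,Y) \;=\; \sup_{\probq\in\mcal^q_1(\probp)} \Bigl( \E_\probq(-X) \;+\; \sup_{\lambda\in[0,1]}\bigl(-(1-\lambda)\E_\probq(Y)-\alpha_\lambda(\probq)\bigr)\Bigr).
\]
Rewriting the inner supremum as the negative of an infimum and pulling out the $\lambda$-free term $\E_\probq(Y)$ yields
\[
\sup_{\lambda\in[0,1]}\bigl(-(1-\lambda)\E_\probq(Y)-\alpha_\lambda(\probq)\bigr) \;=\; -\E_\probq(Y)-\inf_{\lambda\in[0,1]}\bigl(\alpha_\lambda(\probq)-\lambda\E_\probq(Y)\bigr).
\]
Substituting $\alpha_\lambda(\probq)=\sup_{X\in\cA_\lambda}\E_\probq(-X)$ (note that this is the same $X$ as a dummy variable) and comparing with the definition of $\alpha_Y(\probq)$ in the statement, this is exactly $-\alpha_Y(\probq)$. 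Plugging back gives the desired identity
\[
\rerho(X,Y) \;=\; \sup_{\probq\in\mcal^q_1(\probp)}\bigl(\E_\probq(-X)-\alpha_Y(\probq)\bigr).
\]

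There is no genuine obstacle here: the only nontrivial ingredient is the dual representation of convex risk measures on $L^p$ with the Fatou property, which is quoted from the literature. The one point I would pay attention to when writing out the details is to make sure the value $+\infty$ is handled consistently (i.e., if $\rho_\lambda(X+(1-\lambda)Y)=\infty$ for some $\lambda$, then some $\alpha_\lambda(\probq)$ must allow the dual representation to reach $\infty$ as well), and to reuse the dummy symbol $X$ inside $\sup_{X\in\cA_\lambda}$ without confusion with the outer $X$.
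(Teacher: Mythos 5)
Your proposal is correct and follows essentially the same route as the paper: invoke the classical dual representation of each convex $\rho_\lambda$ with minimal penalty $\alpha_\lambda(\probq)=\sup_{W\in\cA_\lambda}\E_\probq(-W)$, interchange the two suprema over $\lambda$ and $\probq$, and regroup the $\lambda$-dependent terms into $\alpha_Y(\probq)$. The only (inessential) difference is in how the duality on $L^p$ for $p<\infty$ is justified: the paper derives norm lower semicontinuity of $\rho_\lambda$ from the Fatou property (via a dominated a.s.\ convergent subsequence) and then applies the representation result of Frittelli and Rosazza Gianin, whereas you cite a reference directly.
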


\begin{proof}
Set $q=\frac{p}{1-p}$ and fix $\lambda\in[0,1]$. If $p=\infty$, then it follows from Theorem 4.33 in \ci{FS} that
\[
\rho_\lambda(X) = \sup_{\probq\in\mcal^q_1(\probp)}\big(\E_\probq(-X)-\alpha_\lambda(\probq)\big)
\]
for every $X\in\cX$, where
\[
\alpha_\lambda(\probq) = \sup_{X\in\{\rho_\lambda\leq0\}}\E_\probq(-X).
\]
If $p<\infty$, the same result follows from Corollary 7 in \ci{frittelli2002} once we note that $\rho_\lambda$ is lower semicontinuous with respect to the $L^p$ norm. To see this, take a sequence $(X_n)\subset L^p$ and $X\in L^p$ such that $X_n\to X$ in the $L^p$ norm. By Theorem 13.6 in \ci{aliprantis06}, there exists a subsequence of $(X_n)$, which we still denote by $(X_n)$ without loss of generality, such that $X_n\to X$ $\probp$-almost surely and $\sup_{n\in\mathbb{N}}|X_n|\in L^p$. As a result of the Fatou property, we have
\[
\rho_\lambda(X) \leq \liminf_{n\to\infty}\rho_\lambda(X_n),
\]
which implies the desired lower semicontinuity. Now, for every $X\in\cX$ we obtain
\begin{eqnarray*}
\rerho(X,Y)
&=&
\sup_{\lambda\in[0,1]}\sup_{\probq\in\mcal^q_1(\probp)}\big(\E_\probq(-X-(1-\lambda)Y)-
\alpha_\lambda(\probq)\big) \\
&=&
\sup_{\probq\in\mcal^q_1(\probp)}\sup_{\lambda\in[0,1]}\big(\E_\probq(-X)-\E_\probq(Y)+\lambda \E_\probq(Y)-\alpha_\lambda(\probq)\big) \\
&=&
\sup_{\probq\in\mcal^q_1(\probp)}\bigg(\E_\probq(-X)-\E_\probq(Y)-\inf_{\lambda\in[0,1]}
\big(\alpha_\lambda(\probq)-\lambda \E_\probq(Y)\big)\bigg) \\
&=&
\sup_{\probq\in\mcal^q_1(\probp)}\big(\E_\probq(-X)-\alpha_Y(\probq)\big).
\end{eqnarray*}
This establishes the desired representation and concludes the proof.
\end{proof}


\subsection{Proof of Proposition \ref{prop: parametric gamma avar}}\label{sect: proof parametric gamma avar}

\begin{proof}
Fix $i=1,\dots,n$ and observe that $\gamma$ is constant and equal to $\alpha_i$ on the interval $[r_{i-1},r_i)$. Hence, we get
\[
\avar_{\gamma(\lambda)}(X+(1-\lambda)Y) = \avar_{\alpha_i}(X+(1-\lambda)Y) \leq \avar_{\alpha_i}(X+(1-r_i)Y)
\]
for every $\lambda\in[r_{i-1},r_i)$ by positivity of $Y$ and monotonicity of $\avar$. As a result,
\[
\sup_{\lambda\in[r_{i-1},r_i)}\avar_{\gamma(\lambda)}(X+(1-\lambda)Y) = \avar_{\alpha_i}(X+(1-r_i)Y).
\]
Similarly, observe that $\gamma$ is constant and equal to $\alpha_{n+1}$ on the interval $[r_n,1]$. Hence, we get
\[
\avar_{\gamma(\lambda)}(X+(1-\lambda)Y) = \avar_{\alpha_{n+1}}(X+(1-\lambda)Y) \leq \avar_{\alpha_{n+1}}(X)
\]
for every $\lambda\in[r_n,1]$ by positivity of $Y$ and monotonicity of $\avar$. As a result,
\[
\sup_{\lambda\in[r_n,1]}\avar_{\gamma(\lambda)}(X+(1-\lambda)Y) = \avar_{\alpha_{n+1}}(X).
\]
The desired assertion is a direct consequence of the above identities.
\end{proof}


\subsection{Proof of Proposition \ref{prop:elementreavar}}\label{proof:elementreavar}

\begin{proof}
The statements from {\em (a)} to {\em (g)} follow directly from Proposition \ref{prop:elementrerho}. To establish {\em (h)}, we can proceed as in the proof of Proposition \ref{prop:elementvar}. In particular, take an arbitrary $X\in L^1$ and note that both conditions $\gamma(0)>0$ and $X$ is bounded from below imply that $\avar_{\gamma(0)}(X)<\infty$. In particular, $X\geq m$ for some $m\in\R$ yields $\avar_{\gamma(0)}(X) \leq \avar_{\gamma(0)}(m) = -m$ by monotonicity and cash invariance of $\avar$. As a result, it suffices to show that $\avar_{\gamma(0)}(X)<\infty$ implies $\reavar(X)<\infty$. This follows immediately from Proposition \ref{prop:elementrerho}.
\end{proof}

\subsection{Dual Representation for Recovery Average Value at Risk}\label{proof:dual representation reavar}

Since $\avar$ is convex and satisfies the Fatou property, we can derive from Proposition \ref{prop:dual representation general} in the appendix a dual representation of $\reavar$ in terms of probability measures. Here, we denote by $\mcal^\infty_1(\probp)$ the set of all probability measures $\probq$ over $(\Omega,\fil)$ that are absolutely continuous with respect to $\probp$ and such that the Radon-Nikodym derivative $\frac{d\probq}{d\probp}$ belongs to $L^\infty$.

\begin{prop}
\label{prop:dual representation reavar}
For all $X,Y\in L^1$ the following representation holds:
\[
\reavar_\gamma(X,Y) = \sup_{\probq\in\mcal^\infty_1(\probp)}\big(\E_\probq(-X)-(1-\lambda(\probq))\E_\probq(Y)\big)
\]
where for each $\probq\in\mcal^\infty_1(\probp)$ we set
\[
\lambda(\probq) = \sup\left\{\lambda\in[0,1] \,; \ \gamma(\lambda)\leq\left\|\frac{d\probq}{d\probp}\right\|_\infty^{-1}\right\}.
\]
\end{prop}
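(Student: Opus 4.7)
The plan is to derive the representation by combining the classical dual representation of $\avar$ with an exchange of suprema. Starting from the definition
\[
\reavar_\gamma(X,Y) = \sup_{\lambda \in [0,1]} \avar_{\gamma(\lambda)}\bigl(X + (1-\lambda)Y\bigr),
\]
I would invoke the standard identity
\[
\avar_\alpha(Z) = \sup_{\probq \in \cQ(\alpha)} \E_\probq(-Z), \qquad \cQ(\alpha) := \Bigl\{\probq \in \mcal^\infty_1(\probp) \,:\, \bigl\|\tfrac{d\probq}{d\probp}\bigr\|_\infty \leq 1/\alpha\Bigr\},
\]
and substitute it inside every term of the outer supremum. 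This rewrites $\reavar_\gamma(X,Y)$ as a double supremum over $\lambda \in [0,1]$ and $\probq \in \cQ(\gamma(\lambda))$ of the affine function $\E_\probq(-X) - (1-\lambda)\E_\probq(Y)$. Alternatively, the same starting point can be reached from Proposition \ref{prop:dual representation general} once one observes that $\avar_\alpha$ is coherent with the Fatou property on $L^1$ and that the minimal penalty of its acceptance set is the $\{0,+\infty\}$-valued indicator of $\cQ(\alpha)$.

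After swapping the two suprema, the admissible set of $\lambda$ for each fixed $\probq$ is exactly
\[
\Lambda(\probq) = \bigl\{\lambda \in [0,1] \,:\, \gamma(\lambda) \leq \|d\probq/d\probp\|_\infty^{-1}\bigr\},
\]
which, since $\gamma$ is increasing, is an interval of the form $[0,\lambda(\probq)]$ or $[0,\lambda(\probq))$, with $\lambda(\probq)$ as defined in the statement. The inner optimization is then a supremum of the affine map $\lambda \mapsto \E_\probq(-X) - (1-\lambda)\E_\probq(Y)$ over this interval. In the natural regime $Y \geq 0$ (implicit in the intended application, where $Y = L_1$), we have $\E_\probq(Y) \geq 0$, so the map is nondecreasing in $\lambda$ and its supremum equals $\E_\probq(-X) - (1-\lambda(\probq))\E_\probq(Y)$. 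Taking the outer supremum over $\probq \in \mcal^\infty_1(\probp)$ then yields the claimed identity.

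The main technical point is the closure issue at the right endpoint: if $\gamma$ fails to be right-continuous at $\lambda(\probq)$, the supremum of $\Lambda(\probq)$ may lie outside $\Lambda(\probq)$. This is handled by taking a sequence $\lambda_n \uparrow \lambda(\probq)$ inside $\Lambda(\probq)$; the affine dependence on $\lambda$ together with the finiteness of $\E_\probq(Y)$ lets one pass to the limit to reach the value $\E_\probq(-X) - (1-\lambda(\probq))\E_\probq(Y)$. Measures $\probq$ with $\Lambda(\probq) = \emptyset$ (i.e.\ $\|d\probq/d\probp\|_\infty^{-1} < \gamma(0)$) contribute the empty supremum $-\infty$ and are automatically excluded; with the usual convention they can be left out of the outer supremum without changing its value. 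Once these boundary and continuity technicalities are settled, the identity follows by a direct calculation.
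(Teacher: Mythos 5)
Your proposal is correct and follows essentially the same route as the paper's proof: the classical dual representation of $\avar$ with the $\{0,+\infty\}$-valued penalty on $\{\|d\probq/d\probp\|_\infty\leq 1/\gamma(\lambda)\}$, an exchange of the two suprema (which is exactly what Proposition \ref{prop:dual representation general} packages), and the inner optimization over $\Lambda(\probq)$ using $\E_\probq(Y)\geq 0$ to land at $\lambda(\probq)$. Your explicit treatment of the possibly non-attained endpoint, of measures with $\Lambda(\probq)=\emptyset$, and of the implicit assumption $Y\geq 0$ (which the paper's proof also uses even though the statement is phrased for all $Y\in L^1$) is, if anything, slightly more careful than the paper's ``it is easy to see'' step.
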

\begin{proof}
It is well-known, see e.g.\ Theorem 4.52 in \ci{FS} for the classical statement in $L^\infty$, that for all $\lambda\in[0,1]$ and $X\in L^1$
\[
\avar_{\gamma(\lambda)}(X) = \sup_{\probq\in\mcal^\infty_1(\probp)}\big(\E_\probq(-X)-\alpha_\lambda(\probq)\big)
\]
where
\[
\alpha_\lambda(\probq) =
\begin{cases}
0 & \mbox{if $\frac{d\probq}{d\probp}\leq\frac{1}{\gamma(\lambda)}$},\\
\infty & \mbox{otherwise}.
\end{cases}
\]
As a result, it follows from Proposition \ref{prop:dual representation general} that
\[
\reavar_\gamma(X,Y) = \sup_{\probq\in\mcal^\infty_1(\probp)}\big(\E_\probq(-X)-\alpha_Y(\probq)\big)
\]
where
\[
\alpha_Y(\probq) = \E_\probq(Y)+\inf_{\lambda\in[0,1]}\big(\alpha_\lambda(\probq)-\lambda \E_\probq(Y)\big).
\]
Now, for every $\probq\in\mcal^\infty_1(\probp)$ define
\[
\Lambda(\probq) = \left\{\lambda\in[0,1] \,; \  \left\|\frac{d\probq}{d\probp}\right\|_\infty\leq\frac{1}{\gamma(\lambda)}\right\}
\]
and set $\lambda(\probq)=\sup\Lambda(\probq)$. Since $Y$ is positive, it is easy to see that
\[
\inf_{\lambda\in[0,1]}\big(\alpha_\lambda(\probq)-\lambda \E_\probq(Y)\big) = \inf_{\lambda\in\Lambda(\probq)}\big(-\lambda \E_\probq(Y)\big) = -\lambda(\probq)\E_\probq(Y)
\]
for every $\probq\in\mcal^\infty_1(\probp)$. This implies that
\[
\reavar_\gamma(X,Y) = \sup_{\probq\in\mcal^\infty_1(\probp)}\big(\E_\probq(-X)-(1-\lambda(\probq))\E_\probq(Y)\big).\qedhere
\]
\end{proof}

As discussed in \eqref{eq:recovery_ref}, the term
$E_1+ (1-\lambda) L_1$ represents the available resources of the firm at time $1$ beyond a recovery level $\lambda$. For a fixed probability measure $\probq\in\mcal^\infty_1(\probp)$, we can thus interpret $\E_\probq(-E_1)-(1-\lambda)\E_\probq(L_1)$
as the expected shortfall below the recovery level $\lambda$ with respect to the measure $\probq$. Proposition \ref{prop:dual representation reavar} thus represents $\reavar_\gamma$ as a supremum over expected shortfalls below different recovery levels over the collection of absolutely continuous probability measures with bounded Radon-Nikodym density: The recovery levels depend on $\probq$ and are given by the generalized inverse of $\gamma$ evaluated at the inverse of the supremum norm of the Radon-Nikodym derivative of $\probq$ with respect to the reference measure $\probp$. The robust representation in Proposition \ref{prop:dual representation reavar} can thus be interpreted as a worst-case approach in the face of Knightian uncertainty: The recovery risk measure $\reavar_\gamma$ is the worst-case expected shortfall beyond different recovery levels over a class of probability measures. The size of the recovery level encodes to what extent different probability measures contribute to the worst-case.


\subsection{Complementary Material for Section \ref{sec:pilh}}
\label{sec:ex-recovery}

Fix $k\in[1,100]$. We know from Example \ref{ex:interests} that $\VaR_\alpha(E_1^k)=k-100$. Moreover,
\[
\VaR_\beta(E_1^k+(1-r)L_1)=
\begin{cases}
100r-k & \mbox{if} \ \beta<\frac{\alpha}{2}, \  k\leq\frac{101+99r}{2},\\
k+r-101 & \mbox{otherwise}.
\end{cases}
\]
It follows from Proposition \ref{prop: revar with piecewise gamma} that
\[
\revar_\gamma(E^k_1,L_1) =
\max\{\VaR_\alpha(E^k_1),\VaR_\beta(E^k_1+(1-r)L_1).
\]
As a result, a direct calculation yields
\[
\revar_\gamma(E^k_1,L_1) =
\begin{cases}
100r-k & \mbox{if} \ \beta<\frac{\alpha}{2}, \  k\leq50(r+1),\\
k-100 & \mbox{otherwise}.
\end{cases}
\]
This shows that
\[
\revar_\gamma(E^k_1,L_1)\leq0 \ \iff \
\begin{cases}
k\geq100r & \mbox{if} \ \beta<\frac{\alpha}{2},\\
k\geq0 & \mbox{if} \ \beta\geq\frac{\alpha}{2}.
\end{cases}
\]
We turn to $\reavar$. We know from Example \ref{ex:interests} that $\avar_\alpha(E^k_1)=0$. Moreover,
\[
\avar_\beta(E_1^k+(1-r)L_1)=
\begin{cases}
100r-k & \mbox{if} \ \beta<\frac{\alpha}{2}, \  k\leq\frac{101+99r}{2},\\
r-101+\frac{\alpha}{2\beta}(101+99r)+(1-\frac{\alpha}{\beta})k & \mbox{if} \ \beta\geq\frac{\alpha}{2}, \ k\leq\frac{101+99r}{2},\\
k+r-101 & \mbox{otherwise}.
\end{cases}
\]
It follows from Proposition \ref{prop: parametric gamma avar} that
\[
\reavar_\gamma(E^k_1,L_1) = \max\{\avar_\alpha(E^k_1),\avar_\beta(E^k_1+(1-r)L_1)\}.
\]
As a result, we obtain
\[
\reavar_\gamma(E^k_1,L_1) =
\begin{cases}
100r-k & \mbox{if} \ \beta<\frac{\alpha}{2}, \ k\leq100r,\\
r-101+\frac{\alpha}{2\beta}(101+99r)+(1-\frac{\alpha}{\beta})k & \mbox{if} \ \beta\geq\frac{\alpha}{2}, \ k\leq\frac{(99\alpha+2\beta)r-101(2\beta-\alpha)}{2(\alpha-\beta)},\\
0 & \mbox{otherwise}.
\end{cases}
\]
We infer that
\[
\reavar_\gamma(E^k_1,L_1)\leq0 \ \iff \
\begin{cases}
k\geq100r & \mbox{if} \ \beta<\frac{\alpha}{2},\\
k\geq\max\left\{\frac{(99\alpha+2\beta)r-101(2\beta-\alpha)}{2(\alpha-\beta)},0\right\} & \mbox{if} \ \beta\geq\frac{\alpha}{2}.
\end{cases}
\]


\subsection{Complementary Material for Section \ref{sec:incs}}
\label{sect: complementary numerical study}

The probability distribution function of $A_1$ is specified by
\[
\probp(A_1\leq x)=
\begin{cases}
\Phi\left(\frac{\log(x)-\mu}{\sigma}\right) & \mbox{if} \ x>0,\\
0 & \mbox{if} \ x\leq0,
\end{cases}
\]
where $\Phi$ is the distribution function of a standard normal random variable.
To define the probability distribution function of $L_1$, recall that the gamma distribution with rate $a>0$ and shape $b>0$ is given by
\[
G_{a,b}(x) :=
\begin{cases}
\frac{a^b}{\Gamma(b)}\int_0^x y^{b-1}e^{-ay}dy & \mbox{if} \ x>0,\\
0 & \mbox{if} \ x\leq0,
\end{cases}
\]
where $\Gamma$ is the gamma function. For every $p\in(0,1)$ set
\[
q_p(G_{a,b}) := \inf\{x\in\R \,; \ G_{a,b}(x)\geq p\}.
\]
The probability distribution function of $L_1$ is then specified by
\[
\probp(L_1\leq x)=
\begin{cases}
G_{\delta_0,\tau_0}(x) & \mbox{if} \ x<q_{97.5\%}(G_{\delta_0,\tau_0}),\\
G_{\delta,\tau}(x+q_{97.5\%}(G_{\delta,\tau})-q_{97.5\%}(G_{\delta_0,\tau_0})) & \mbox{if} \ x\geq q_{97.5\%}(G_{\delta_0,\tau_0}).
\end{cases}
\]

Recall that, by Sklar's Theorem, see, e.g., Theorem 2.3.3 in \ci{nelsen2007}, the joint distribution of $A_1$ and $L_1$ can be expressed in terms of a suitable copula function $C:[0,1]\times[0,1]\to[0,1]$ as
\[
\probp(A_1\leq x,L_1\leq y) = C(\probp(A_1\leq x),\probp(L_1\leq y)).
\]
The assets and liabilities are linked through the Gaussian copula
\[
C(p,q) = \frac{1}{2\pi\sqrt{1-\rho^2}}\int_{-\infty}^{\Phi^{-1}(p)}\int_{-\infty}^{\Phi^{-1}(q)}e^{-\frac{u^2-2\rho uv+v^2}{2(1-\rho^2)}}dudv
\]
where $\rho\in(-1,1)$ is the correlation coefficient.

\subsection{Complementary Material for Section \ref{sect: maximal rec adj}}
\label{sect: preliminary supplementary section maximal rec adj}

For better comparability, we use the same paragraph titles from Section \ref{sect: maximal rec adj}.

\noindent {\bf Distribution of assets and liabilities}. The probability density function of $L_1$ is explicitly given by
\[
f(x)=
\begin{cases}
\frac{4(1-\alpha)}{a^2}x & \mbox{if} \ 0\leq x\leq\frac{a}{2},\\
-\frac{4(1-\alpha)}{a^2}x+\frac{4(1-\alpha)}{a} & \mbox{if} \ \frac{a}{2}<x\leq a,\\
\frac{4\alpha}{(c-b)^2}x-\frac{4\alpha b}{(c-b)^2} & \mbox{if} \ b\leq x\leq\frac{b+c}{2},\\
-\frac{4\alpha}{(c-b)^2}x+\frac{4\alpha c}{(c-b)^2} & \mbox{if} \ \frac{b+c}{2}<x\leq c,\\
0 & \mbox{otherwise},
\end{cases}
\]
for $\alpha=0.5\%$ and for fixed parameters $0<a<b<c$.

\noindent {\bf Regulatory benchmarks}. \ Note that $\probp(L_1\leq a)=1-\alpha$ and $\probp(L_1\leq x)<1-\alpha$ for every $x<a$. This yields
\begin{align*}
\VaR_{0.5\%}(E_1)
&=
\VaR_\alpha(-L_1)-k = \inf\{x\in\R \,; \ \probp(-L_1+x<0)\leq\alpha\}-k \\
&=
\inf\{x\in\R \,; \ \probp(L_1\leq x)\geq1-\alpha\}-k = a-k.
\end{align*}
The computation of $\avar_{1\%}(E_1)$ requires a bit more effort. First of all, define
\[
q = \VaR_{2\alpha}(-L_1) = \inf\{x\in\R \,; \ \probp(L_1\leq x)\geq1-2\alpha.
\]
Note that $q$ must satisfy
\[
\frac{1}{2}(a-q)\left(-\frac{4(1-\alpha)}{a^2}q+\frac{4(1-\alpha)}{a}\right) = \alpha.
\]
This gives $q=a\big(1\pm\sqrt{\frac{\alpha}{2(1-\alpha)}}\big)$. Since $q<a$ must hold, we obtain
\[
q = a\left(1-\sqrt{\frac{\alpha}{2(1-\alpha)}}\right).
\]
As a next step, observe that
\[
\avar_{2\alpha}(-L_1) = \E(L_1\vert L_1\geq\VaR_{2\alpha}(-L_1)) = \E(L_1\vert L_1\geq q) = \frac{\E(L_11_{\{L_1\geq q\}})}{\probp(L\geq q)}.
\]
We clearly have $\probp(L_1\geq q)=2\alpha$. Moreover, noting that $q>\frac{a}{2}$, we get
\begin{align*}
\E(L_11_{\{L_1\geq q\}}) \,
&=
\int_q^\infty xf(x)dx = \int_q^a xf(x)dx+\int_b^{\frac{b+c}{2}}xf(x)dx+\int_{\frac{b+c}{2}}^cxf(x)dx \\
&=
\frac{4(1-\alpha)}{a^2}\int_q^a(ax-x^2)dx+\frac{4\alpha}{(b-c)^2}\left(\int_b^{\frac{b+c}{2}}(x^2-bx)dx+\int_{\frac{b+c}{2}^c}(cx-x^2)dx\right) \\
&=
\frac{4(1-\alpha)}{a^2}\left(\frac{1}{6}a^3+\frac{1}{3}q^3-\frac{1}{2}aq^2\right)+\frac{4\alpha}{(b-c)^2}\left(\frac{1}{6}(b^3+c^3)-\frac{1}{24}(b+c)^3\right) \\
&=
\frac{4(1-\alpha)}{a^2}\frac{\alpha}{2(1-\alpha)}\left(\frac{1}{2}-\frac{1}{3}\sqrt{\frac{\alpha}{2(1-\alpha)}}\right)a^3+\frac{4\alpha}{(b-c)^2}\frac{1}{8}(b+c)(b-c)^2 \\
&=
2\alpha\left(\frac{1}{2}-\frac{1}{3}\sqrt{\frac{\alpha}{2(1-\alpha)}}\right)a+\alpha\frac{b+c}{2}.
\end{align*}
As a result, it follows that
\[
\avar_{1\%}(E_1)
=
\avar_{2\alpha}(-L_1)-k = \left(\frac{1}{2}-\frac{1}{3}\sqrt{\frac{\alpha}{2(1-\alpha)}}\right)a+\frac{b+c}{4}-k.
\]

\noindent {\bf Recovery-based capital requirements}. We turn to the computation of $\revar_\gamma(E_1,L_1)$. To this effect, define
\[
q_\beta = \VaR_{\beta}(-L_1) = \inf\{x\in\R \,; \ \probp(L_1\leq x)\geq1-\beta.
\]
If $\beta<0.25\%$, then $q_\beta$ must satisfy
\[
\frac{1}{2}(c-q_\beta)\left(-\frac{4\alpha}{(b-c)^2}q_\beta+\frac{4\alpha c}{(b-c)^2}\right) = \beta.
\]
This gives $q_\beta=c\pm\sqrt{\frac{\beta}{2\alpha}}(c-b)$. Since $q_\beta<c$ must hold, we obtain
\[
q_\beta = \sqrt{\frac{\beta}{2\alpha}}b+\left(1-\sqrt{\frac{\beta}{2\alpha}}\right)c.
\]
Similarly, if $\beta\geq0.25\%$, then $q_\beta$ must satisfy
\[
\frac{1}{2}(q_\beta-b)\left(\frac{4\alpha}{(b-c)^2}q_\beta-\frac{4\alpha b}{(b-c)^2}\right) = \alpha-\beta.
\]
This gives $q_\beta=b\pm\sqrt{\frac{\alpha-\beta}{2\alpha}}(c-b)$. Since $q_\beta>b$ must hold, we obtain
\[
q_\beta = \left(1-\sqrt{\frac{\alpha-\beta}{2\alpha}}\right)b+\sqrt{\frac{\alpha-\beta}{2\alpha}}c.
\]
For convenience, we set
\[
q_\beta(b,c)=
\begin{cases}
\sqrt{\frac{\beta}{2\alpha}}b+\left(1-\sqrt{\frac{\beta}{2\alpha}}\right)c & \mbox{if} \ \beta<0.25\%,\\
\left(1-\sqrt{\frac{\alpha-\beta}{2\alpha}}\right)b+\sqrt{\frac{\alpha-\beta}{2\alpha}}c & \mbox{if} \ \beta\geq0.25\%.
\end{cases}
\]
It follows that
\[
\VaR_{\beta}(A_1-rL_1)
=
r\VaR_\beta(-L_1)-k = rq_\beta(b,c)-k.
\]
As a result, we conclude that
\[
\revar_\gamma(E_1,L_1)
=
\max\{\VaR_\alpha(E_1),\VaR_\beta(A_1-rL_1)\} =
\max\{a,rq_\beta(b,c)\}-k.
\]

\subsection{Proof of Proposition \ref{prop: optimal rec adj}}
\label{sect: supplementary section maximal rec adj}

\begin{proof}
Throughout the proof we use the notation from Section \ref{sect: preliminary supplementary section maximal rec adj}. In the $\VaR$ case, the optimization problem can be reformulated in explicit terms as
\begin{eqnarray*}
\max \; \frac{rq_\beta(b,c)-k+E_0}{a-k+E_0}&&\\
\mbox{s.t.}
&& (1) \ \ k\geq a,\\
&& (2) \ \ k<a+E_0,\\
&& (3) \ \ k\geq rq_\beta(b,c),\\
&& (4) \ \ k<rq_\beta(b,c)+E_0,\\
&& (5) \ \ rq_\beta(b,c)>a,\\
&& (6) \ \ s_{min}\leq\frac{E_0}{a-k+E_0}\leq s_{max},\\
\mbox{over}
&& k>0 \ \,\mbox{and}\, \ 0<a<b<c.
\end{eqnarray*}
It is clear that, due to (5), both conditions (1) and (4) can be dropped as they are implied by conditions (3) and (2), respectively. Moreover, (6) clearly implies (2). Since the objective function is increasing in the term $q_\beta(b,c)$, the maximum is achieved by taking $rq_\beta(b,c)=k$ in (3). In this case, condition (5) is implied by (6). By conveniently rewriting condition (6), we thus obtain the equivalent problem
\begin{eqnarray*}
\max \; \frac{E_0}{a-k+E_0}&&\\
\mbox{s.t.}
&& a+\frac{s_{min}-1}{s_{min}}E_0 \leq k\leq a+\frac{s_{max}-1}{s_{max}}E_0,\\
\mbox{over}
&& k>0 \ \,\mbox{and}\, \ a>0.
\end{eqnarray*}
Note that the new objective function is increasing in $k$. As a result, the maximum is achieved by taking $k=a+\frac{s_{max}-1}{s_{max}}E_0$, which yields a recovery adjustment equal to
\[
\frac{E_0}{a-\Big(a+\frac{s_{max}-1}{s_{max}}E_0\Big)+E_0} = s_{max}.
\]
(The parameter $a$ can be selected to ensure a realistic loss probability. Indeed, we have
\[
\probp(E_1<E_0) = P\left(L_1> a-\frac{1}{s_{max}}E_0\right).
\]
It is then clear that we can always choose $a$ so as to ensure a loss probability around $50\%$. To this effect, it suffices to have $a-\frac{1}{s_{max}}E_0\approx\frac{a}{2}$). To deal with the $\avar$ case, it is first convenient to define the quantity
\[
\xi = \frac{1}{2}-\frac{1}{3}\sqrt{\frac{\alpha}{2(1-\alpha)}} = 0.47\dots.
\]
The corresponding optimization problem can be rewritten in explicit terms as
\begin{eqnarray*}
\max \; \frac{rq_\beta(b,c)-k+E_0}{\xi a+\frac{b+c}{4}-k+E_0}&&\\
\mbox{s.t.}
&& (1) \ \ k\geq\xi a+\frac{b+c}{4},\\
&& (2) \ \ k<\xi a+\frac{b+c}{4}+E_0,\\
&& (3) \ \ k\geq rq_\beta(b,c),\\
&& (4) \ \ k<rq_\beta(b,c)+E_0,\\
&& (5) \ \ rq_\beta(b,c)>a,\\
&& (6) \ \ s_{min}\leq\frac{E_0}{\xi a+\frac{b+c}{4}-k+E_0}\leq s_{max},\\
\mbox{over}
&& k>0 \ \,\mbox{and}\, \ 0<a<b<c.
\end{eqnarray*}
Observe that we cannot proceed as in the $\VaR$ case because of a more complex dependence on the parameters $b$ and $c$. As a first step, note that condition (6) implies both conditions (1) and (2). The objective function is decreasing in $a$. As a result, the maximum is achieved by taking $a=\frac{1}{\xi}\left(k-\frac{b+c}{4}-\frac{s_{max}-1}{s_{max}}E_0\right)$ in (6). We thus obtain the equivalent problem
\begin{eqnarray*}
\max \; \frac{rq_\beta(b,c)-k+E_0}{\frac{E_0}{s_{max}}}&&\\
\mbox{s.t.}
&& (1') \ \ k\geq rq_\beta(b,c),\\
&& (2') \ \ k<rq_\beta(b,c)+E_0,\\
&& (3') \ \ rq_\beta(b,c)>\frac{1}{\xi}\left(k-\frac{b+c}{4}-\frac{s_{max}-1}{s_{max}}E_0\right),\\
&& (4') \ \ 0<\frac{1}{\xi}\left(k-\frac{b+c}{4}-\frac{s_{max}-1}{s_{max}}E_0\right)<b,\\
\mbox{over}
&& k>0 \ \,\mbox{and}\, \ 0<b<c.
\end{eqnarray*}
The new objective function is decreasing in $k$. This implies that the maximum is achieved by taking $k=rq_\beta(b,c)$ in (1'), which entails a recovery adjustment equal to
\[
\frac{rq_\beta(b,c)-rq_\beta(b,c)+E_0}{\frac{E_0}{s_{max}}} = s_{max}.
\]
In this case, condition (2') is automatically satisfied. We need to show when there exist $0<b<c$ satisfying all the remaining conditions, namely (3') and (4'), under $k=rq_\beta(b,c)$. We focus on the case $\beta\geq\frac{\alpha}{2}$. In this case, setting $\lambda=\sqrt{\frac{\alpha-\beta}{2\alpha}}\in(0,\frac{1}{2}]$, we can write $q_\beta(b,c)=(1-\lambda)b+\lambda c$. Condition (3') can equivalently be written as
\[
\left((1-\xi)r(1-\lambda)-\frac{1}{4}\right)b+\left((1-\xi)r\lambda-\frac{1}{4}\right)c < \frac{s_{max}-1}{s_{max}}E_0.
\]
This holds for all $0<b<c$ provided that the two expressions multiplying $b$ and $c$ are both negative, i.e.
\begin{equation}
\label{eq: aux 1}
r \leq \frac{1}{4}\frac{1}{(1-\lambda)(1-\xi)}.
\end{equation}
Similarly, condition (4') is equivalent to
\[
\frac{s_{max}-1}{s_{max}}E_0-\left(r(1-\lambda)-\frac{1}{4}\right)b < \left(r\lambda-\frac{1}{4}\right)c < \frac{s_{max}-1}{s_{max}}E_0.
\]
For every $b>0$, this holds for a suitable $c>0$ provided that the expression in parenthesis multiplying $b$ and the expression multiplying $c$ are both strictly positive, i.e.
\begin{equation}
\label{eq: aux 2}
r > \frac{1}{4\lambda}.
\end{equation}
To ensure that $c$ can be taken to satisfy $c>b$, it suffices to impose the bound $b<\frac{1}{r\lambda-\frac{1}{4}}\frac{s_{max}-1}{s_{max}}E_0$. This shows that we can indeed find $0<b<c$ satisfying (3') and (4') under $k=rq_\beta(b,c)$ provided that \eqref{eq: aux 1} and \eqref{eq: aux 2} hold.
\end{proof}


\subsection{Proof of Proposition \ref{prop:euler for reavar}}\label{proof:euler for reavar}

\begin{proof}
We rely on Lemma 5.6 in \ci{tasche1999risk}. To this effect, the random variables
\[
X^1=-(\Delta E^1_1+(1-r_j)L^1_1),\dots,X^N=-(\Delta E^N_1+(1-r_j)L^N_1)
\]
have to satisfy the so-called (S)-Assumption in that paper. This stipulates some requirements on the joint distribution of the above random variables, namely:
\begin{itemize}
    \item $X^1,\dots,X^N$ are integrable and continuously distributed;
    \item the conditional distribution of $X^1$ given $X^2,\dots,X^N$ has a density $\phi$;
    \item $x^1\mapsto\phi(x^1,x^2,\dots,x^N)$ is continuous for all $x^2,\dots,x^N\in\R$;
    \item the maps $\Phi_1,\dots,\Phi_N:\R\times\R\setminus\{0\}\times\R^{N-1}\to\R$ defined by
\[
\Phi_1(x^1,u_1,\dots,u_N) =  E_P\left(\phi\bigg(u_1^{-1}\bigg(x^1-\sum_{i=2}^Nu_iX^i\bigg),X^2,\dots,X^N\bigg)\right),
\]
\[
\Psi_l(x^1,u_1,\dots,u_N) = E_P\left(X^l\phi\bigg(u_1^{-1}\bigg(x^1-\sum_{i=2}^Nu_iX^i\bigg),X^2,\dots,X^N\bigg)\right), \ \ \ l=2,\dots,N,
\]
are finite valued and continuous;
   \item for every $u=(u_1,\dots,u_N)\in\R\setminus\{0\}\times\R^{N-1}$ we have
\[
E_P\left(\phi\bigg(u_1^{-1}\bigg(q_{\alpha_j}(u)-\sum_{i=2}^Nu_iX^i\bigg),X^2,\dots,X^N\bigg)\right)>0,
\]
where $q_{\alpha_j}(u)=\inf\{x\in\R \,; \ \probp(\sum_{i=1}^Nu_iX^i\leq x)\geq1-\alpha_j\}$.
\end{itemize}
Now, fix $i=1,\dots,N$ and for every $k=1,\dots,n+1$ consider the convex (hence, continuous) function $f^{i,k}:\R\to\R$ defined by setting
\[
f^{i,k}(h)=\avar_{\alpha_k}(\Delta E_1+(1-r_k)L_1+h(\Delta E^i_1+(1-r_k)L_1^i)).
\]
By assumption we have that
\[
f^{i,j}(0) > \max_{k=1,\dots,n+1,\,\,k\neq j}f^{i,k}(0).
\]
It follows from continuity that there exists $\varepsilon>0$ such that
\[
f^{i,j}(h) > \max_{k=1,\dots,n+1,\,\,k\neq j}f^{i,k}(h).
\]
for every $h\in(-\varepsilon,\varepsilon)$. As a result,
\[
\reavar_\gamma(\Delta E_1+h \Delta E^i_1,L_1+hL^i_1) = \max_{k=1,\dots,n+1}f^{i,k}(h) = f^{i,j}(h)
\]
for every $h\in(-\varepsilon,\varepsilon)$. This immediately yields
\[
\frac{d}{dh} \reavar_\gamma(\Delta E_1+h \Delta E^i_1,L_1+h L^i_1)_{|_{h=0}} = \frac{df^{i,j}}{dh}(0).
\]
Since the (S)-Assumption holds, we infer from Theorem 4.4 in \ci{tasche1999risk} that
\[
\frac{df^{i,j}}{dh}(0)
=
-E\big( \Delta E^i_1 +(1-r_j)L^i_1 \, | \, \Delta E_1 +(1-r_j)L_1\leq -\var_{\alpha_j}(\Delta E_1 +(1-r_j)L_1)\big).
\]
This delivers the desired statement.
\end{proof}


\subsection{Proof of Theorem~\ref{thm:minimax}}\label{proof:minimax}

\begin{proof}
Set $\Delta^n = \{(\theta^1, \dots, \theta^{n+1})^\top\in [0,\infty)^{n+1} \,; \ \sum_{i=1}^{n+1} \theta^i=1\}$. We first observe that
$$
\max_{i=1,\dots,n+1}
\;
\inf_{v\in\bbr}
\;
\Psi^i (\bm{x}, v)
\; = \;
\max_{\bm{\theta}\in \Delta^n }
\;
\inf_{v\in\bbr}
\;
\sum_{i=1}^{n+1} \theta ^i
\Psi^i (\bm{x}, v) .
$$
Now arguing as in the proof of Theorem 1 of \ci{ZhuFu09} one shows that there exists a nonempty, closed, bounded interval $\bcal$ (which depends on $\bm{x}$) such that
\begin{equation}
\label{eq: minimax auxiliary}
\max_{\bm{\theta}\in \Delta^n }
\;
\inf_{v\in\bbr}
\;
\sum_{i=1}^{n+1} \theta ^i
\Psi^i (\bm{x}, v)
\; = \;
\max_{\bm{\theta}\in \Delta^n }
\;
\min_{v\in\bcal}
\;
\sum_{i=1}^{n+1} \theta ^i
\Psi^i (\bm{x}, v)  .
\end{equation}
More precisely, note that for every $i=1,\dots,n+1$ the function $\Psi^i$ is convex in $v$. We know from the classical results by Rockafellar and Uryasev \ci{RU00} and Rockafellar and Uryasev \ci{RU02} that $\Psi^i$ attains its minimum in a (nonempty) closed bounded interval $[q^-_i,q^+_i]$ (where the extremes are given by the lower, respectively, upper $\alpha_i$ quantiles of $\sum_{k=1}^K x^k R^k-r_iZ$). At this point, the critical observation is that any convex combination of the functions $\Psi^1,\dots,\Psi^{n+1}$ also attains its minimum in a (nonempty) closed bounded interval which is included in the closed bounded interval
$$
\bcal = \left[\min_{i=1,\dots,n+1}\;q^-_i,\max_{i=1,\dots,n+1}\;q^+_i\right].
$$
This can be easily established by induction on $n$. It is key to note that $\cB$ does not depend on the particular convex combination chosen. This yields \eqref{eq: minimax auxiliary}. Now, observe that the function $(\bm{\theta}, v) \mapsto \sum_{i=1}^{n+1} \theta ^i
\Psi^i (\bm{x}, v)$ is linear in $\bm{\theta}$, convex in $v$, and continuous in both variables. Thus, by a classical minimax theorem, see, e.g., Theorem 1 in \ci{fan1953minimax}, it follows that
$$
\max_{\bm{\theta}\in \Delta^n }
\;
\min_{v\in\bcal}
\;
\sum_{i=1}^{n+1} \theta ^i
\Psi^i (\bm{x}, v)
\; = \;
\min_{v\in\bcal}
\;
\max_{\bm{\theta}\in \Delta^n }
\;
\sum_{i=1}^{n+1} \theta ^i
\Psi^i (\bm{x}, v)
 \; = \;
\min_{v\in\bbr}
\;
\max_{\bm{\theta}\in \Delta^n }
\;
\sum_{i=1}^{n+1} \theta ^i
\Psi^i (\bm{x}, v) .
$$
In the last step, the inequality $\geq$  follows from $\bcal \subseteq \bbr$ while the inequality $\leq$ is a consequence of the minimax inequality
$$
\max_{\bm{\theta}\in \Delta^n }
\;
\inf_{v\in\bbr}
\;
\sum_{i=1}^{n+1} \theta ^i
\Psi^i (\bm{x}, v)
 \; \leq \;
\inf_{v\in\bbr}
\;
\max_{\bm{\theta}\in \Delta^n }
\;
\sum_{i=1}^{n+1} \theta ^i
\Psi^i (\bm{x}, v) .$$
Finally, observe that
$$
\min_{v\in\bbr}
\;
\max_{\bm{\theta}\in \Delta^n }
\;
\sum_{i=1}^{n+1} \theta ^i
\Psi^i (\bm{x}, v)
 \; = \;
 \min_{v\in\bbr}
\;
\max_{i=1,\dots,n+1}
\;
\Psi^i (\bm{x}, v) .
$$
This delivers the desired minimax equality.
\end{proof}


\newpage

\section{Complementary Figures}
\label{sect: plots appendix}

\begin{figure}[h]
\vspace{-0.8cm}
\centering
\subfigure{
\includegraphics[width=0.4\textwidth]{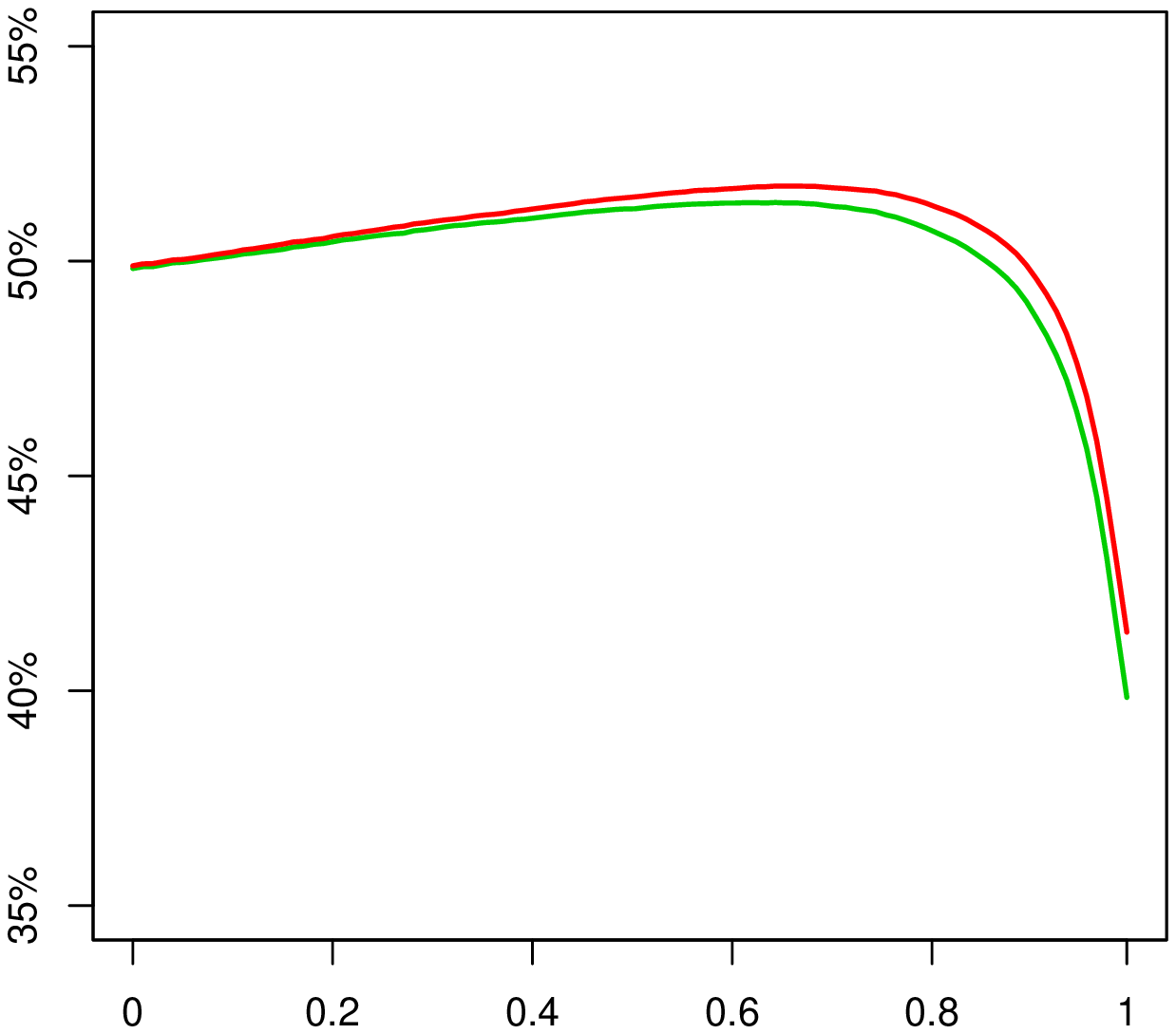}
}
\hspace{1cm}
\subfigure{
\includegraphics[width=0.4\textwidth]{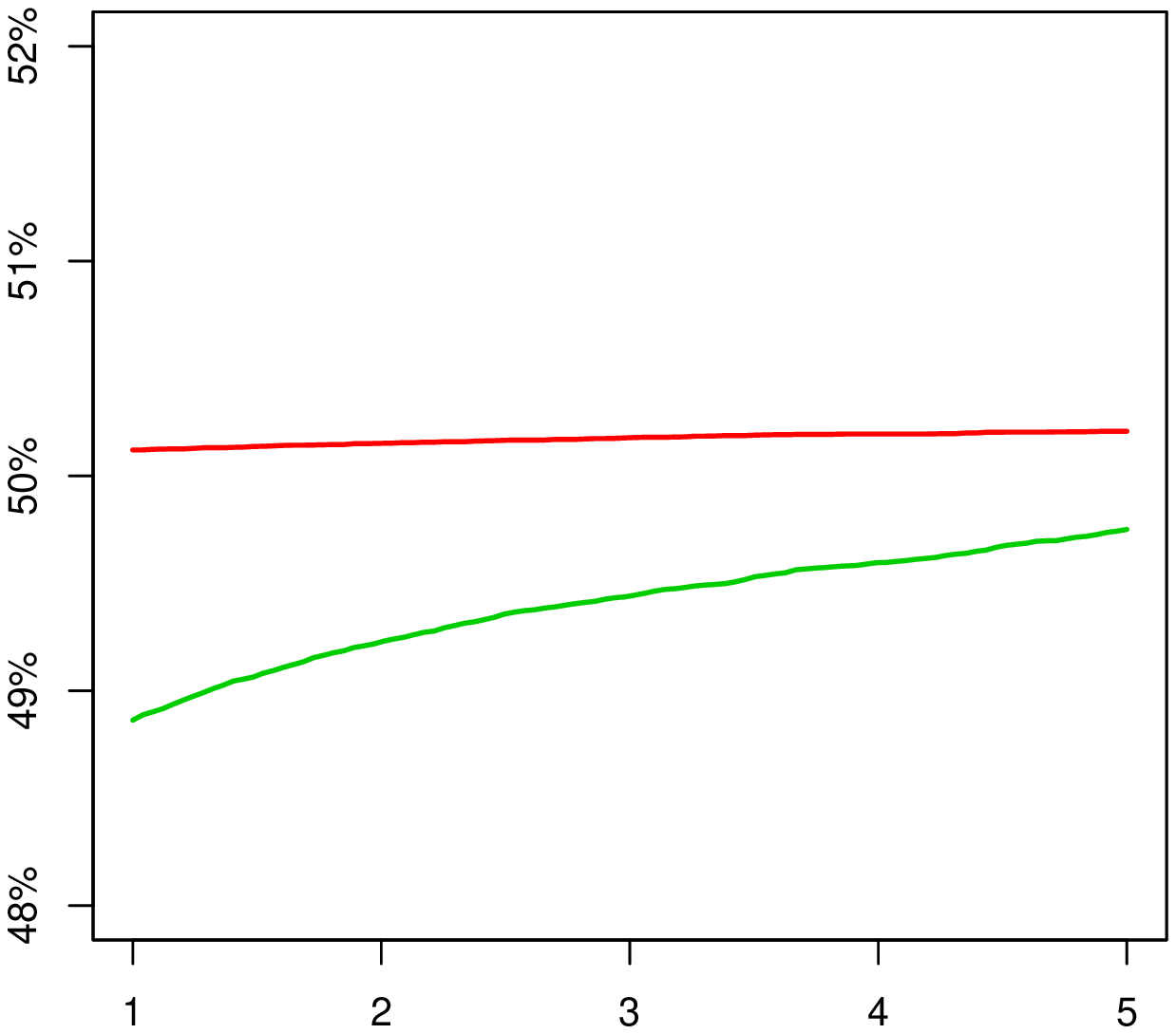}
}
\vspace{-0.5cm}
\caption{The loss probability $\probp(\Delta E_1<0)$ as a function of $\rho$ (left) for $\tau=1$ (green) and $\tau=5$ (red) and as a function of $\tau$ (right) for $\rho=0.1$ (red) and $\rho=0.9$ (green).}
\label{fig: default probabilities}
\end{figure}

\begin{figure}[h]
\vspace{-0.8cm}
\centering
\subfigure{
\includegraphics[width=0.4\textwidth]{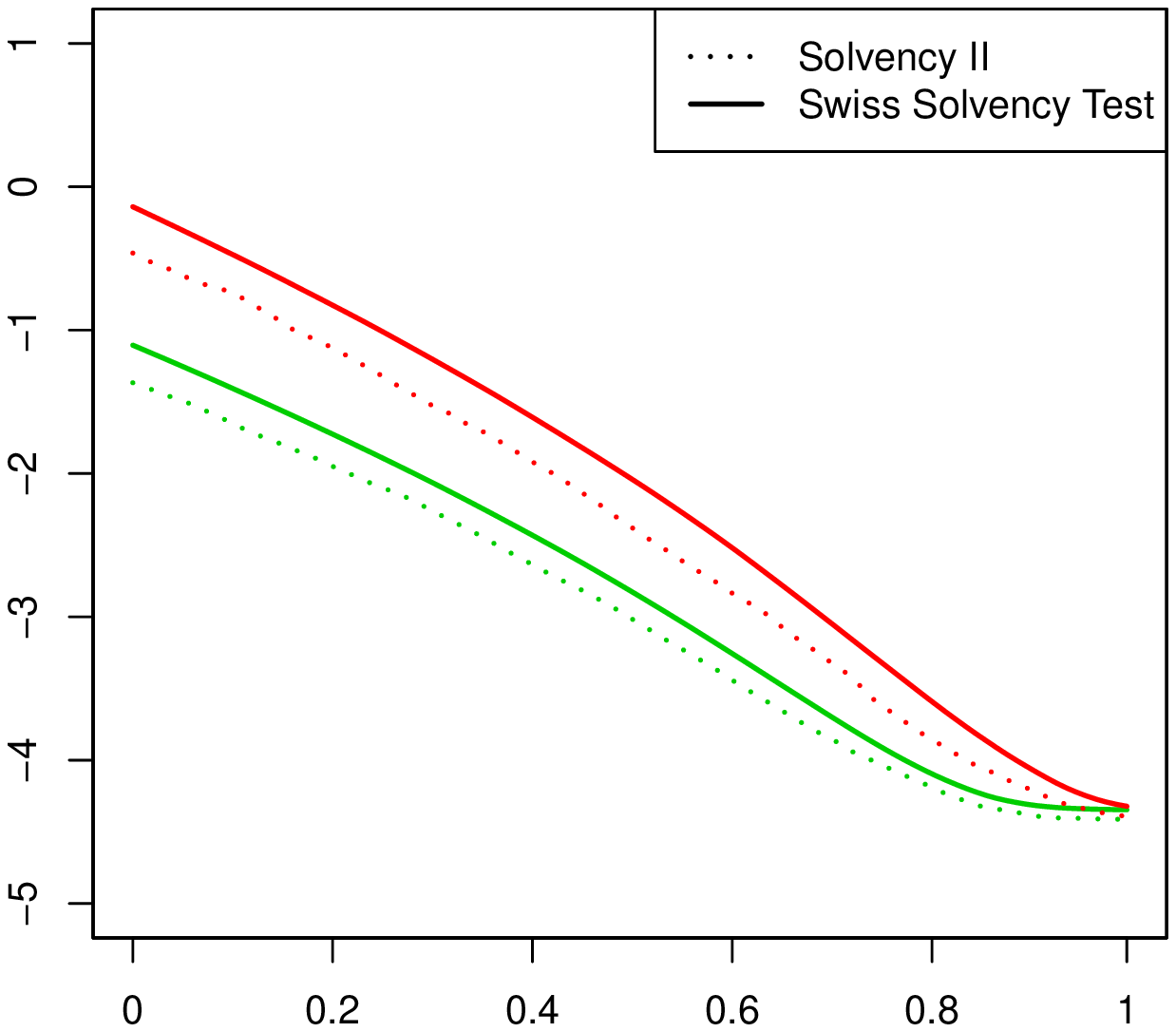}
}
\hspace{1cm}
\subfigure{
\includegraphics[width=0.4\textwidth]{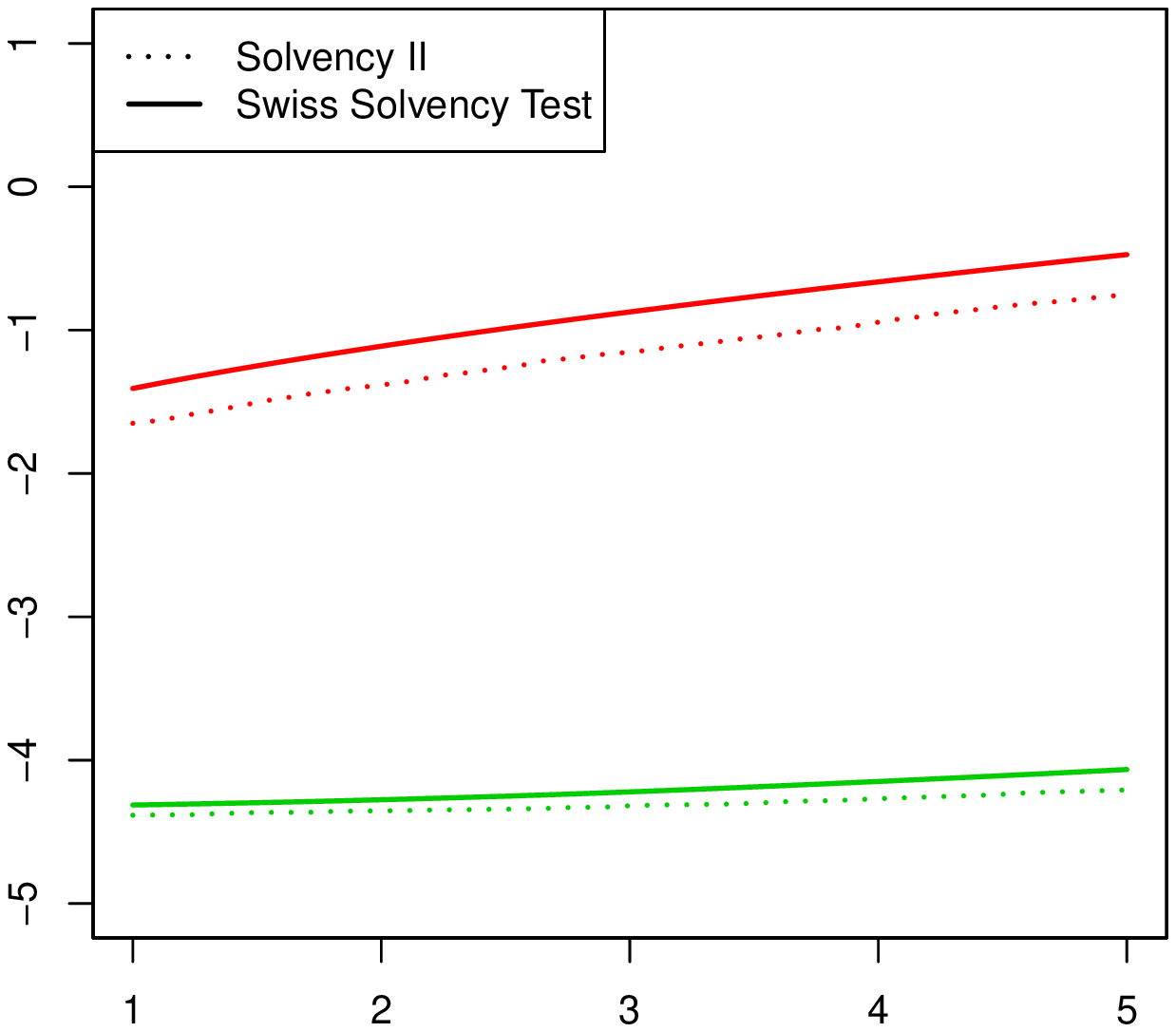}
}
\vspace{-0.5cm}
\caption{The regulatory risk measure $\rho_{reg}(E_1)$ as a function of $\rho$ (left) for $\tau=1$ (green) and $\tau=5$ (red) and as a function of $\tau$ (right) for $\rho=0.1$ (red) and $\rho=0.9$ (green).}
\label{fig: regulatory risk measures}
\end{figure}

\begin{figure}[h!]
\vspace{-0.8cm}
\centering
\subfigure{
\includegraphics[width=0.4\textwidth]{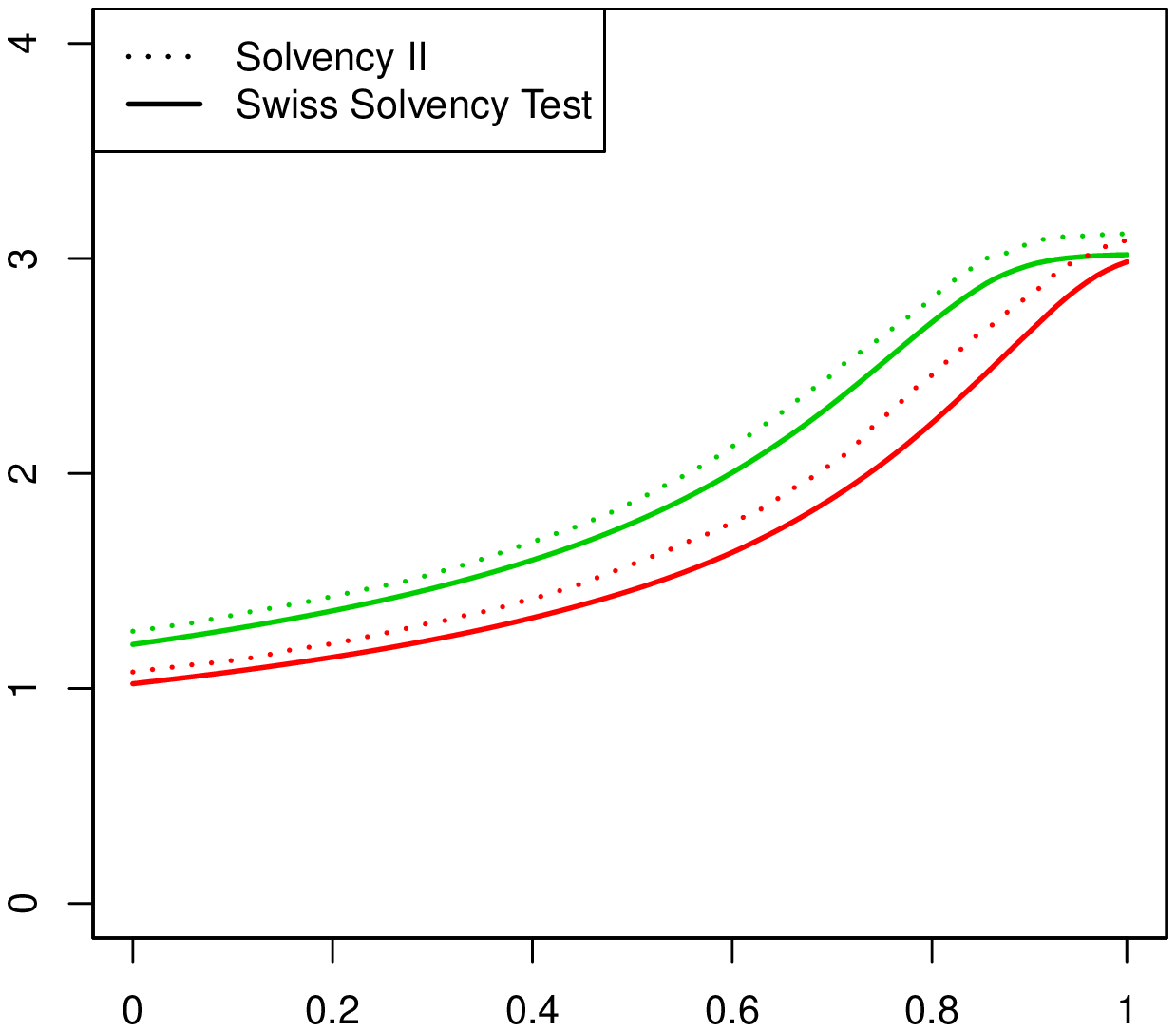}
}
\hspace{1cm}
\subfigure{
\includegraphics[width=0.4\textwidth]{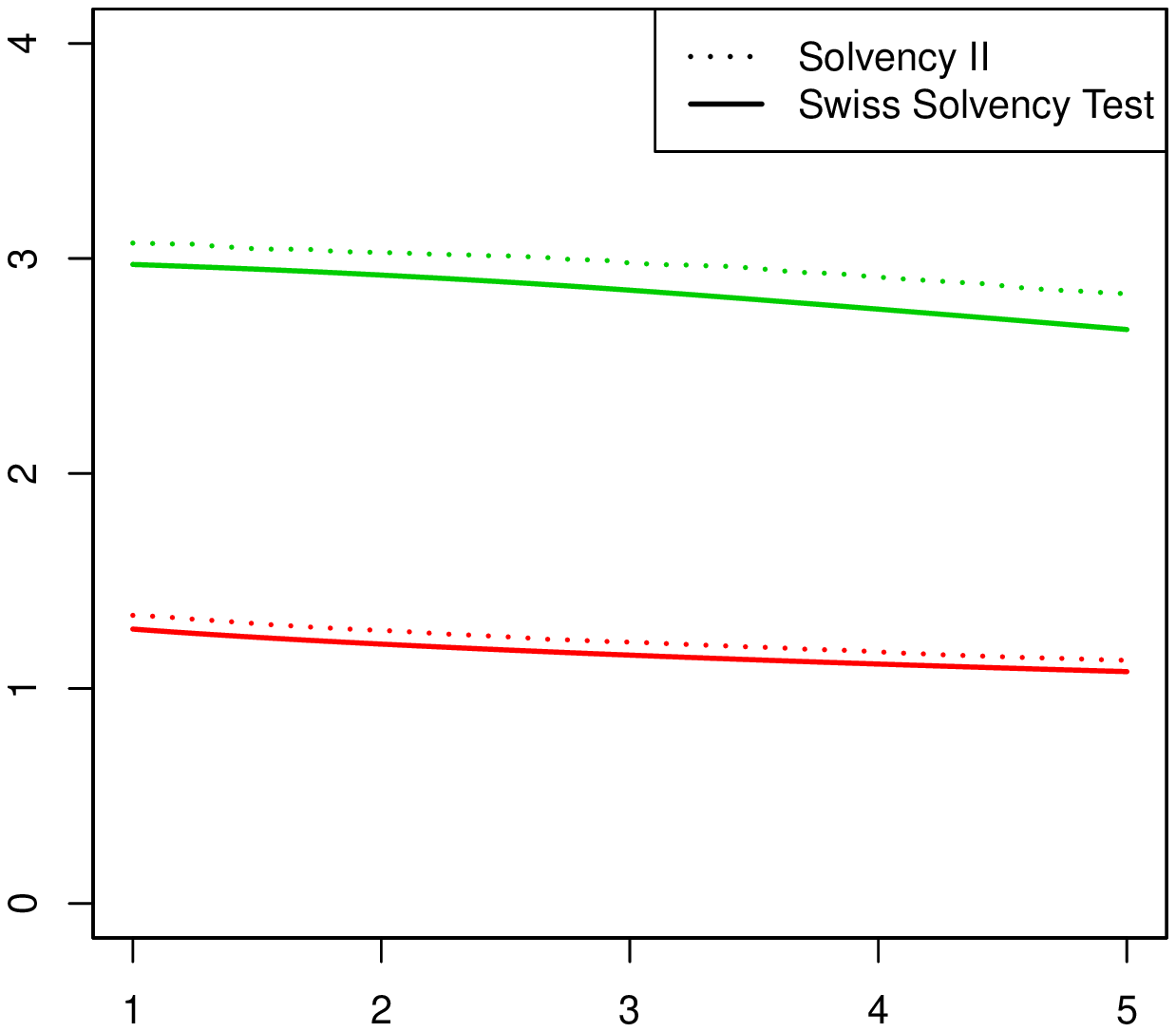}
}
\vspace{-0.5cm}
\caption{The solvency ratio $\frac{E_0}{\rho_{reg}(\Delta E_1)}$ as a function of $\rho$ (left) for $\tau=1$ (green) and $\tau=5$ (red) and as a function of $\tau$ (right) for $\rho=0.1$ (red) and $\rho=0.9$ (green).}
\label{fig: solvency ratio}
\end{figure}

\newpage

\begin{figure}[t]
\vspace{-0.8cm}
\centering
\subfigure{
\includegraphics[width=0.4\textwidth]{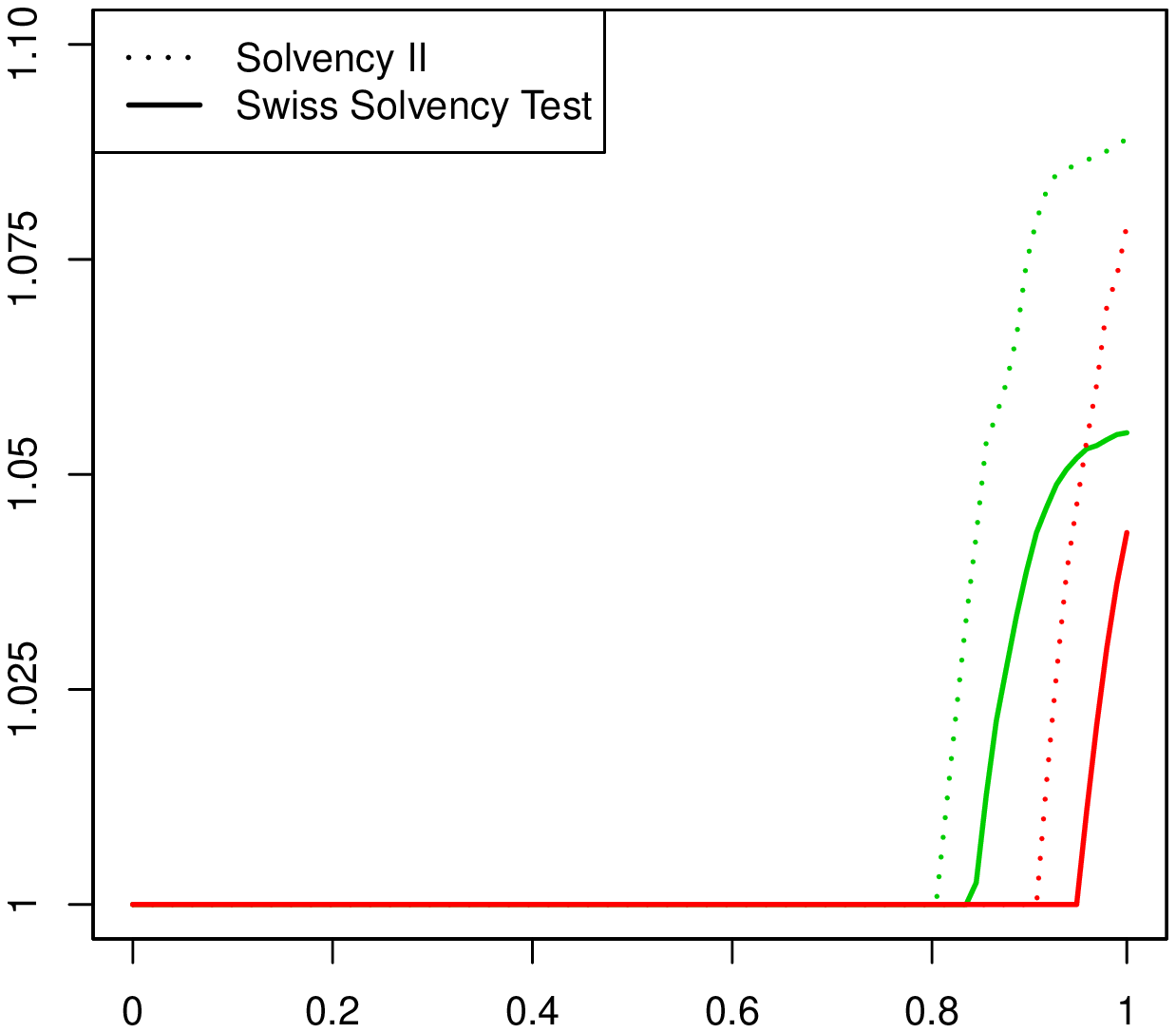}
}
\hspace{1cm}
\subfigure{
\includegraphics[width=0.4\textwidth]{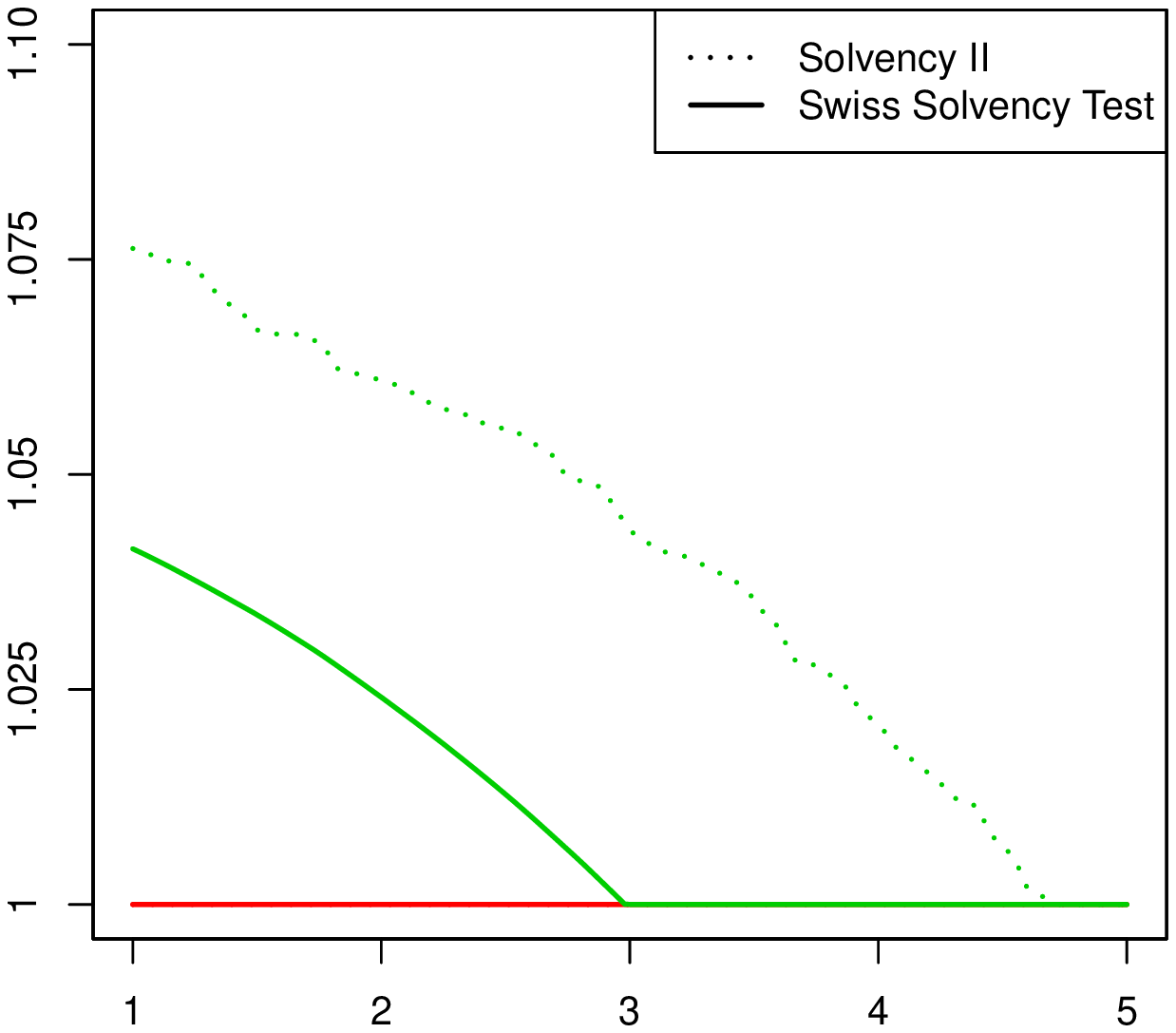}
}
\vspace{-0.5cm}
\caption{The recovery adjustment $\RecAdj(\beta,r)$ for $\beta=\beta_{max}=0.25\%$ and $r=r_{min}=50\%$ as a function of $\rho$ (left) for $\tau=1$ (green) and $\tau=5$ (red) and as a function of $\tau$ (right) for $\rho=0.1$ (red) and $\rho=0.9$ (green).}
\end{figure}

\begin{figure}[h]
\vspace{-0.8cm}
\centering
\subfigure{
\includegraphics[width=0.4\textwidth]{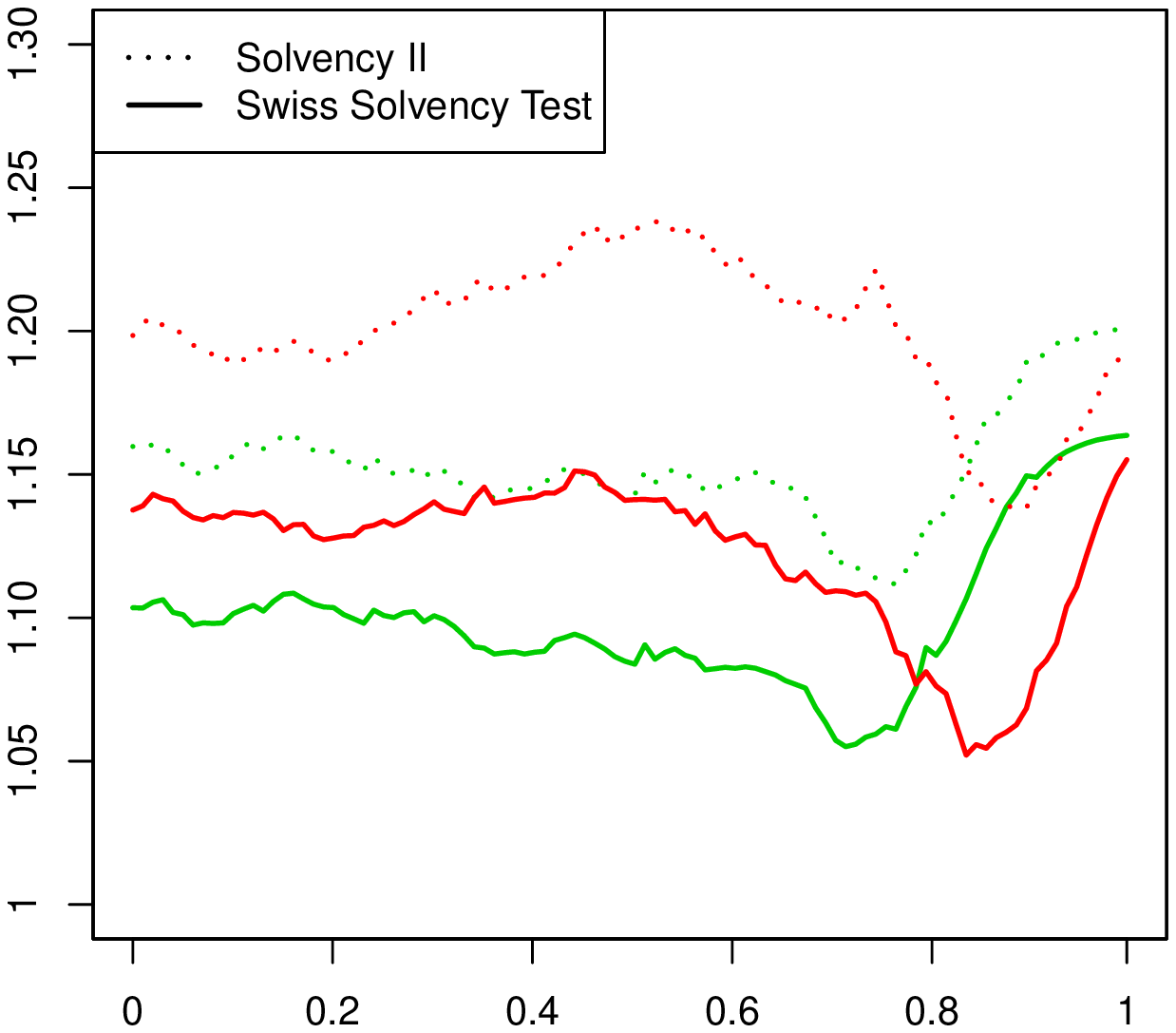}
}
\hspace{1cm}
\subfigure{
\includegraphics[width=0.4\textwidth]{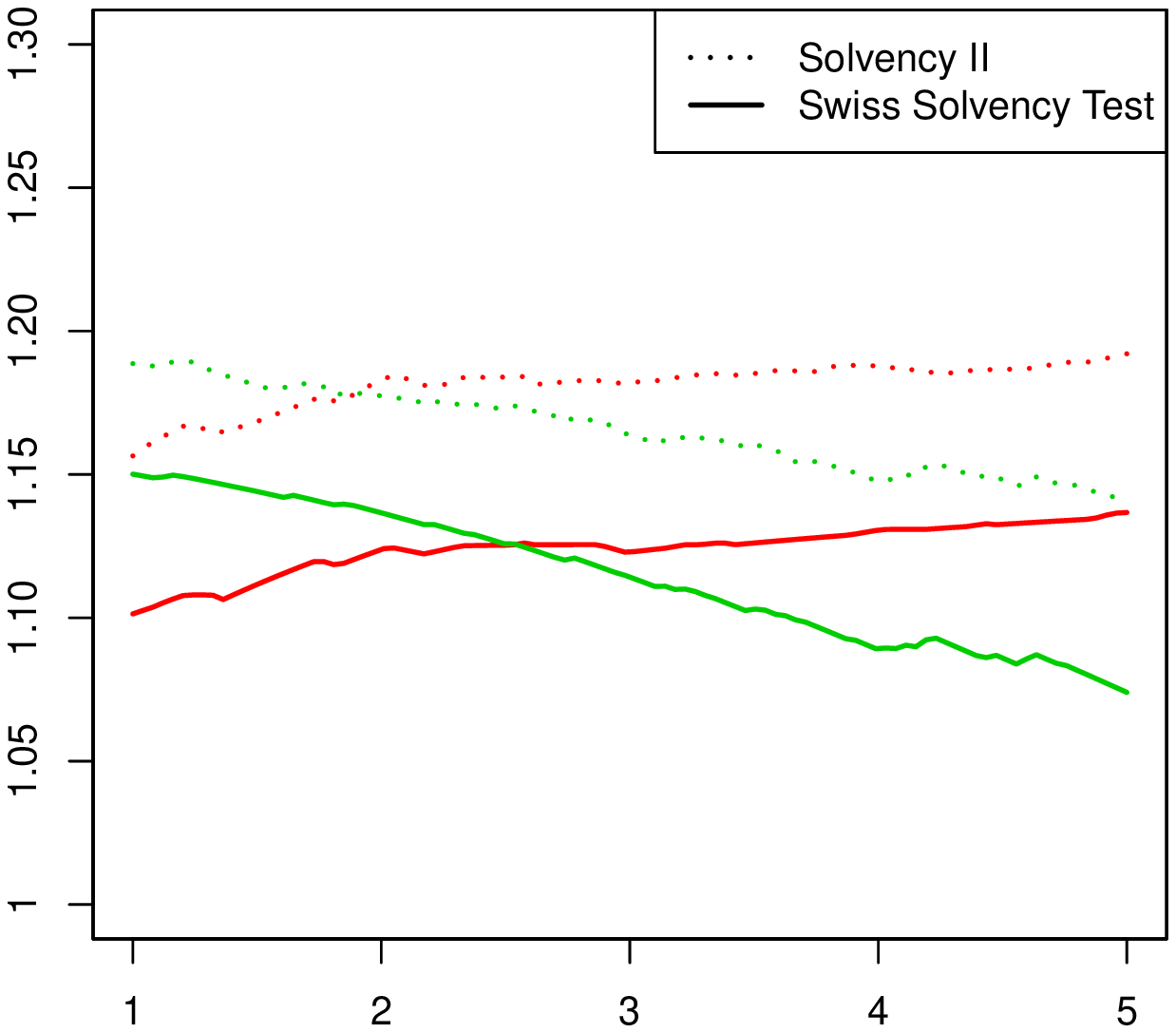}
}
\vspace{-0.5cm}
\caption{The recovery adjustment $\RecAdj(\beta,r)$ for $\beta=\beta_{min}=0.1\%$ and $r=r_{max}=90\%$ as a function of $\rho$ (left) for $\tau=1$ (green) and $\tau=5$ (red) and as a function of $\tau$ (right) for $\rho=0.1$ (red) and $\rho=0.9$ (green).}
\end{figure}


\vfill
\pagebreak

\bibliography{bibtex}
\bibliographystyle{jmr}

\end{document}